\documentclass[runningheads]{llncs}

\newif\ifTechnicalReport
\TechnicalReporttrue

\usepackage[UKenglish]{babel}
\usepackage[utf8]{inputenc}
\usepackage[T1]{fontenc}

\usepackage{graphicx}

\usepackage{amsmath, amssymb}
\usepackage{bm}
\usepackage{mathtools}
\mathtoolsset{centercolon} \usepackage{thm-restate}

\usepackage[ruled, linesnumbered, noend]{algorithm2e} 

\usepackage{xcolor}
\usepackage{tikz}
\usetikzlibrary{arrows.meta}

\usepackage{enumitem}

\usepackage{todonotes}

\usepackage[hypertexnames=false]{hyperref} \newcommand{\ourtitle}{Stratified Negation in RDF Rules:\texorpdfstring{\\}{ }A Correct Approach}

\ifTechnicalReport
\let\ouroriginaltitle\ourtitle
\renewcommand{\ourtitle}{\ouroriginaltitle~(Extended Version)\texorpdfstring{\thanks{This is the technical report accompanying our ISWC26 paper \cite{KIAK:chains}.}}{}}
\fi

\newcommand{\arxivLink}{https://arxiv.org/XXX}
\newcommand{\gitlabLink}{https://gitlab.com/nkuechen/chaining-reliances/-/tree/iswc26}

\hypersetup{hidelinks,pdftitle={\ourtitle},pdfauthor={N. Küchenmeister, A. Ivliev, D. Arndt, M. Krötzsch}} 

\newcommand{\opfont}[1]{\text{{\sf #1}}} \newcommand{\tuple}[1]{\langle{#1}\rangle}
\renewcommand{\vec}[1]{\boldsymbol{#1}}

\def\set[#1]{\ensuremath{\{1,\ldots,#1\}}} 
 \newcommand{\id}{\text{id}} 
\newcommand{\injTo}{\to} \newcommand{\ruleTo}{\to} 
\newcommand{\Pred}{\ensuremath{\bm{P}}} \newcommand{\ar}{\opfont{ar}} 

\newcommand{\V}{\ensuremath{\bm{V}}} \newcommand{\C}{\ensuremath{\bm{C}}} \newcommand{\N}{\ensuremath{\bm{N}}} 
\def\var(#1){\ensuremath{\opfont{var}(#1)}} \def\varA(#1){\ensuremath{\opfont{var}_\forall(#1)}} \def\varE(#1){\ensuremath{\opfont{var}_\exists(#1)}} 
\def\bodyP(#1){\ensuremath{\opfont{body}^+(#1)}} \def\bodyN(#1){\ensuremath{\opfont{body}^-(#1)}} \def\head(#1){\ensuremath{\opfont{head}(#1)}} 
\def\pieces(#1){\ensuremath{\opfont{pieces}(#1)}} 
\newcommand{\I}{\ensuremath{\mathcal{I}}} \newcommand{\R}{\ensuremath{\mathcal{R}}} \newcommand{\KB}{\ensuremath{\langle\I,\R\rangle}} 
\newcommand{\RpD}{\ensuremath{\R_D}} 
\def\matches(#1){\ensuremath{\opfont{matches}_\R(#1)}} \def\unsat(#1){\ensuremath{\opfont{unsat}_\R(#1)}} 
\def\symbols(#1){\ensuremath{\opfont{symbols}(#1)}} 
\def\cDB(#1,#2){\ensuremath{\opfont{db}_{#1}(#2)}} \def\cRule(#1,#2){\ensuremath{\opfont{rule}_{#1}(#2)}} \def\cHom(#1,#2){\ensuremath{\opfont{hom}_{#1}(#2)}} \def\cMatch(#1,#2){\ensuremath{\opfont{match}_{#1}(#2)}} \def\cSteps(#1){\ensuremath{\opfont{steps}_{#1}}} 
\def\chase{\ensuremath{\opfont{chase}}} 
\def\ins{\ensuremath{\hat{\rho}}}

\def\stale(#1){\ensuremath{\opfont{stale}(#1)}}
\def\decoupled(#1){\ensuremath{\opfont{decoupled}(#1)}} 
\newcommand{\posr}{\ensuremath{\prec^+}} \newcommand{\restr}{\ensuremath{\prec^\square}} \newcommand{\negr}{\ensuremath{\prec^-}} 
\newcommand{\pprec}{\ensuremath{\prec\!\!\!\prec}}
\newcommand{\iposr}{\ensuremath{\pprec^+}} 
\newcommand{\posrT}{\ensuremath{\prec_t^+}}
\newcommand{\posrC}{\ensuremath{\prec_c^+}}

\newcommand{\restrC}{\ensuremath{\prec_c^\square}} \newcommand{\negrC}{\ensuremath{\prec_c^-}} 
\newcommand{\restrCD}{\ensuremath{\prec^{\square D}_c}} \newcommand{\negrCD}{\ensuremath{\prec^{- D}_c}} 

\newcommand{\rtHull}[1]{\ensuremath{#1^{\bm{\star}}}} 
\def\variants(#1){\ensuremath{\opfont{variants}(#1)}} \def\instances(#1){\ensuremath{\opfont{instances}(#1)}} 
\def\subformulae(#1){\ensuremath{\opfont{parts}(#1)}} 
\newcommand{\orig}{\opfont{orig}}  
\def\chains(#1){\ensuremath{\opfont{chains}(#1)}} 
\newcommand{\Alphabet}{\ensuremath{\Sigma}} \newcommand{\labAname}{\opfont{C}} \newcommand{\labBname}{\opfont{S}} \newcommand{\labCname}{\opfont{N}} \newcommand{\labA}{\ell^{\labAname}} \newcommand{\labB}{\ell^{\labBname}} \newcommand{\labC}{\ell^{\labCname}} 
\def\L(#1){\ensuremath{\mathcal{L}_{#1}}} 
\def\pre{\ensuremath{\opfont{pre}}} \def\rep(#1){\ensuremath{\opfont{rep}(#1)}} 
\def\fresh{\ensuremath{\opfont{fresh}}} \def\truncation(#1){\ensuremath{\opfont{truncation}(#1)}}  \newcommand{\myparagraph}[1]{\vspace{1ex plus 0.5ex minus 0.5ex}\noindent\textbf{#1}~~}
\newcommand{\alglineref}[1]{L\ref{#1}} \newcommand{\mintlabel}[1]{\phantomsection\label{#1}} 

\makeatletter
\let\c@lemma\c@theorem
\let\c@corollary\c@theorem
\makeatother

\newcommand{\cont}[1]{cont.}
\newcommand{\contof}[1]{cont.\ of Ex.~\ref{#1}} 

\usepackage{anyfontsize}
\usepackage{minted}
\setminted{escapeinside=||,numbersep=6pt}

\definecolor{RuleBgColor}{RGB}{240,240,240}
\newenvironment{nthreelisting}{\VerbatimEnvironment\begin{minted}[fontsize=\fontsize{7.5}{9}\selectfont, bgcolor=RuleBgColor, linenos, firstnumber=last, numbersep=3pt, xleftmargin=9pt]{sparql}}
    {\end{minted}}
\newmintinline[nthreeinline]{sparql}{}

\usepackage{orcidlink}
\renewcommand{\orcidID}[1]{\,\orcidlink{#1}}

\begin{document}

\title{\ourtitle}

\author{
    Nils Küchenmeister\inst{1}\orcidID{0009-0004-0376-0328}\and Alex Ivliev\inst{1}\orcidID{0000-0002-1604-6308}\and Dörthe Arndt\inst{2}\orcidID{0000-0002-7401-8487}\and Markus~Krötzsch\inst{1}\orcidID{0000-0002-9172-2601}}

\authorrunning{N. Küchenmeister \and A. Ivliev \and D. Arndt \and M. Krötzsch}

\institute{
    Knowledge-Based Systems Group, TU Dresden, Dresden, Germany \and
    Computational Logic Group, TU Dresden / ScaDS.AI, Dresden/Leipzig, Germany
\email{firstname.lastname@tu-dresden.de}
}

\maketitle

\begin{abstract}
Combining RDF rule languages, such as N3 or SHACL Rules, with default negation is challenging.
Existing methods to stratify negation often fail for RDF rules, since individual triples do not
carry enough information to meaningfully restrict potential dependencies.
Blank nodes in rule heads further complicate the matter,
since the order of rule applications may determine whether new values are created,
which in turn can change the applicability of rules with negation.
To solve these open problems, we propose \emph{chain stratification} as a robust new condition that guarantees a
well-behaved semantics for RDF rules with negation,
and existential rules in general.
Our condition combines an elaborate analysis of potential multistep derivations with a
mechanism for using integrity constraints to discard impossible cases.
Applying rules in any order that respects chain stratification is guaranteed to
derive an RDF graph that is unique, lean, and justified under the usual negation-as-failure semantics.
To show the practicality, we also provide a prototype implementation.
\keywords{rules languages \and N3 \and SHACL \and Datalog \and full stratification}
\end{abstract}
 
\section{Introduction}\label{sec:intro}

The usefulness of rule languages for semantic web query answering, knowledge graph analysis, and ontological
reasoning has long been recognised, and various rule engines now support RDF and SPARQL \cite{N+15:RDFoxToolPaper,BSG:Vadalog18,Ivliev+:Nemo2024}.
Attempts to establish corresponding web standards have been less successful \cite{SWRL,RIF-overview},
but ongoing works on N3 \cite{n3-community-draft} and SHACL Rules \cite{shacl12-rules} may yet change this.
Even without standardised syntax, today's rule languages share a common semantic
basis --- the classical language Datalog \cite{Kroetzsch2025:Datalog} --- which is intuitive and easy to implement,
typically simply by applying rules until nothing new follows.

Some useful features introduce complications though. A first case is that of \emph{non-monotonic} operations, such as
\emph{negation as failure} and \emph{aggregation}, the results of which may become invalid in the light of new data. A standard solution to
this problem is to \emph{stratify} (i.e., ``order'') computation, so that non-monotonic operations are only applied
to data that has been fully computed \cite{Kroetzsch2025:Datalog}. This precludes feedback cycles from non-monotonic outputs to their
own inputs.
A second challenge is \emph{value invention} where rules may lead to the creation of new elements, such that an infinite
amount of new statements can be derived. In RDF, this is characteristic of rules with \emph{blank nodes} (aka \emph{bnodes}) in their conclusion;
in databases and logic, existential quantifiers are used instead \cite{BLMS11:decline}.
A first mitigation is to create new bnodes only if the rule is not already satisfied by another element in place of the bnode,
a method known as \emph{restricted} (or \emph{standard}) \emph{chase} \cite{DNR08:corechase}.
Non-termination may still occur, and analysing rules to preclude (potential) feedback cycles can help
detect such problems \cite{BLMS11:decline,CG+13:acyclicity}.

The working drafts for N3 \cite{n3-community-draft} and SHACL Rules \cite{shacl12-rules} combine 
non-monotonic features and value invention with a triple-based data model. Unfortunately, this leads to two
major problems: (1) existing stratification and termination analyses fail in many cases,
and (2) even if stratification succeeds, the rules may not have a well-defined semantics.
Solving these will be our main contribution, but it is useful to first understand each problem in more detail.

\subsubsection{Problem 1: Stratification fails with RDF rules.}
Traditional stratification is based on analysing dependencies between different \emph{predicates} used in the rules.
RDF does not have predicates, or, equivalently, has only a single predicate \emph{triple} of arity three,\footnote{Such ternary predicates are also how relational rule engines import RDF \cite{Ivliev+:Nemo2024}.
Treating property names as binary predicates to represent RDF rules in Datalog
would prevent the use of variables in predicate positions, as required for the RDFS rules.
}
which means that any use of negation prevents stratification.
SHACL proposes a refined unification-based dependency analysis \cite[Section~3.4 ``Stratification'']{shacl12-rules}, which 
helps when triples contain constants, as these N3 rules:\footnote{We use prefixes \texttt{rdf:}, \texttt{rdfs:}, and \texttt{log:} for the standard namespaces \cite{rdf11-concepts,n3-community-draft}.}
\begin{nthreelisting}
{?c :participant ?p} => {?p rdf:type :Student}. |\mintlabel{rule_ex_intro_student}|
{?c :teacher ?p . [] log:notIncludes {?p rdf:type :Student}} => {?c :examiner ?p}.  |\mintlabel{rule_ex_intro_negation}|
\end{nthreelisting}
The rule in \alglineref{rule_ex_intro_student} states that every participant \nthreeinline{?p} of some course \nthreeinline{?c} is a student.
The rule in \alglineref{rule_ex_intro_negation} means that course teachers who are not students are also examiners for the course,
where \nthreeinline{[] log:notIncludes {|$\cdots$|}} is N3 syntax for negation.
SHACL's stratification method orders \alglineref{rule_ex_intro_student} before \alglineref{rule_ex_intro_negation}, so that
the negative precondition is only evaluated after inferring who is a student.
Unfortunately, the approach breaks when adding further rules. For example, the following rules \emph{rdfs5} and \emph{rdfs7}
are a standard way for handling subproperties in RDFS \cite{rdf11-semantics}:\begin{nthreelisting}
{?p rdfs:subPropertyOf ?q . ?q rdfs:subPropertyOf ?r} => {?p rdfs:subPropertyOf ?r}.  |\mintlabel{rule_ex_intro_rdfs5}|
{?p rdfs:subPropertyOf ?q . ?x ?p ?y} => {?x ?q ?y}.   |\mintlabel{rule_ex_intro_rdfs7}|
\end{nthreelisting}
The all-variable triples in rule \emph{rdfs7} (\alglineref{rule_ex_intro_rdfs7}) create dependencies from and to any other rule, making stratification
impossible.
Indeed, using all the rules \alglineref{rule_ex_intro_student}--\alglineref{rule_ex_intro_rdfs7}, there really are negative feedback cycles
on valid input data, e.g., with the triple\\ \nthreeinline{:examiner rdfs:subPropertyOf :participant.} In this case, a student might be discovered only
after applying rule \alglineref{rule_ex_intro_negation}. Similar issues would arise if \nthreeinline{:examiner} were a subproperty of
\nthreeinline{rdfs:subPropertyOf} or \nthreeinline{rdf:type}.
These cases are unintended and would likely indicate errors, which N3 can
express using \emph{constraints}:\begin{nthreelisting}
{:examiner rdfs:subPropertyOf :participant} => false.  |\mintlabel{rule_constraint_teacher_participant}|
{:examiner rdfs:subPropertyOf rdfs:subPropertyOf} => false. |\mintlabel{rule_constraint_teacher_spo}|
{:examiner rdfs:subPropertyOf rdf:type} => false. |\mintlabel{rule_constraint_teacher_type}|
\end{nthreelisting}
Negative feedback cycles are not possible without violating some of these constraints (which N3 tools would flag as an error),
so the use of negation is safe now. Unfortunately, current stratification methods cannot detect this.

\subsubsection{Problem 2: Stratified RDF rules lack semantics.}
Bnodes require the existence of suitable elements, which may or may not need to be created.
The next rule, e.g., states that humans must have some biological father who is male:
\begin{nthreelisting}
{?x rdf:type :Human} => {?x :father _:1 . _:1 rdf:type :Man}. |\mintlabel{rule_father}|
\end{nthreelisting}
Given (a) \nthreeinline{:jo rdf:type :Human}, applying \alglineref{rule_father} creates a new value
for \nthreeinline{_:1} (a fresh bnode in RDF, a \emph{labelled null} in databases),
but this should not happen if we already have (b) \nthreeinline{:jo :father :bob} and (c) \nthreeinline{:bob rdf:type :Man}.
However, it is not always so easy to tell if a fresh bnode is necessary, and the decision 
can critically affect the result, as can be shown with some further rules \cite[Ex.~6]{KR20:cores}:
\begin{nthreelisting}
{?x :father ?y} => {?y rdf:type :Man}. |\mintlabel{rule_father_male}|
{?x :father ?y} => {?y :eq ?y}. |\mintlabel{rule_father_selfequal}|
{?x :father ?y1. ?x :father ?y2. [] log:notIncludes {?y1 :eq ?y2}} => {?y1 :nef ?y2}. |\mintlabel{rule_different_fathers}|
\end{nthreelisting}
\alglineref{rule_father_male} defines a range for \nthreeinline{:father}.
\alglineref{rule_father_selfequal} derives a simple equality, which is negated in \alglineref{rule_different_fathers}
to check for non-equal fathers (\nthreeinline{:nef}).
Given triples (a) and (b), we could apply \alglineref{rule_father} to create a fresh bnode \nthreeinline{_:a}
(\nthreeinline{:bob} does not satisfy the conclusion at this point).
Rules \alglineref{rule_father_male} and \alglineref{rule_father_selfequal} would then produce (c), 
\nthreeinline{:bob :eq :bob} and \nthreeinline{_:a :eq _:a}, and \alglineref{rule_different_fathers}
would yield \nthreeinline{:bob :nef _:a} and \nthreeinline{_:a :nef :bob}.
In contrast, if we apply \alglineref{rule_father_male} first, then \alglineref{rule_father} will not apply,
and no \nthreeinline{:nef}-triple follows.

The issue is serious: classically stratified RDF rule sets do not have a well-defined semantics.
Indeed, \alglineref{rule_father}--\alglineref{rule_different_fathers} are stratified (by SHACL's method),
but stratification allows both of the above rule precedences.
Other non-monotonic features, such as inequality or aggregation, can also expose the problem.

We might hope to avoid this issue by interleaving rule applications with additional algorithms to find and delete redundant bnodes,
creating a \emph{lean} RDF graph \cite{Hogan:LeanRDF:Tweb17} (or \emph{core} \cite{DNR08:corechase}).
But even this (costly) workaround does not solve the problem: we can just apply \alglineref{rule_father_male} last.
As another solution, Krötzsch proposed a new type of dependency for rules with bnodes (existential variables) in their conclusion \cite{KR20:cores}.
His method can find a rule application order that leads to a lean graph and a well-defined result,
but it also fails with rules like rdfs7 (\alglineref{rule_ex_intro_rdfs7}).

\subsubsection{Our proposed solution.}
We define a precedence relation on rules that can guide the order of rule applications
so that negation-as-failure is justified and the inferred RDF graph is unique and lean (if the input data was).
We start from Krötzsch's \emph{reliances} \cite{KR20:cores}, recalled in
the preliminaries (Section~\ref{sec:prelim}), which define several types of binary relations between rules
to estimate potential interactions.

We generalise this approach based on two observations:
(1) the transitive composition of reliances significantly overestimates which (problematic) interactions can occur in real sequences of rule applications,
and
(2) constraints and plain Datalog rules can effectively rule out problematic cases that would otherwise be allowed in RDF.
The difficulty is to turn these into an implementable criterion, in particular (1), since there are infinitely many 
potential sequences of rules. In Section~\ref{sec:trails}, we introduce \emph{trails} as a data-independent abstraction 
of such sequences, and show that it gives rise to a rule precedence with desirable properties.
Since recognising trails is undecidable, we define \emph{chains} as a decidable relaxation (Section~\ref{sec:chains}),
which allows us to introduce \emph{chain stratification} and to show that it meets our requirements (Section~\ref{sec:chain-strat}).

Deciding this new form of stratification remains challenging, since chains can be arbitrarily long.
By relating chains to words in a regular language, we finally obtain a decision procedure for chain stratification
(Section~\ref{sec:reg-lang}).
\ifTechnicalReport
To show its practical feasibility, Section~\ref{sec:impl} presents a prototype implementation that builds upon algorithmic methods
for reliances \cite{GIKM2022}, and an evaluation that shows applicability to rule sets with thousands of rules.
Additional details and proofs can be found in the appendix, and our source code is included in the supplementary material statement at the end of this paper.
\else
To show its practical feasibility, we provide a prototype implementation that builds upon algorithmic methods for reliances \cite{GIKM2022}.
Additional details and proofs can be found in the appendix\footnote{see extended version of this paper at \url{\arxivLink}},
and our source code is included in the supplementary material statement at the end of this paper.
\fi

 \section{Preliminaries}\label{sec:prelim}

We briefly define rules and recall required notions about reliances \cite{KR20:cores}.
We use first-order notation with predicates (not just \emph{triple}) --- our results
apply to N3 (under common syntax translations \cite{existentialN3}) but also to other rule
languages \cite{RIF-overview,Ivliev+:Nemo2024,BSG:Vadalog18}.

\myparagraph{Rules}
We use countably infinite, mutually disjoint sets $\Pred$ (\emph{predicates}), 
$\V$ (\emph{variables}), $\C$ (\emph{constants}), and $\N$ (\emph{nulls}).
Each predicate $p\in\Pred$ has an arity $\ar(p)\geq 0$.
An \emph{atom} is an expression $p(\vec{t})$ with $p\in\Pred$ and $\vec{t}\in(\V\cup\C\cup\N)^{\ar(p)}$ a list.
A \emph{fact} is a variable-free atom, and a \emph{database} is a set $\I$ of facts.
Negation is denoted $\neg$. We often treat lists and conjunctions of (negated) atoms as sets.
For any expression or set of expressions $E$, $\var(E)$ is the set of variables in $E$.

A \emph{rule} $\rho$ is a null-free expression $\rho: \bodyP(\rho)\land \bodyN(\rho) \ruleTo \exists\vec{z}.\,\head(\rho)$,
where $\vec{z}$ is a variable list,
$\bodyP(\rho)$ and $\head(\rho)$ are conjunctions of atoms, and $\bodyN(\rho)$ is a conjunction of negated atoms.
Variables $\varE(\rho)=\vec{z}$ are \emph{existential}, all others $\varA(\rho)=\var(\rho)\setminus\vec{z}$
are \emph{universal} (as usual, we omit their quantifiers).
Universal variables in $\head(\rho)$ are \emph{frontier variables}.
All universal variables in $\head(\rho)$ and $\bodyN(\rho)$ must occur in $\bodyP(\rho)$ (\emph{safety}).
A rule $\rho$ is \emph{Datalog} if $\bodyN(\rho)=\varnothing$ and $\varE(\rho)=\varnothing$.
A \emph{ruleset} $\R$ is a finite set of rules; then $\RpD\subseteq\R$ is the subset of Datalog rules. 
Constraints as in \alglineref{rule_constraint_teacher_participant}--\alglineref{rule_constraint_teacher_type} are
encoded in Datalog using a nullary head predicate $\bot$. 
We will discard cases where $\bot$ would be derived in our analyses, but not otherwise endow it with special logical semantics.

\myparagraph{Pieces}
A \emph{piece} $\psi$ of $\head(\rho)$ is a minimal non-empty set $\psi\subseteq\head(\rho)$
such that $\varE(\psi)$ is disjoint from $\varE(\head(\rho)\setminus\psi)$.
Let $\pieces(\rho) = \{ \exists \varE(\psi).\psi \mid \psi\text{ a piece of }\head(\rho)\}$, and
$\pieces(\R) = \bigcup_{\rho\in\R}\pieces(\rho)$.
$\R$ is \emph{piece-decomposed} if each $\rho\in\R$ has just one piece.
\begin{example}
    The rule $a(x) \ruleTo \exists v,w.\,b(x),r(x,v),b(v),s(x,w)$ has three pieces: $b(x)$, $\exists v.\,r(x,v),b(v)$ and $\exists w.\,s(x,w)$.
\end{example}

\myparagraph{Matches and models}
A \emph{substitution} is a partial mapping $\sigma:\V\to(\V\cup\C\cup\N)$.
For $t\in\V\cup\C\cup\N$, we set $t\sigma=\sigma(t)$ if $\sigma(t)$ is defined, and $t\sigma=t$ otherwise.
Substitutions extend to (sets of) logical expressions as usual.
Rule $\rho$ \emph{matches} database $\I$ if there is a substitution $\mu_\forall$ defined on $\varA(\rho)$
such that $\bodyP(\rho)\mu_\forall\subseteq\I$ and $\bodyN(\rho)\mu_\forall\cap\I=\varnothing$.
We denote matches as pairs $\tuple{\rho,\mu_\forall}$.
Match $\tuple{\rho,\mu_\forall}$ is \emph{satisfied} by $\I$ if there is a substitution $\mu_\exists$ defined on $\varE(\rho)$
such that $\head(\rho)\mu_\forall\mu_\exists\subseteq\I$.
$\I$ is a \emph{model} of $\rho$ (or \emph{satisfies} $\rho$, written $\I\models\rho$) if it
satisfies all matches of $\rho$ over $\I$.
$\I$ \emph{models} a ruleset $\R$ if $\I\models\rho$ for all $\rho\in\R$, and $\I$ \emph{models} a null-free database $\mathcal{J}$ if
$\mathcal{J}\subseteq\I$.

\myparagraph{The Chase}
Algorithms that construct models for a given database and ruleset are called \emph{chase procedures}.
We use the following \emph{standard chase}:
\begin{definition}\label{def:chase}
    A \emph{chase} $C$ for a null-free database $\I$ and ruleset $\R$ 
    is a (possibly infinite) sequence of databases $\cDB(C, 0), \cDB(C, 1), \ldots$ such that 
    $\cDB(C, 0) = \I$ and:
    \begin{enumerate}
\item[(1)]\label{it_chase_apply} for each $i>0$, there is $\cRule(C,i)\in\R$ and $\cMatch(C,i)=\tuple{\cRule(C,i),\mu_\forall}$, such that 
            (a) $\cMatch(C,i)$ is unsatisfied in $\cDB(C,i-1)$, and
            (b) $\cDB(C,i)=\cDB(C,i-1)\cup\{ \head(\cRule(C,i))\mu_\forall\mu_\exists \}$ for an injective function
            $\mu_\exists: \varE(\cRule(C,i))\injTo\N$ that maps variables to distinct fresh nulls (not occurring in $\cDB(C,i-1)$);
        \item[(2)] if $\tuple{\rho,\mu}$ is a match over $\cDB(C,i)$ for some $i\geq 0$, then there is $j\geq i$ 
            such that $\tuple{\rho,\mu}$ is satisfied in $\cDB(C,j)$ (fairness).
    \end{enumerate} 
    The set of \emph{chase steps} is $\cSteps(C) \subseteq \mathbb{N}$.
The result of a chase is $\chase_C(\I,\R) = \bigcup_{i\in\cSteps(C)} \cDB(C,i)$.
    $C$ is \emph{generating} if every $\cMatch(C,i)$ is a match in $\chase_C(\I,\R)$.
\end{definition}
In situations like \eqref{it_chase_apply}, we say \emph{$\cDB(C,i)$ was obtained by applying $\cRule(C,i)$ for $\mu_\forall\mu_\exists$}. 
Many fair chase sequences may exist, based on the choice of rule applied in each step.
If $C$ is generating, negated bodies of applied rules remain satisfied. If $\R$ has no existentials,
the results of generating chases are exactly the \emph{stable models} \cite{KLS6:coloring-asp}.
If $\R$ is Datalog, $\chase_C(\I,\R)$ is the unique \emph{perfect model},
and we omit ${}_C$.

Likewise, the chase result is the unique perfect model if $\R$ is \emph{stratified}:
Let $\opfont{preds}(A)$ be the set of predicates used in the set of atoms $A$,
and for $\circ \in \{ +,- \}$, let $\prec^\circ_{\opfont{preds}} \coloneqq \{ \langle\rho_1,\rho_2\rangle\in\R^2 \mid \opfont{preds}(\head(\rho_1)) \cap \opfont{preds}(\opfont{body}^\circ(\rho_2)) \neq\varnothing \}$.
Then an existential-free ruleset $\R$ is called (classically) stratified if
it can be partitioned into $\R = S_0 \mathrel{\dot{\cup}} \ldots \mathrel{\dot{\cup}} S_n$
such that for all $\rho_i\in S_i$ and $\rho_j\in S_j$ we have that
$\rho_i \prec^+_{\opfont{preds}} \rho_j$ implies $i \leq j$, and
$\rho_i \prec^-_{\opfont{preds}} \rho_j$ implies $i < j$.
Typically, it is also required that the sets of predicates $\bigcup_{\rho\in S_i} \opfont{preds}(\head(\rho))$ defined by the strata $S_i$ are disjoint, which is inessential.
$\R$ is stratified iff $\rtHull{(\prec^+_{\opfont{preds}})}\circ\prec^-_{\opfont{preds}}$ is acyclic.
A chase exhaustively applying rules from the strata in order is clearly generating.
\par\nobreak
Things are not so simple with $\exists$ (cf.\ Section~\ref{sec:intro}, Problem~2).

\myparagraph{Universal models and cores}
Models with nulls are compared using \emph{homomorphisms} (see \cite{KR20:cores} for a definition). 
A model $\mathcal{U}$ of $\R$ and $\I$ is \emph{universal} if all other models $\mathcal{M}$ of
$\R$ and $\I$ have a homomorphism to $\mathcal{U}$. Universality is the basis for answering monotonic queries \cite{DNR08:corechase}.
A \emph{core} is a database without ``redundant nulls'', which can be defined on finite databases $\I$  
by requiring that every homomorphism $\I\to\I$ is surjective (the infinite case is more subtle \cite{KR20:cores}).
The ``minimality'' of cores makes them useful for non-monotonic queries.
A local criterion for a chase to yield a core is based on \emph{alternative matches}:
\begin{definition}\label{def:altM}
    Let $\I_a \subseteq \I_b$ be databases, such that $\I_a$ was obtained by applying $\rho$ for $\mu$. 
    A mapping $\mu^A:\var(\rho)\to\C\cup\N$ is an \emph{alternative match} for $\tuple{\rho,\mu}$ over $\I_b$
    if
    (1) $\head(\rho)\mu^A \subseteq\I_b$,
    (2) $x\mu=x\mu^A$ for all $x\in\varA(\rho)$, and
    (3) there is a null in $\head(\rho)\mu$ that is not in $\head(\rho)\mu^A$. 
An alternative match in a chase $C$ is one that uses $\I_a=\cDB(C,i)$, $\rho=\cRule(C,i)$, and $\I_b=\chase_C(\I,\R)$.
\end{definition}
If a generating chase is finite and has no alternative matches, then it yields a core,
but it is undecidable if such a chase exists \cite[Thms~4~\&~11]{KR20:cores}.
Moreover, rules \alglineref{rule_father}--\alglineref{rule_different_fathers} have
generating chases $C$ and $C'$ where all databases $\cDB(C^{(\prime)},i)$ are cores, but with non-homomorphic (hence non-universal) results.
Decidable criteria for unique universal core models can be defined with rule precedences.

\myparagraph{Precedences}
A \emph{precedence} for ruleset $\R$ is a strict partial order ${\prec}\subseteq\R\times\R$.
A chase $C$ \emph{violates} $\prec$ if $\cRule(C,i)\prec\cRule(C,j)$ for some $i>j$.
$C$ \emph{respects} $\prec$ if, for every step $i$,
$\cDB(C,i)$ has no unsatisfied match $\tuple{\rho,\mu}$ with $\rho\prec\cRule(C,i)$.
This is different from ``non-violating'' since unsatisfied $\tuple{\rho,\mu}$ may become satisfied
without $\rho$ being applied.
Krötzsch defines precedences based on three types of \emph{reliance} relations, which were shown
to be efficiently computable \cite{GIKM2022}.

\begin{definition}\label{def:posr}
    Rule $\rho_2$ \emph{positively relies on} rule $\rho_1$, written $\rho_1\posr\rho_2$, if there are databases
    $\I_a\subseteq \I_b$ such that $\I_b$ was obtained from $\I_a$ by applying $\rho_1$ for $\mu_1$,
    and there is an unsatisfied match $\tuple{\rho_2,\mu_2}$ over $\I_b$ that is not a match over $\I_a$.
\end{definition}

\begin{definition}\label{def:negr}
    Rule $\rho_2$ \emph{negatively relies on} rule $\rho_1$, written $\rho_1\negr\rho_2$, if there are databases $\I_a\subseteq \I_b$
    such that $\I_b$ was obtained from $\I_a$ by applying $\rho_1$ for $\mu_1$,
    and there is an unsatisfied match $\tuple{\rho_2,\mu_2}$ over $\I_a$ that is not a match over $\I_b$.
\end{definition}

\begin{definition}\label{def:restr}
    Rule $\rho_1$ \emph{restrains} rule $\rho_2$, written $\rho_1\restr\rho_2$, if there are databases $\I_a\subseteq \I_b$
    such that
    (1) $\I_a$ was obtained by applying $\rho_2$ for $\mu_2$,
    (2) $\I_b$ was obtained by applying $\rho_1$ for $\mu_1$,
    and (3) there is an alternative match $\mu^A$ for $\tuple{\rho_2,\mu_2}$ over $\I_b$
    that is not an alternative match over $\I_b\setminus\head(\rho_1)\mu_1$.
\end{definition}

The intuition for these notions is that $\rho_1$ may produce an inference
that is needed to match $\rho_2$ ($\posr$), 
prevents matching $\rho_2$ ($\negr$), or
creates an alternative match for $\rho_2$ ($\restr$).
A chase that does not violate $\negr$ is generating.
A chase that does not violate $\restr$ has no alternative matches.

A ruleset $\R$ is \emph{core stratified} if the relation ${\posr}\cup{\restr}$ has no cycles through
$\restr$, and \emph{fully stratified} if ${\posr}\cup{\negr}\cup{\restr}$ has no cycles through
${\negr}\cup{\restr}$. For cases where a finite model exists, full stratification (which respects $\rtHull{(\posr)}\circ(\negr\cup\restr)$) yields a generating chase sequence
that (if terminating) results in a uniquely determined core model, called the \emph{perfect core model} \cite{KR20:cores}.
The conditions can be relaxed if negation is only used for some predicates \cite{EKM:ExNeg2022}.

 \section{Trails: Refining Transitive Positive Reliances}\label{sec:trails}

To determine a precedence that yields a perfect core model, it is necessary to estimate when the application of one rule may cause a new unsatisfied match for another rule in the future.
This may happen directly, as approximated with $\posr$, or possibly involve multiple intermediate steps. Previously, this was achieved by considering the reflexive transitive closure $\rtHull{(\posr)}$ of positive reliances \cite{KR20:cores},
which is a coarse overestimation.
A $\posr$-path does not detect whether the sequence of rule applications leading to a match
implies that this match is always satisfied.
This section aims to improve this by identifying a proper sub-relation of $\rtHull{(\posr)}$ that records
the nuances of such multistep interactions.

To represent sequences of matches, we will use words over an infinite set of \emph{rule instances} (individually denoted by $\ins$):
\begin{equation}
    \instances(\R) \coloneqq \{ \rho\theta \mid \rho\in\R, \theta: \var(\rho)\to (\V \setminus \var(\R)) \}
\end{equation}

\begin{example}\label{ex:instances}
    Below, rules $\rho_1,\rho_2$ and $\rho_3$ on the left are shown with one of their (infinitely many) instances on the right:
    \begin{equation*}\begin{array}{l|l}
        \rho_1\!: t(x,y,z), q(x,y) \ruleTo p(x,z)              \:&\: \ins_1\!: t(x_1,y_1,x_1), q(x_1,y_1) \ruleTo p(x_1,x_1) \\
        \rho_2\!: p(x,x) \ruleTo \exists v.\, s(x,v), t(v,v,v) \:&\: \ins_2\!: p(x_1,x_1) \ruleTo \exists v_1.\, s(x_1,v_1), t(v_1,v_1,v_1)  \\
        \rho_3\!: q(x,y), s(x,z), a(z) \ruleTo t(x,x,z)        \:&\: \ins_3\!: q(x_1,y_1), s(x_1,v_1), a(v_1) \ruleTo t(x_1,x_1,v_1)
    \end{array}\end{equation*}
    The rule instance $\ins_1$ was obtained from rule $\rho_1$ by applying the variable substitution
    $x \xmapsto{\theta} x_1$, $y \xmapsto{\theta} y_1$, $z \xmapsto{\theta} x_1$, collapsing variables $x$ and $z$.

    We will reuse these rules as a running example throughout Sections~\ref{sec:trails}--\ref{sec:chain-strat}.
\end{example}

Together with a fixed arbitrary bijection $\omega: (\V \setminus \var(\R))\injTo \C\cup\N$, each instance represents a match.
Applying $\omega$ to a set of atoms produces a database and, conversely, applying $\omega^{-1}$ to a database yields a set of atoms.
A sequence of matches $\langle\rho_1,\mu_1\rangle, \ldots, \langle\rho_n,\mu_n\rangle$ is now represented by
a word $\rho_1\mu_1\omega^{-1} \ldots \rho_n \mu_n \omega^{-1}$ over alphabet $\instances(\R)$.
As the chase algorithm always maps existentials to fresh nulls,
we require these words to use fresh variables: 

\begin{definition}\label{def:exists-disjoint}
    A finite sequence $(\ins_i)_{i=1,\ldots,n} \in \instances(\R)^*$ of rule instances $\ins_i \in \instances(\R)$
    with pairwise disjoint existential variables
is called \emph{$\exists$-disjoint}.
\end{definition}

\begin{example}[\cont{ex:instances}]\label{ex:exists-disjoint}
    The sequence of instances $\ins_1 \ins_2 \ins_3$ is $\exists$-disjoint, as the only existential variable $v_1$ is fresh in $\ins_2$.
    (It can be used universally again, as in $\ins_3$.)
\end{example}

Such a word indicates that the rule application corresponding to its first instance
may indirectly cause a new unsatisfied match for the rule whose instance ends the word,
if one can give a witnessing chase which applies the matches in the given order,
such that each match directly requires the previous one.

\begin{definition}\label{def:trail}
    An $\exists$-disjoint sequence of rule instances $t=\ins_1,\ldots,\ins_n$ is a \emph{trail} if
    there exists a generating chase sequence $C$,
    for which there is a strictly increasing function
$f: \set[n] \injTo \cSteps(C)$,
    such that for all $i\in\set[n]$
    \begin{itemize}
        \item $\langle \ins_i,\omega \rangle = \cMatch(C,f(i))$ and \hfill (match correspondence)
        \item $i>1$ implies $\bodyP(\ins_i)\omega \cap \head(\ins_{i-1})\omega \neq \varnothing$. \hfill (causal connection)
    \end{itemize}
    We write $\rho_1\posrT \rho_2$ if there is a trail from an instance of $\rho_1$ to an instance of $\rho_2$.
\end{definition}

Clearly, we have ${\posrT} \subseteq {\rtHull{(\posr)}}$, as the databases $\I_a=\cDB(C,f(i)-1)$ and $\I_b=\cDB(C,f(i))$ witness the pairwise positive reliances for all $i$.

\begin{example}[\cont{ex:exists-disjoint}]\label{ex:trail}
    Let $c_{x_1},c_{y_1} \in \C$, $n_{v_1} \in \N$, $x_1 \xmapsto{\omega} c_{x_1}$, $y_1 \xmapsto{\omega} c_{y_1}$, $v_1 \xmapsto{\omega} n_{v_1}$.
    The $\exists$-disjoint sequence $\ins_1 \ins_2 \ins_3$ is a trail,
    as there is a chase $C$ with
\begin{equation*}
        \begin{array}{l}
            \cDB(C,0) = \{ t(c_{x_1},c_{y_1},c_{x_1}), q(c_{x_1},c_{y_1}) \} \hfill
            \cDB(C,1) = \cDB(C,0) \cup \{ p(c_{x_1}, c_{x_1}) \} \\
            \cDB(C,2) = \cDB(C,1) \cup \{ s(c_{x_1},n_{v_1}), t(n_{v_1},n_{v_1},n_{v_1}) \} \quad\>\>
            \cDB(C,3) = \cDB(C,2) \cup \{ a(n_{v_1}) \} \\
            \cDB(C,4) = \cDB(C,3) \cup \{ t(c_{x_1},c_{x_1},n_{v_1}) \}
        \end{array}
    \end{equation*}
    and step mapping $f: \{1,2,3\} \injTo \{1,2,3,4\}$ with $1\mapsto 1$, $2\mapsto 2$, $3\mapsto 4$,
    assuming a rule, e.g. $s(x,y)\ruleTo a(y)$, to derive fact $a(n_{v_1})$ in step $3$.
    Hence, $\rho_1 \posrT \rho_3$.
\end{example}

\begin{restatable}{theorem}{thmPosrTPrecNoViolate}\label{thm:posrT-prec-no-violate}
If ${\posrT} \circ (\negr\cup\restr)$ is a precedence (hence acyclic),
then a chase that respects it does not violate it. In particular, it does not violate $\negr$ or $\restr$,
and therefore is generating and free of alternative matches.
\end{restatable}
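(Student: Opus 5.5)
The plan is to argue by contradiction, picking a \emph{first} violation and tracing its cause backwards through the chase. Write ${\prec} \coloneqq {\posrT}\circ({\negr}\cup{\restr})$. Read left to right, $\rho \prec \rho_2$ means that $\rho$ may, through a trail, enable a rule $\rho'$ that blocks or restrains $\rho_2$. So $\rho$ must be finished before $\rho_2$ is applied.

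Suppose a chase $C$ respects $\prec$ but violates it. Then there are steps $j<i$ with $\cRule(C,i)\prec\cRule(C,j)$. Fix $j$ and choose $i>j$ minimal, and let $\rho_2=\cRule(C,j)$. The match $\cMatch(C,i)=\tuple{\rho,\mu}$ is unsatisfied in $\cDB(C,i-1)$.

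Suppose $\bodyP(\rho)\mu\subseteq\cDB(C,j)$. Then $\bodyN(\rho)\mu\cap\cDB(C,j)\subseteq\bodyN(\rho)\mu\cap\cDB(C,i-1)=\varnothing$, because databases only grow. Hence $\tuple{\rho,\mu}$ is already a match over $\cDB(C,j)$. It is also unsatisfied there, since satisfaction is monotone. This contradicts that $C$ respects $\prec$ at step $j$, because $\rho\prec\rho_2$. So some atom of $\bodyP(\rho)\mu$ was produced at a step $k$ with $j<k<i$. Iterating this backwards, I will extract steps $j<k_1<\dots<k_m=i$ with the causal connection property: each body meets the previous head. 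The iteration stops at a step $k_1$ whose positive body is contained in $\cDB(C,j)$. It terminates because the steps decrease and stay above $j$. As above, $\tuple{\cRule(C,k_1),\cdot}$ is then an unsatisfied match over $\cDB(C,j)$.

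It remains to show $\cRule(C,k_1)\prec\rho_2$, which again contradicts respect at step $j$. This splits into two lemmas.

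\emph{(a)} The extracted instances $\ins_{k_1},\dots,\ins_{k_m}$, renamed apart as needed so that they are $\exists$-disjoint, form a trail. This gives $\cRule(C,k_1)\posrT\cRule(C,i)$.

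\emph{(b)} Trails compose, i.e.\ ${\posrT}\circ{\posrT}\subseteq{\posrT}$. Combined with $\cRule(C,i)\posrT\rho'\,({\negr}\cup{\restr})\,\rho_2$, this yields $\cRule(C,k_1)\prec\rho_2$.

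For (a), the trail needs a \emph{generating} witness chase, and $C$ itself is not yet known to be generating. I will use the minimality of $i$ here. The prefix of $C$ up to step $i$ contains no violation starting at $j$. Using induction over $j$ (taking the lexicographically first violating pair overall), the prefix violates no $\negr$ at all. So every match applied in the prefix remains a match at $\cDB(C,i)$. I will then extend this prefix fairly to a generating chase. The plan is to choose the continuation by a chase that applies only Datalog-free-of-conflict steps, or else to work over the abstract database obtained by $\omega$ and argue that later steps can be chosen so they never touch the relevant negated atoms. This extension is delicate.

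For (b), given two witness chases, I will rename the constants and nulls of the second trail apart. Then I will glue them by taking as the initial database the first witness chase's initial facts together with the facts needed for the second trail's non-causal body atoms. The resulting sequence should be arranged so that the shared middle instance is the same step in both.

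I expect lemma (b) to be the main obstacle. The chase must remain generating after gluing, since added facts may falsify negated bodies or satisfy matches early. If direct gluing fails, my fallback is to avoid (b) altogether. Instead I would strengthen the induction hypothesis to cover the concatenated sequence $\ins_{k_1}\cdots\ins_{k_m}$ followed by the abstract trail from $\cRule(C,i)$ to $\rho'$. I would build that combined trail directly on a single database formed from $\cDB(C,j)$ by $\omega^{-1}$.

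Once $C$ is known not to violate $\prec$, the remaining claims follow. Since ${\negr}\cup{\restr}\subseteq{\prec}$ (take trails of length one), $C$ does not violate $\negr$ or $\restr$. By the remarks after Definition~\ref{def:restr}, $C$ is therefore generating and has no alternative matches.
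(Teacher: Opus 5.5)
\textbf{There is a genuine gap: your lemma (b) is false.} The strong reading of the first sentence that (b) is meant to serve is also false, so neither (b) nor your fallback can be repaired.

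Here is a counterexample. Take constants $k,\mathbf{0},\mathbf{1}$ and the rules $\sigma_1: d(x)\ruleTo a(x,\mathbf{0})$, $\sigma_2: a(x,y)\ruleTo b(x,y)$, $\sigma_3: b(x,\mathbf{1})\ruleTo g(x)$ and $\sigma_4: e(x)\land\neg g(x)\ruleTo h(x)$.
\begin{itemize}
\item Chasing $\{d(k)\}$ gives $\sigma_1\posrT\sigma_2$, and chasing $\{a(k,\mathbf{1})\}$ gives $\sigma_2\posrT\sigma_3$.
\item Any trail starting at $\sigma_1$ must continue with $\sigma_2$ applied to some $a(t,\mathbf{0})$. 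That step produces only $b(t,\mathbf{0})$, which the only $b$-body atom $b(x,\mathbf{1})$ cannot use. So $\sigma_1\posrT\sigma_3$ fails, and $\posrT$ is not transitive.
\item Here ${\restr}=\varnothing$ and ${\negr}=\{\langle\sigma_3,\sigma_4\rangle\}$. Hence ${\posrT}\circ({\negr}\cup{\restr})=\{\langle\sigma_2,\sigma_4\rangle,\langle\sigma_3,\sigma_4\rangle\}$, which is a precedence.
\item The chase on $\{d(k),e(k)\}$ that applies $\sigma_4$, $\sigma_1$, $\sigma_2$ in this order respects this precedence. When $\sigma_4$ is applied there are no $a$- or $b$-facts, and nothing precedes $\sigma_1$ or $\sigma_2$.
\item Yet the chase violates the precedence, since $\sigma_2$ is applied after $\sigma_4$.
\end{itemize}
This is exactly where your argument breaks. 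Your backward tracing from step $3$ ends at the match of $\sigma_1$. The trail $\sigma_1\sigma_2$ found in $C$ cannot be glued to the abstract trail $\sigma_2\sigma_3$, because the two use incompatible instances of $\sigma_2$. Since your final paragraph derives the \emph{in particular} part from this false claim, the proposal as written establishes nothing.

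What holds, and what the paper actually proves, is only the \emph{in particular} part: a chase respecting ${\posrT}\circ({\negr}\cup{\restr})$ does not violate ${\negr}\cup{\restr}$. The chase above is consistent with this, since it violates no $\negr$-edge. To prove it, start your backward tracing from a first violation $\cRule(C,k)\,({\negr}\cup{\restr})\,\cRule(C,i)$ with $i<k$, rather than from a violation of the composite relation.
\begin{itemize}
\item The trail you extract from $C$ then ends at $\cRule(C,k)$ itself, which already carries the direct $({\negr}\cup{\restr})$-edge.
\item The rule of the first traced match has an unsatisfied match over $\cDB(C,i)$. It therefore stands in relation ${\posrT}\circ({\negr}\cup{\restr})$ to $\cRule(C,i)$, so respect fails at step $i$.
\item No composition of trails is ever needed.
\end{itemize}
Your tracing and part (a) coincide with the paper's argument. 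The paper handles the witness problem you flag only by noting that minimality of the violation excludes $\negr$-violations before step $k$. So $C$ is generating up to $k$, and $C$ itself is then used as the witness chase.

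The conclusions about generating chases and alternative matches then follow directly from non-violation of $\negr$ and $\restr$. You therefore also do not need the inclusion ${\negr}\cup{\restr}\subseteq{\posrT}\circ({\negr}\cup{\restr})$. That inclusion can fail, for instance when a rule in the domain of $\negr$ is applied in no generating chase and so has no trail at all.
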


Theorem~\ref{thm:posrT-prec-no-violate} can be shown by a contrapositive argument, where we take a chase that violates ${\posrT} \circ {(\negr\cup\restr)}$,
inspect the first violation at chase step $i$ and argue that the chase must disrespect the precedence at $i$.
The other properties follow from slight adaptations of known results on $\negr$ and $\restr$ \cite{KR20:cores}.

\begin{restatable}{theorem}{thmPosrTPrecUniqeChaseResult}\label{thm:posrT-prec-unique-chase-result}
If ${\posrT} \circ (\negr \cup \restr)$ is a precedence, then the
result of every chase of $\I$ and $\R$ that respects it is unique (up to isomorphism). \end{restatable}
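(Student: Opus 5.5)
Let ${\prec} \coloneqq {\posrT}\circ(\negr\cup\restr)$. Fix two chases $C$ and $C'$ of $\I$ and $\R$ that both respect $\prec$. By Theorem~\ref{thm:posrT-prec-no-violate}, both are generating, free of alternative matches, and do not violate $\prec$. The plan follows Krötzsch's uniqueness argument for fully stratified rulesets~\cite{KR20:cores}, with $\rtHull{(\posr)}$ replaced by $\posrT$.

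\textbf{Step 1: layering.} Since $\prec$ is acyclic, I partition $\R$ into layers $S_0,\ldots,S_n$ so that $\rho\prec\rho'$ implies that $\rho$ lies in a strictly lower layer than $\rho'$. I then show, by induction on $k$, that the restrictions of $C$ and $C'$ to the steps applying rules of $S_0\cup\cdots\cup S_k$ yield isomorphic results.

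Within one layer there are no $\negr$ or $\restr$ edges from rules that may still fire, by the following argument. Suppose some $\rho_1$ applied in $C$ later enables, through a trail, a match of $\rho_3$, and $\rho_3 \negr \rho_2$ or $\rho_3 \restr \rho_2$ for some $\rho_2$. Then $\rho_1\prec\rho_2$. Non-violation therefore forbids applying $\rho_2$ after $\rho_1$, and respecting $\prec$ forbids applying $\rho_2$ while unsatisfied matches of $\rho_1$ remain.

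\textbf{Step 2: homomorphisms between the results.} I build homomorphisms $h:\chase_C\to\chase_{C'}$ and $h':\chase_{C'}\to\chase_C$. For $h$, I walk through the steps $i$ of $C$ in order and extend $h$ on the fresh nulls of $\cDB(C,i)$. The match $\cMatch(C,i)$, composed with $h$, is a match over $\chase_{C'}$:
\begin{itemize}
\item its positive body is mapped by $h$ into $\chase_{C'}$;
\item its negative body stays false in $\chase_{C'}$, because $C'$ is generating and, by Step~1, no rule that could derive a blocking fact is ordered after it in a way that $C$ avoided but $C'$ did not.
\end{itemize}
By fairness of $C'$, this match is satisfied in $\chase_{C'}$, which gives the extension of $h$. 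The construction of $h'$ is symmetric.

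\textbf{Step 3: from homomorphisms to isomorphism.} Absence of alternative matches makes each result a core. Homomorphisms in both directions between cores then give an isomorphism. In the infinite case I expect to need the finer statement from~\cite{KR20:cores}: $h'\circ h$ restricted to the nulls introduced in step $i$ is the identity. Otherwise $h'\circ h$ would yield an alternative match for $\cMatch(C,i)$.

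\textbf{Main obstacle.} The delicate part is the negative-body claim in Step~2. Suppose a fact $f$ blocks $h(\cMatch(C,i))$ in $\chase_{C'}$. I must trace the derivation of $f$ in $C'$ back to a causally connected sequence starting from a rule whose application is shared with $C$. That sequence is a trail, which witnesses ${\posrT}\circ{\negr}$ from some rule to $\cRule(C,i)$. This contradicts non-violation, or respect of $\prec$, in $C$.

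The difficulty is producing a genuine trail, i.e., a single generating chase with each step's positive body overlapping the previous step's head. My first attempt is to take $C'$ itself as the witnessing chase: pick the chain of premises backwards from $f$ and rename the nulls via $\omega^{-1}$ into instances, which gives $\exists$-disjointness for free. I would then argue minimality on the earliest step of $C'$ at which the two chases diverge, in the same style as the contrapositive proof of Theorem~\ref{thm:posrT-prec-no-violate}.
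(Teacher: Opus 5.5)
\textbf{There is a genuine gap.} The decisive step is the one you call the main obstacle, and Steps~1--3 as set up cannot deliver it.

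To contradict that $C$ respects $\prec$ at step $i$, the trail you extract from $C'$ must \emph{start} with a match that is an unsatisfied match over $\cDB(C,i-1)$. A match ``whose application is shared with $C$'' that $C$ already applied before step $i$ gives no contradiction. Your $h$, however, runs from $C$ to $C'$ and sends fresh nulls to arbitrary satisfying witnesses, possibly identifying nulls or mapping them to constants. So a match of $C'$ generally cannot be pulled back along $h$ to a match over $\cDB(C,i-1)$ at all. Likewise, the step $k$ of $C'$ that derives the blocking fact $f$ witnesses $\cRule(C',k)\negr\cRule(C,i)$ only if $h\circ\cMatch(C,i)$ is already an unsatisfied match over $\cDB(C',k-1)$. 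That fails whenever $f$ appears before the image of the positive body.

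The paper gets the missing correspondence by organising the whole proof around the first divergence:
\begin{itemize}
\item Take the minimal step $i$ of $C_1$ whose match is never applied in $C_2$ and adds facts absent from $C_2$.
\item All earlier matches of $C_1$ are then applied in $C_2$, say by step $i_2$, so $\cDB(C_1,i-1)$ sits inside $\cDB(C_2,i_2)$ up to renaming of nulls.
\item Fairness yields $k\ge i_2$ at which $C_2$ either blocks $\cMatch(C_1,i)$, giving $\cRule(C_2,k)\negr\cRule(C_1,i)$, or satisfies it otherwise, giving self-restraint or $\cRule(C_2,k)\restr\cRule(C_1,i)$.
\item The backward trace from $k$ in $C_2$ ends at a match that is unsatisfied over $\cDB(C_1,i-1)$, which $C_1$ should have preferred.
\end{itemize}

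This also shows why restraint cannot be deferred to Step~3. Once $C'$ satisfies a match of $C$ with different witnesses, the injective prefix correspondence that the negation argument needs is lost. The paper treats exactly this case via $\restr$ inside the same argument.

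Step~3 is limited in two further ways. First, ``generating and no alternative matches $\Rightarrow$ core'' is only available for finite results \cite{KR20:cores}, whereas the theorem covers all chases. Second, the fallback that $h'\circ h$ is the identity on the nulls of each step does not follow. The map may permute those nulls without creating an alternative match, and for independently built $h,h'$ you have no inductive hypothesis fixing the frontier terms.

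Finally, the layer induction of Step~1 is unsound. Layers of $\posrT\circ(\negr\cup\restr)$ need not respect $\posr$, so a rule in $S_0$ may positively rely on one in $S_1$. A chase respecting $\prec$ also need not proceed layer by layer. Hence restricting $C$ and $C'$ to $S_0\cup\cdots\cup S_k$ does not yield chases to compare, and Step~2 never actually uses this induction.
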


Uniqueness is shown by arguing that of two non-isomorphic chases, one must disrespect the precedence.
A chase as in Theorem~\ref{thm:posrT-prec-unique-chase-result} is generating and free of alternative matches by Theorem~\ref{thm:posrT-prec-no-violate}.
If finite, this unique result is a core \cite[Theorem~11]{KR20:cores}, known as
the \emph{perfect core model} of $\I$ and $\R$.

\begin{restatable}{lemma}{lemPosrTSmallest}\label{lem:posrT-smallest}
    The $\posrT$ relation is the smallest relation $<$,
    such that respecting $< \circ \prec$ ensures non-violation of $< \circ \prec$
    for acyclic $< \circ \prec$ with ${\negr} \subseteq {\prec}$.
\end{restatable}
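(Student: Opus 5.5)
The statement has two parts. First, $\posrT$ itself has the property: whenever ${\posrT}\circ{\prec}$ is acyclic and ${\negr}\subseteq{\prec}$, every chase respecting it also does not violate it. Second, any relation $<$ with this property contains $\posrT$. I would prove the first part by generalising the contrapositive argument sketched for Theorem~\ref{thm:posrT-prec-no-violate} from $\negr\cup\restr$ to an arbitrary $\prec\supseteq\negr$. I would prove the second part by building, from a trail, a concrete $\prec$ and a chase that respects $<\circ\prec$ but violates it whenever $<$ misses a $\posrT$-pair.

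\emph{Sufficiency.} Let $C$ respect ${\posrT}\circ{\prec}$, and take the first step $i$ at which it is violated, witnessed by some $j<i$ with $\cRule(C,i)\,({\posrT}\circ{\prec})\,\cRule(C,j)$.

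Since trails of length one exist (any single applied match gives one), $\posrT$ is reflexive on applied rules. Moreover, ${\negr}\subseteq{\prec}$ ensures that $C$ does not violate $\negr$ before step $i$, so the prefix is generating; this is what lets us use it as a witness chase for trails.

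Now trace $\cMatch(C,i)$ backwards along causal connections. At each step pick a body atom of the current match that was produced by an earlier step, giving a sequence of steps $k_1<\dots<k_m=i$ whose instances form a trail. Consider two cases.

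\begin{itemize}
\item If no step of this sequence lies after $j$, then the positive body of $\cMatch(C,i)$ is already contained in $\cDB(C,j)$. Its negative body is absent from $\cDB(C,i)\supseteq\cDB(C,j)$, hence absent at $j$. Satisfaction is monotone, so the match is unsatisfied at $j$. Its rule stands in $\posrT\circ\prec$ to $\cRule(C,j)$, which contradicts respect at step $j$.
\item Otherwise, let $k$ be the earliest step of the sequence after $j$. The same argument shows that $\cMatch(C,k)$ is an unsatisfied match at $j$. The subsequence from $k$ to $i$ is a trail, so $\cRule(C,k)\posrT\cRule(C,i)$, and composing gives $\cRule(C,k)\,({\posrT}\circ{\prec})\,\cRule(C,j)$. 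This again contradicts respect at $j$.
\end{itemize}

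The delicate point is that the extracted subsequence must be a genuine trail, that is, $\exists$-disjoint and witnessed by a \emph{generating} chase. I would handle this by truncating $C$ before step $i$, where it is still non-violating for $\negr$, and completing it fairly.

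\emph{Minimality.} Suppose $\rho_1\posrT\rho_2$ but $\rho_1\not<\rho_2$. Take the witnessing trail $\ins_1\dots\ins_n$ and its generating chase $C$. Define $\prec$ as $\negr$ extended by a single pair that makes the trail's ordering forbidden, for instance $\tuple{\rho_2,\rho_1}$ or a pair from $\rho_2$ to a fresh auxiliary rule whose application is placed at step $f(1)$. Then ${\posrT}\circ{\prec}$ contains a cycle through this pair, while $<\circ\prec$ stays acyclic because $\tuple{\rho_1,\rho_2}\notin{<}$. Next I would reorder $C$ into a chase $C'$ that applies the trail instances in order. At the moment $\ins_1$ is applied, the match of $\ins_n$ does not yet exist, since it arises only through the causal chain. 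Hence $C'$ respects $<\circ\prec$ at every step, but it applies $\rho_2$ after $\rho_1$, which violates $<\circ\prec$ as soon as $<$ relates $\rho_2$ (reflexively) to itself. This contradicts the assumed property of $<$.

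The main obstacle is exactly this minimality construction. One has to choose $\prec$ so that it contains $\negr$, keeps $<\circ\prec$ acyclic, and still makes the trail chase respectful yet violating; the reflexivity of $<$ and possible interference from $\negr$-pairs must be checked with care. I expect to resolve this by using a fresh auxiliary rule and predicate, so that the added pair interacts with no reliances other than the trail.
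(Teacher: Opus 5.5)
Both halves of your plan have genuine gaps. In the sufficiency part you start from a violation $\cRule(C,i)\,({\posrT}\circ{\prec})\,\cRule(C,j)$. Case~2 then infers $\cRule(C,k)\,({\posrT}\circ{\prec})\,\cRule(C,j)$ from $\cRule(C,k)\posrT\cRule(C,i)$, which requires $\posrT$ to be transitive. It is not: for the rules of Example~\ref{ex:dl-posr-seq-not-chain}, $\rho_1\posrT\rho_2$ and $\rho_2\posrT\rho_3$ hold, but there is no trail from $\rho_1$ to $\rho_3$. The $\posrT$-link in your starting violation is witnessed by some unrelated chase, and its middle rule need never fire in $C$. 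So the literal target ``respecting ${\posrT}\circ{\prec}$ prevents violating ${\posrT}\circ{\prec}$'' is actually false. Take $\gamma: g(x),\neg h(x)\ruleTo d(x)$, $\delta: d(x)\ruleTo\exists v.\,a(x,v)$, $\alpha: a(x,y)\ruleTo b(x,y)$, $\beta: b(x,y),s(y)\ruleTo h(x)$ and ${\prec}={\negr}=\{\tuple{\beta,\gamma}\}$. Then ${\posrT}\circ{\prec}=\{\tuple{\alpha,\gamma},\tuple{\beta,\gamma}\}$, because the null invented by $\delta$ never gets an $s$-fact. Yet the chase applying $\gamma,\delta,\alpha$ to $\{g(c)\}$ respects this precedence and applies $\alpha$ after $\gamma$. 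What the paper proves, and what its proof of this lemma relies on, is non-violation of $\prec$; the proof of Theorem~\ref{thm:posrT-prec-no-violate} is given for arbitrary ${\prec}\supseteq{\negr}$. It starts from a violated pair $\cRule(C,k)\prec\cRule(C,i)$ with $i<k$ and traces back from $k$, so the only $\posrT$-link needed is the trail extracted from $C$ itself. Your trace also has to be the specific one $a_{x+1}=\min\{j\mid\cMatch(C,a_x)\in\matches(\cDB(C,j))\}$, i.e.\ it must follow the body atom that \emph{completed} the current match. With an arbitrary earlier-derived atom, neither of your cases gives that the whole positive body is present at step $j$. Finally, truncating and ``completing fairly'' does not by itself yield a generating witness, since fairness may force exactly the pending application that invalidates an earlier match. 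The paper does not resolve this either; it uses $C$ itself, which is only generating up to $k$.

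In the minimality part, your contradiction needs the chase to violate ${<}\circ{\prec}$, which you only obtain if $\rho_2<\rho_2$; an arbitrary $<$ need not be reflexive. That $C'$ respects ${<}\circ{\prec}$ at every step is asserted, not shown. At step $f(1)$, any pending match of a rule $\rho$ with $\rho<\rho_2$, or with $\rho<\rho'$ for some $\rho'\negr\rho_1$, already breaks it. Moreover, reordering a restricted chase can destroy the trail or generatingness. Your acyclicity argument only excludes the cycle through $\tuple{\rho_1,\rho_2}$. If ${<}\circ{\negr}$ is already cyclic, the property holds vacuously and no suitable $\prec$ exists. So ``every $<$ with the property contains $\posrT$'' cannot be shown in that generality; the paper only argues against proper subrelations ${<}\subset{\posrT}$ and presupposes a suitable $\prec$. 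A fresh auxiliary rule is not available either, since $\R$ is fixed, and with it $\posrT$, $\negr$ and the admissible $\prec$. The paper sidesteps reflexivity by aiming at a violation of $\prec$ itself. It takes ${\prec}\supseteq{\negr}$ with an edge $\rho_2\prec\rho_3$ and ${<}\circ{\prec}$ acyclic, and a chase in which $\rho_3$ fires at a step $i$ where the trail's $\rho_1$-match is pending but the $\rho_2$-match does not yet exist. It then argues that, as $\rho_1\not<\rho_2$, this pending match does not make the chase disrespect ${<}\circ{\prec}$, so the chase respects ${<}\circ{\prec}$ while violating $\prec$. With that target, your pair $\tuple{\rho_2,\rho_1}$ already yields a $\prec$-violation without needing $\rho_2<\rho_2$. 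You still owe an argument that some chase realising the trail respects ${<}\circ{\prec}$ throughout (the paper simply posits one).
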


Based on proof of Theorem~\ref{thm:posrT-prec-no-violate} outlined above, this is shown by contradiction.
Lemma~\ref{lem:posrT-smallest} therefore confirms that our choice of $\posrT$ is optimal for our purposes.

\begin{restatable}{theorem}{thmTrailUndecidable}\label{thm:trail-undecidable}
    For $\ins_1,\ins_2\in\instances(\R)$, it is undecidable whether $\ins_1\ins_2$ is a trail.
\end{restatable}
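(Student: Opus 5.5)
We reduce from an undecidable problem about chase behaviour, most naturally \emph{entailment} (equivalently, fact derivability) for existential rules: given a null-free database $\I_0$, a ruleset $\R_0$ without negation, and a nullary predicate $g$, it is undecidable whether $g$ holds in every (equivalently, some) chase result of $\I_0$ and $\R_0$. The trail condition quantifies over \emph{all} generating chases, with arbitrary initial databases, so the reduction has to force the input database to contain a copy of $\I_0$ exactly when the relevant instance is reachable.

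\textbf{Construction.} Take fresh predicates $s$ (nullary) and $h$ (nullary). Let $\rho_1 \colon s \ruleTo \exists \vec{z}.\, \I_0'$ create a copy of $\I_0$, where constants of $\I_0$ are replaced by existential variables, or kept as constants (either works, since rules may contain constants). Let $\R_0'$ be $\R_0$ with the additional body atom $\mathit{start}(\cdot)$ or simply unchanged. Add $\rho_g \colon g \ruleTo h$ and $\rho_2 \colon s, h \ruleTo h'$ with fresh $h'$, and set $\R = \{\rho_1,\rho_2,\rho_g\}\cup\R_0$. Take $\ins_1=\rho_1$ (with renamed existentials) and $\ins_2=\rho_2$. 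For causal connection, $\bodyP(\ins_2)\omega\cap\head(\ins_1)\omega\neq\varnothing$ must hold, so I modify the construction so that $\rho_1$'s head also contains a marker atom $m$, and $\rho_2$ has body $m, h$. Because all predicates of $\R_0$ may occur in arbitrary input databases, I additionally relativise $\R_0$: every atom $p(\vec t)$ becomes $p(\vec t, w)$ with a tag $w$, the fresh null created by $\rho_1$ via an existential $\exists w$, and rules propagate $w$. The goal is then $g(w)$, $\rho_g\colon g(w)\ruleTo h(w)$, and $\rho_2\colon m(w), h(w)\ruleTo h'(w)$. Since $w$ is a fresh null, no input fact can mention it, so facts tagged with $w$ are exactly those derived from the copy of $\I_0$ by $\R_0$.

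\textbf{Correctness.} ($\Rightarrow$) If $\R_0,\I_0\models g$, start from database $\{s\}$; a fair chase first applies $\rho_1$ (step $f(1)$), then derives $g(w)$ and $h(w)$, and finally applies $\rho_2$ for the match $\omega$ (step $f(2)$). There is no negation, so the chase is generating. The match $\rho_2$ is unsatisfied until applied, since $h'$ is fresh. ($\Leftarrow$) Conversely, if a trail chase exists, then $h(w)$ is in some chase result. Since $w$ is fresh, $h(w)$ must come from the tagged copy, which yields a derivation of $g$ from $\I_0$ under $\R_0$. This is a homomorphism argument mapping the tagged facts back to a model of $\I_0,\R_0$.

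\textbf{Main obstacle.} The delicate part is showing that the arbitrary chase in the trail definition cannot produce $h(w)$ ``illegitimately''. Two points need care here. First, the relativisation must be sound even though the restricted chase may reuse existing facts. Second, the instance $\ins_1$ must be matched by $\omega$ with unsatisfiedness, meaning the input must not already satisfy $\rho_1$, which is achieved by taking the input to be $\{s\}$. I would verify soundness by showing that every fact with tag $w$ in $\chase_C$ is the image of a fact in a chase of $\I_0,\R_0$ under a homomorphism fixing constants.
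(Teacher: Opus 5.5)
Your proposal is correct up to small repairs, but it takes a different route from the paper.

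\textbf{The paper's route.} The paper reduces directly from the halting problem. It writes out a Turing-machine simulation $\R(M,w)$ whose body-less rule $\rho_1$ builds the initial configuration from fresh nulls, and which ends with $\rho_2\colon \mathit{state}(t_0,q_0),\mathit{state}(t,q_f)\to\mathit{accept}()$. Causal connection comes from the shared atom $\mathit{state}(t_0,q_0)$. The paper then appeals to the freshness of $\rho_1$'s nulls to argue that arbitrary input databases do not disturb the simulation.

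\textbf{Your route.} You use the undecidability of fact entailment for negation-free existential rules as a black box. You isolate the copied computation by tagging every atom with the null $w$ invented by $\rho_1$, and $m(w)$ provides causal connection.

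\textbf{What each buys.} Your route is modular: there is no machine encoding to verify. It also gives a precise isolation argument. All body atoms of $\rho_2$ carry $w$. Tagged facts arise only from the single application of $\rho_1$ and from rules whose whole body is tagged. A tagged match is unsatisfied in the full database iff it is unsatisfied in its tagged part. Hence the tagged steps, with tags stripped, form a restricted-chase prefix of $\I_0,\R_0$.

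This isolation is not cosmetic. In the paper's $\rho_2$ the variable $t$ shares nothing with $t_0$. Freshness of the initial nulls therefore does not by itself exclude an atom $\mathit{state}(\cdot,q_f)$ that is supplied by the input, or produced by one transition out of a garbage configuration in the input. Threading a run identifier through all atoms, as your tag does, is what makes ``irrespective of $\I$'' actually hold. The paper's route, in exchange, is self-contained: it needs only the halting problem, not the (itself TM-based) undecidability of existential-rule entailment.

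\textbf{Three details to fix.}
\begin{enumerate}
\item Keep the constants of $\I_0$ as constants in $\rho_1$. Replacing them by existential variables breaks ($\Rightarrow$) whenever $\R_0$ mentions constants, e.g.\ $\I_0=\{p(a)\}$, $\R_0=\{p(a)\to g\}$. So ``either works'' is only true for constant-free $\R_0$.
\item Add a guard such as $m(w)$ to the body of every relativised rule. Otherwise rules of $\R_0$ with empty body violate safety once tagged.
\item Choose the variable that $\ins_1$ uses for $\exists w$ (and that $\ins_2$ reuses) so that its $\omega$-image lies in $\N$, namely the null invented at step $f(1)$. With a constant there, either that term cannot be the invented null, or the input could supply $m(\cdot)$ and $h(\cdot)$ on it directly. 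In both cases the isolation argument collapses.
\end{enumerate}
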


Theorem~\ref{thm:trail-undecidable} is unsurprising, as chase termination is undecidable.
Simulating the run of a Turing-machine with existential rules is a known technique
that can be adapted such that the ruleset has a trail iff the TM halts.

\begin{remark}
    As a direct consequence of Theorem~\ref{thm:trail-undecidable},
    it is also undecidable whether there is any trail starting and ending with instances of given rules,
    since that is already the case for only two instances.
\end{remark}

Consequently, we are looking for a decidable relation between $\posrT$ and $\rtHull{(\posr)}$.  \section{Chains: A Decidable Approximation}\label{sec:chains}

By Section~\ref{sec:trails}, $\posrT$ is the smallest sub-relation of $\rtHull{(\posr)}$ that sufficiently captures when
one rule application may enable another.
We now develop a decidable overestimation of $\posrT$ based on positive reliance checks.
When checking $\posr$, it suffices to consider instances under $\omega$, which
leads to $\iposr$ as a special case:

\begin{definition}\label{def:direct-posr}
	For $\ins_1, \ins_2 \in \instances(\R)$, we write
	$\ins_1\iposr\ins_2$, if
	$\ins_1\posr\ins_2$ with
	$\I_a=(\bodyP(\ins_1)\cup(\bodyP(\ins_2)\setminus\head(\ins_1)))\omega$,
and
	$\mu_1=\mu_2=\omega$.
\end{definition}
Above, database $\I_a$ and matches $\mu_1$ and $\mu_2$ refer to Definition~\ref{def:posr}.

\begin{example}[\cont{ex:trail}]\label{ex:direct-posr}
    It is clear that $\ins_1\iposr\ins_2$, as $\langle\ins_2,\omega\rangle$ is no match on database
    $\I_a = \{ t(c_{x_1}, c_{y_1}, c_{x_1}), q(c_{x_1}, c_{y_1}) \}$,
    but applying the unsatisfied match $\langle\ins_1,\omega\rangle$
    yields $\I_b = \I_a \cup \{ p(c_{x_1}, c_{x_1}) \}$
    where $\langle\ins_2,\omega\rangle$ is an unsatisfied match.
\end{example}

It is natural to ask if a sequence $c$ of instances --- as a whole --- can cause a new unsatisfied match $\langle\rho,\mu\rangle$,
i.e., if it can be \emph{extended} by an instance $\ins \coloneqq \rho \mu \omega^{-1}$.
To determine this, we define a \emph{chain rule} $\rho_c$ that is obtained from $c$
by combining all body atoms from the instances as well as all head atoms from all but the last instance $\ins_k$ into $\bodyP(\rho_c)$
and keeping the head of $\ins_k$ as $\head(\rho_c)$.
Any match for $\ins_k$ that arose due to prior applications of instances $\ins_1 \ldots \ins_{k-1}$
also satisfies the combined list of pre-conditions in $\bodyP(\rho_c)$.

\begin{definition}\label{def:extend}
An $\exists$-disjoint sequence of rule instances $c=\ins_1,\ldots,\ins_k$ can be \emph{extended by $\ins\in\instances(\R)$}
    if $\rho_c\iposr\ins$, where $\rho_c$ is the \emph{chain rule}
    \begin{align}\label{eq:chain-rule}\left(\textstyle\bigcup_{i\in\set[k]} \bodyP(\ins_i)\right)\cup
            \left(\textstyle\bigcup_{i\in\set[k-1]} \head(\ins_i)\right) \ruleTo \exists \varE(\ins_k) .\, \head(\ins_k)
\end{align}
\end{definition}
\begin{definition}\label{def:chain}
An $\exists$-disjoint sequence of rule instances $c=\ins_1,\ldots,\ins_k$ is a \emph{chain} if
$k=1$ or, recursively, $\ins_1\ldots\ins_{k-1}$ is a chain that can be extended by $\ins_k$.
\end{definition}

\begin{example}[\cont{ex:direct-posr}]\label{ex:chain}
    To see that $c = \ins_1 \ins_2 \ins_3$ is a chain, check that
    $\ins_1$ is a chain of length $1$,
    $\rho_{\ins_1} \iposr \ins_2$ (see Example~\ref{ex:direct-posr}), and
    $\rho_{\ins_1 \ins_2} \iposr \ins_3$ where the chain rule for $\ins_1\ins_2$ is
    $\rho_{\ins_1 \ins_2} : t(x_1,y_1,x_1), q(x_1,y_1), p(x_1,x_1) \ruleTo \exists v_1.\, s(x_1,v_1), t(v_1,v_1,v_1)$.
\end{example}

\begin{remark}\label{obs:trail-is-chain}
    Every trail is a chain.
\end{remark}

Although it is easy to determine whether a given sequence of rule instances forms a chain,
testing whether there is \emph{any} chain
between the instances of two rules
is difficult, as chains can in principle grow arbitrarily long.
We can, however, further restrict to certain chains:

\begin{definition}\label{def:decoupled}
    For a sequence $s=\ins_1,\ldots,\ins_n$,
    the set of \emph{stale variables} is:
    \begin{equation}
        \stale(s) \coloneqq \textstyle\bigcup_{i,j\in\set[n]. j<i} \left(\left( \varA(\ins_i) \setminus \var(\head(\ins_{i-1})) \right) \cap \varA(\ins_j)\right)
    \end{equation}
    If $\stale(s)=\varnothing$, then $s$ is \emph{decoupled}.
\end{definition}
Decoupled chains are special because the only variables re-used in the $i$th instance are frontier and existential variables stemming from the $(i-1)$th instance.

\begin{example}[\cont{ex:chain}]\label{ex:decoupled}
    The sequence $t=\ins_1 \ins_2 \ins_3$ is not decoupled,
    as $y_1$ is re-used in $\ins_3$ although not present in $\var(\head(\ins_2))=\{ x_1, v_1 \}$
    and thus $\stale(t)=\{ y_1 \} \neq \varnothing$.
\end{example}

This shows that trails are not always decoupled. We can, however, relate each trail to a decoupled chain.
\begin{definition}\label{def:generalised}
    A sequence $\ins^A_1,\ldots,\ins^A_n$ \emph{generalises} a sequence $\ins^B_1,\ldots,\ins^B_n$
    if there is a variable substitution $\sigma: \V\to\V$, such that $\ins^A_i\sigma = \ins^B_i$ for all $i\in\set[n]$.
\end{definition}

\begin{example}[\cont{ex:decoupled}]\label{ex:generalised}
    Any sequence of rule instances can be rewritten such that it is decoupled, by injectively replacing the stale variables with fresh ones.
    Consider $\theta_1 = \theta_2 = \id$ and $\theta_3$ with $y_1 \mapsto y_2$.
    Then $c = (\ins_1\theta_1) (\ins_2\theta_2) (\ins_3\theta_3)$ is decoupled,
    and $c$ generalises $t$ with $\sigma = \theta_1^{-1} \circ \theta_2^{-1} \circ \theta_3^{-1}$, which simply replaces $y_2$ by $y_1$.
\end{example}
\begin{restatable}{lemma}{lemTrailGenDecChain}\label{lem:trail-gen-dec-chain}
    Every trail can be generalised by a decoupled chain.
\end{restatable}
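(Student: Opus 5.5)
The plan is to take the renaming of Example~\ref{ex:generalised} as the construction, and then to prove by induction on the prefix length that the renamed sequence is a chain.

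\textbf{Construction.} Let $t=\ins_1,\ldots,\ins_n$ be a trail, witnessed by a generating chase $C$ and a step mapping $f$. I build $c=\ins^A_1,\ldots,\ins^A_n$ from left to right, setting $\ins^A_i=\ins_i\theta_i$. Each $\theta_i$ fixes every variable of $\var(\head(\ins^A_{i-1}))$, that is, the frontier and existential variables inherited from the previous instance. Every other variable of $\ins_i$ that already occurs in $\ins^A_1,\ldots,\ins^A_{i-1}$ is sent by $\theta_i$ injectively to a fresh variable.

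This choice gives three things directly:
\begin{itemize}
\item $\stale(c)=\varnothing$, so $c$ is decoupled.
\item Existential variables are never renamed, so $c$ stays $\exists$-disjoint.
\item $\sigma=\bigcup_i\theta_i^{-1}$ is a well-defined substitution with $\ins^A_i\sigma=\ins_i$, because the fresh variables are pairwise distinct. Hence $c$ generalises $t$.
\end{itemize}
It remains to show that $c$ is a chain.

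\textbf{Induction.} Assume $c_{<i}=\ins^A_1\ldots\ins^A_{i-1}$ is a chain, and write $\rho_{c_{<i}}$ for its chain rule. I must show $\rho_{c_{<i}}\iposr\ins^A_i$. By Definition~\ref{def:direct-posr}, take
\[
\I_a=\bigl(\bodyP(\rho_{c_{<i}})\cup(\bodyP(\ins^A_i)\setminus\head(\ins^A_{i-1}))\bigr)\omega,
\]
and let $\I_b$ be $\I_a$ extended by $\head(\ins^A_{i-1})\omega$. Three conditions have to be checked.
\begin{enumerate}
\item \emph{Match of $\rho_{c_{<i}}$ over $\I_a$, unsatisfied.} All of its body is in $\I_a$ by construction. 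The negative bodies of the $\ins^A_j$ are not part of $\rho_{c_{<i}}$, so they impose nothing. For unsatisfiedness, suppose some $\mu_\exists$ gives $\head(\ins^A_{i-1})\mu_\exists\omega\subseteq\I_a$. Since $\omega$ is a bijection and $\sigma$ preserves atoms, applying $\sigma$ maps $\I_a$ into $\cDB(C,f(i-1)-1)$. That would make the match $\cMatch(C,f(i-1))$ satisfied, contradicting the chase definition. For this I use that every body atom of $\ins_j$, $j<i$, and every head atom of $\ins_j$, $j<i-1$, lies in $\cDB(C,f(i-1)-1)$. This holds because $f$ is increasing.
\item \emph{Match of $\ins^A_i$ over $\I_b$, unsatisfied.} All of $\bodyP(\ins^A_i)\omega$ lies in $\I_b$ by construction. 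The negative body of $\ins^A_i$ must avoid $\I_b$. Again, $\I_b\sigma$ is contained in $\cDB(C,f(i)-1)$, and the negative atoms $\bodyN(\ins^A_i)$ map under $\sigma$ to $\bodyN(\ins_i)$. The renaming is injective on the fresh variables, so a negative atom of $\ins^A_i$ lying in $\I_b$ would map to one of $\ins_i$ lying in $\cDB(C,f(i)-1)$. That contradicts $\langle\ins_i,\omega\rangle$ being a match there. Unsatisfiedness over $\I_b$ is transferred along $\sigma$ in the same way.
\item \emph{No match of $\ins^A_i$ over $\I_a$.} Causal connection gives an atom of $\bodyP(\ins_i)$ in $\head(\ins_{i-1})$. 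Since $\theta_i$ fixes the variables of $\head(\ins^A_{i-1})$, this atom is also in $\bodyP(\ins^A_i)\cap\head(\ins^A_{i-1})$. By the definition of $\I_a$, it is then excluded from $\I_a$ unless it coincides with some earlier body or head atom.
\end{enumerate}

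\textbf{Main obstacle.} The delicate point is the last one: showing that this atom is genuinely absent from $\I_a$. This requires that it is not already in $\bodyP(\rho_{c_{<i}})$. I would handle it as follows.
\begin{itemize}
\item If the atom contains an existential variable of $\ins^A_{i-1}$, it is absent from $\I_a$. Existential variables are fresh by $\exists$-disjointness, and they appear in no body of an earlier instance.
\item Otherwise, I would choose the causally connected atom to be one that is newly added at step $f(i-1)$. Such an atom exists because applying an unsatisfied match adds at least one new fact, and the trail's connection should be strengthened to use such a fact. Its preimage then cannot occur among earlier bodies or heads, since those are all contained in $\cDB(C,f(i-1)-1)$ after applying $\sigma$.
\end{itemize}
If that strengthening of the causal connection is not directly available from Definition~\ref{def:trail}, I would instead pick, via the chase, the earliest step at which the body of $\ins_i$ becomes a match. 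I would then argue that this step lies in $f(i-1)$'s contribution, and refine the choice of the connecting atom accordingly.
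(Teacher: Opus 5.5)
Your item 1 has a genuine gap. The canonical $\I_a$ of Definition~\ref{def:direct-posr} contains the side atoms $(\bodyP(\ins^A_i)\setminus\head(\ins^A_{i-1}))\omega$. Your list of facts whose $\sigma$-images lie in $\cDB(C,{f(i-1)-1})$ covers only earlier bodies and heads. Under Definition~\ref{def:trail}, the side atoms of $\ins_i$ are only known to lie in $\cDB(C,{f(i)-1})$. That suffices for your item 2, but not here. These atoms may be derived strictly between steps $f(i-1)$ and $f(i)$, and then they can witness the existential head of $\ins_{i-1}$. This gives no contradiction, because that head is satisfied in $\cDB(C,{f(i)-1})$ anyway.

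No cleverer renaming repairs this. Take $\rho_1\colon a(x)\ruleTo\exists v.\,r(x,v)$, $\rho_2\colon a(x)\ruleTo r(x,x)$ and $\rho_3\colon r(x,y),r(x,z)\ruleTo b(y,z)$, and start from $\{a(k)\}$. Apply $\rho_1$ (adding $r(k,n)$), then $\rho_2$ (adding $r(k,k)$), then $\rho_3$ with $y\mapsto n$, $z\mapsto k$, then the remaining $\rho_3$-matches. With $\omega(x_1)=k$ and $\omega(v_1)=n$, the sequence $\ins_1=a(x_1)\ruleTo\exists v_1.\,r(x_1,v_1)$, $\ins_2=r(x_1,v_1),r(x_1,x_1)\ruleTo b(v_1,x_1)$ is a trail (steps $1$ and $3$, connected through $r(k,n)$). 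It has no stale variables, so your construction returns it unchanged. Its canonical $\I_a=\{a(k),r(k,k)\}$ satisfies $\exists v.\,r(k,v)$. Any other generalisation has the form $a(x')\ruleTo\exists v'.\,r(x',v')$, $r(p,q),r(p,q')\ruleTo b(q,q')$ with $\sigma(q)=v_1\neq x_1=\sigma(q')$. The connecting atom $r(x',v')$ can then only be $r(p,q)$, and the side atom $r(x',q')\neq r(x',v')$ lies in $\I_a$ and witnesses the head. So with Definition~\ref{def:trail} as written the statement itself fails, and item 1 cannot be established.

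Your worry about item 3 is also justified, and your fallback does not rescue it. The step mapping $f$, and hence the predecessor instance, comes with the trail. Moving to the earliest step at which the body of $\ins_i$ becomes available produces a different trail; in the example above it replaces $\ins_1$ by the $\rho_2$-instance. A minimal failure: take $\rho_1\colon b(x)\ruleTo b(x),c(x)$ and $\rho_2\colon b(x)\ruleTo d(x)$ over $\{b(k)\}$. Applying $\rho_1$ and then $\rho_2$ gives a trail connected only through the old fact $b(k)$. No generalisation is a chain, since $\head(\ins^A_1)\setminus\bodyP(\ins^A_1)$ contains only a $c$-atom.

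The missing hypothesis is a stronger connection: $\langle\ins_i,\omega\rangle$ is a match over $\cDB(C,{f(i-1)})$ but not over $\cDB(C,{f(i-1)-1})$. The trails built in the proof of Theorem~\ref{thm:posrT-prec-no-violate} have this property, by minimality of the $a_{x+1}$. The paper's proof uses it tacitly when it treats $\langle\ins_{k+1}\theta_{k+1},\omega\rangle$ as an unsatisfied match over copies of $\cDB(C,{f(k)})$ that needs facts from $\I_\delta$. That proof also departs from the fixed $\I_a$ of Definition~\ref{def:direct-posr}, which your version respects.

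Under this hypothesis your argument goes through:
\begin{itemize}
\item The side atoms of $\ins^A_i$ map under $\sigma$ into $\bodyP(\ins_i)\setminus\head(\ins_{i-1})$. Their $\omega$-image lies in $\cDB(C,{f(i-1)})\setminus\head(\ins_{i-1})\omega\subseteq\cDB(C,{f(i-1)-1})$, so your homomorphism argument for item 1 is valid.
\item Item 3 gets a connecting fact that is new at step $f(i-1)$.
\end{itemize}
For the first point, let $\theta_i$ agree with $\theta_{i-1}$ on $\var({\head(\ins_{i-1})})$, rather than fixing the variables of $\head(\ins^A_{i-1})$. As written, a frontier variable already renamed by $\theta_{i-1}$ would be renamed again and disconnect $\ins^A_i$ from its predecessor.
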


Lemma~\ref{lem:trail-gen-dec-chain} is shown by naming the trail's stale variables apart,
along the lines of Example~\ref{ex:generalised}.
The resulting word may not be a trail any more, but is necessarily a chain.
To show $\rho_{\ins_1\theta_1\ldots\ins_{k-1}\theta_{k-1}}\iposr\ins_k\theta_k$ for all word positions $k$,
we consider $\I_a$ and $\I_b$ obtained from the original trail's witnessing chase and augmented
with additional copies of facts under suitable renamings $\omega^{-1}\theta_1\omega$ up to $\omega^{-1}\theta_k\omega$.

\begin{example}[\cont{ex:generalised}]\label{ex:trail-gen-dec-chain}
    Database $\I_a$ from Definition~\ref{def:direct-posr} will, in addition to $q(c_{x_1},c_{y_1})$,
    contain a fact $q(c_{x_1},c_{y_2})$ matching $\bodyP(\ins_3\theta_3)$ with named apart variable $y_2$.
\end{example}

For ruleset $\R$, we collect all decoupled chains in set $\chains(\R)\subseteq \instances(\R)^*$, which may be infinite.
We write $\rho_1 \posrC \rho_2$ if there is a decoupled chain between instances of $\rho_1$ and $\rho_2$.
With Lemma~\ref{lem:trail-gen-dec-chain}, we know that indeed ${\posrT} \subseteq {\posrC}$, hence:

\begin{corollary}[of Theorem~\ref{thm:posrT-prec-no-violate}]\label{cor:posrTc-restr-altM-free}
    A chase that respects ${\posrC}\circ(\negr\cup\restr)$ does not violate $\negr$ or $\restr$, and therefore
is generating and free of alternative matches.
\end{corollary}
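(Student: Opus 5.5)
The plan is to reduce the corollary to Theorem~\ref{thm:posrT-prec-no-violate} using the inclusion ${\posrT}\subseteq{\posrC}$, which follows from Lemma~\ref{lem:trail-gen-dec-chain}. Let $\prec\coloneqq\negr\cup\restr$. I will tacitly assume, as the corollary's wording suggests, that ${\posrC}\circ\prec$ is a precedence, i.e.\ acyclic (transitively closed if needed).

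\emph{Step 1: the inclusion.} Take a trail from an instance of $\rho_1$ to an instance of $\rho_2$. Lemma~\ref{lem:trail-gen-dec-chain} gives a decoupled chain generalising it. Generalisation uses a substitution $\sigma$ with $\ins^A_i\sigma=\ins^B_i$, and an instance of an instance of $\rho$ is still an instance of $\rho$. Hence the generalising chain also starts with an instance of $\rho_1$ and ends with one of $\rho_2$, so $\rho_1\posrC\rho_2$. Composing with $\prec$ yields ${\posrT}\circ\prec\subseteq{\posrC}\circ\prec$. Consequently ${\posrT}\circ\prec$ is acyclic, and after transitive closure it is a precedence contained in (the closure of) ${\posrC}\circ\prec$.

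\emph{Step 2: respecting transfers downward.} This is the step I expect to be the main obstacle. Suppose a chase $C$ respects ${\posrC}\circ\prec$: at every step $i$, $\cDB(C,i)$ has no unsatisfied match of a rule $\rho$ with $\rho\,({\posrC}\circ\prec)\,\cRule(C,i)$. Since ${\posrT}\circ\prec$ is a subrelation, the set of forbidden pairs only shrinks, so $C$ also respects ${\posrT}\circ\prec$. The subtlety is that Theorem~\ref{thm:posrT-prec-no-violate} is stated for chases respecting exactly ${\posrT}\circ\prec$ when that relation is a precedence. Two points need checking here: that taking the transitive closure does not enlarge the relation beyond what $C$ respects, and that the proof of Theorem~\ref{thm:posrT-prec-no-violate} (first violation, then disrespect) uses only the pairs in ${\posrT}\circ\prec$ itself. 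Both hold, because respecting a larger relation always implies respecting a smaller one.

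\emph{Step 3: conclude.} Theorem~\ref{thm:posrT-prec-no-violate} now applies. Since ${\posrT}$ is reflexive (a single instance is trivially a trail), $\prec$ is contained in ${\posrT}\circ\prec$. Non-violation of ${\posrT}\circ\prec$ therefore gives non-violation of $\negr$ and of $\restr$. Not violating $\negr$ makes the chase generating, and not violating $\restr$ rules out alternative matches, by the facts recalled after Definition~\ref{def:restr}.
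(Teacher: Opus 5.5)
Your proposal is correct and follows the paper's route. Lemma~\ref{lem:trail-gen-dec-chain} gives ${\posrT}\subseteq{\posrC}$, so acyclicity and ``respecting'' both pass from ${\posrC}\circ(\negr\cup\restr)$ down to ${\posrT}\circ(\negr\cup\restr)$, and Theorem~\ref{thm:posrT-prec-no-violate} applies. There are two small slips to fix:

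\begin{enumerate}
\item The endpoint rules are preserved because the lemma's construction renames injectively with $\theta_1=\id$. Your ``instance of an instance'' argument runs the wrong way, since the chain is the more general sequence.
\item Take non-violation of $\negr$ and $\restr$ directly from the theorem's ``in particular'' clause rather than from reflexivity of $\posrT$. That reflexivity is not automatic, because even a one-element trail needs a generating chase that applies the match.
\end{enumerate}
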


However, as the following example shows, there are decoupled chains that do not generalise any trail, i.e. ${\posrT} \subset {\posrC}$ (the containment is proper).

\begin{example}[\cont{ex:trail-gen-dec-chain}]\label{ex:dec-chain-not-gen-trail}
    Recall that---in order to give the witnessing chase for the trail $t = \ins_1 \ins_2 \ins_3$ in Example~\ref{ex:trail}---we had to assume the existence of a rule to derive a fact $a(n_{v_1})$ after applying instance $\ins_2$, which invented the null $n_{v_1}$.
    This fact cannot originate from the input database, as it pertains to a null.
    If, however, the full ruleset were $\R = \{ \rho_1,\rho_2,\rho_3 \}$, $t$ would not be a trail (but still a chain).
\end{example}

If $c=\ins_1,\ldots,\ins_n$ is a chain, then $\ins_i\posr\ins_{i+1}$ for $i\in\set[n-1]$, but the reverse does not hold,
    as attested below:
The inclusion ${\posrC} \subset {\rtHull{(\posr)}}$ is proper. 

\begin{example}\label{ex:dl-posr-seq-not-chain}
    Consider the rules $\rho_1 : r(x,y), a(y) \ruleTo b(x)$, $\rho_2 : b(x) \ruleTo c(x)$, and $\rho_3 : c(x) \ruleTo \exists v.\, r(x,v), a(v)$,
    which represent common types of ontological axioms.
They are pairwise positively relying, i.e. $\rho_1\posr\rho_2\posr\rho_3$, but there are no instances thereof that would form a chain,
    as the way $\rho_1$ derives $b(x)$ ensures that $\head(\rho_3)$ is already satisfied for any match of $\rho_3$ introduced by applying $\rho_1$ and then $\rho_2$.
    Therefore, the relation $\posrC$ is indeed a proper sub-relation of $\rtHull{(\posr)}$.
\end{example}

In Section~\ref{sec:chain-strat} we further refine the composition of $\posrC$ with $\restr$ and $\negr$, and use it to introduce chain stratification.
Afterwards, in Section~\ref{sec:reg-lang}, we examine how $\posrC$ can be computed despite the fact that chains could be arbitrarily long.
 \section{Rule Selection for Chain-Stratified Rulesets}\label{sec:chain-strat}

Having established that respecting ${\posrC} \circ {(\negr\cup\restr)}$ prevents
violation of $\negr$ and $\restr$,
we now take a closer look at the negative or restraint reliance after the chain's last instance:
We introduce relations ${\negrC} \subset {\posrC\circ\negr}$ and ${\restrC} \subset {\posrC\circ\restr}$,
and use them to demarcate a class of rulesets which we call (fully) chain-stratified.
These subsume fully stratified rulesets \cite{KR20:cores}.

\begin{definition}\label{def:restrC}
    A chain $c = \ins_1\ldots\ins_n$ restrains a rule $\rho$ if $\rho_c \restr \rho$ where $\rho_c$ is the chain rule of $c$.
    If $\ins_1$ is an instance of $\rho_1\in\R$, then we write $\rho_1 \restrC \rho$.
\end{definition}

Clearly, $\rho_1 \restrC \rho$ implies $\rho_1 \posrC \circ \restr \rho$,
but the reverse does not hold:

\begin{example}[\contof{ex:dec-chain-not-gen-trail}]\label{ex:restrC}
    The last instance $\ins_3$ restrains both rules
$\rho_4: p(x,x) \ruleTo \exists v. t(v,x,v)$
    and $\rho_5: q(x,y) \ruleTo \exists v.\, t(x,y,v)$ but $\rho_c$ only restrains $\rho_5$. \end{example}

Respecting $\restrC$ still prevents $({\posrC}\circ{\restr})$-violations causing alternative matches.

\begin{restatable}{lemma}{lemTrailRestrC}\label{lem:trail-restrC}
    Let $t=\ins_1\ldots\ins_n$ be a trail and $C$ a witnessing chase for $t$ with step mapping $f$.
    If there is a chase step $i < f(n)$, such that $\cRule(C,i)$ has an alternative match over $\cDB(C,f(n))$ but not over $\cDB(C,f(n)-1)$,
    then $\ins_1 \restrC \cRule(C,i)$.
\end{restatable}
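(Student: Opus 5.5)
The plan is to exhibit the chain $c=\ins_1\ldots\ins_n$ itself (or a decoupled generalisation of it) as the witness for $\rho_c\restr\cRule(C,i)$, reusing the databases of the witnessing chase $C$. By Remark~\ref{obs:trail-is-chain}, $t$ is a chain. Definition~\ref{def:restrC} does not ask for decoupledness, so if the definition of $\restrC$ is read over arbitrary chains, $t$ itself suffices. Otherwise I first pass to a decoupled generalisation via Lemma~\ref{lem:trail-gen-dec-chain} and carry the construction along the renaming described there.

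Write $\rho_2=\cRule(C,i)$, $\mu_2$ for its match at step $i$, and $\mu^A$ for the alternative match of $\tuple{\rho_2,\mu_2}$ over $\cDB(C,f(n))$ that is not one over $\cDB(C,f(n)-1)$. I instantiate Definition~\ref{def:restr} with $\rho_1:=\rho_c$ as follows:
\begin{itemize}
\item $\I'_a:=\cDB(C,i)$, which was obtained by applying $\rho_2$ for $\mu_2$;
\item $\I'_b:=\cDB(C,f(n))$, with $\I'_a\subseteq\I'_b$ because $i<f(n)$;
\item $\mu_1:=\omega$.
\end{itemize}
Three things then need checking.
\begin{enumerate}
\item $\I'_b$ is obtained from $\cDB(C,f(n)-1)$ by applying $\rho_c$ for $\omega$. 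The head of $\rho_c$ is $\head(\ins_n)$ with the same existentials, so the facts added coincide with those added at step $f(n)$, using the same fresh nulls. The body of $\rho_c$ maps under $\omega$ into $\cDB(C,f(n)-1)$: each $\bodyP(\ins_j)\omega\subseteq\cDB(C,f(j)-1)$, and each $\head(\ins_j)\omega$ with $j<n$ was added at step $f(j)<f(n)$. Monotonicity of the chase then places everything in $\cDB(C,f(n)-1)$.
\item The match $\tuple{\rho_c,\omega}$ is unsatisfied in $\cDB(C,f(n)-1)$, since $\rho_c$ has exactly the head and existentials of $\ins_n$ and $\tuple{\ins_n,\omega}$ was unsatisfied there.
\item $\mu^A$ is an alternative match over $\I'_b$ but not over $\I'_b\setminus\head(\rho_c)\omega$. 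Here I must relate $\I'_b\setminus\head(\rho_c)\omega$ to $\cDB(C,f(n)-1)$. The difference is that the former may lack facts of $\head(\ins_n)\omega$ that were already present before step $f(n)$. Those facts contain no fresh nulls and already lie in $\cDB(C,f(n)-1)$. Since $\mu^A$ is not an alternative match over $\cDB(C,f(n)-1)$, and removing facts can only destroy condition~(1) of Definition~\ref{def:altM}, $\mu^A$ fails to be alternative over the smaller database as well. Conditions (2) and (3) do not depend on the database.
\end{enumerate}

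The main obstacle is the decoupled case. There $\rho_c$ is built from renamed instances $\ins_j\theta_j$, so $\omega$ no longer embeds its body into $C$'s databases. I would follow the proof of Lemma~\ref{lem:trail-gen-dec-chain}. That proof augments the databases with renamed copies of facts under $\omega^{-1}\theta_j\omega$, so that $\rho_c$ matches and fires for $\omega$ on the augmented version of $\cDB(C,f(n)-1)$. I then take $\I'_a$, $\I'_b$ to be the correspondingly augmented databases. It remains to verify that the added copies neither create an alternative match over the pre-state nor break the one over $\I'_b$. The copies introduce only fresh renamed terms, and the stale variables are universal with no existentials renamed. Hence any new homomorphic image of $\head(\rho_2)\mu_2$ using the copies could be mapped back through $\sigma$ to one in the original database, contradicting the hypothesis. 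This renaming argument, and making it precise, is where I expect the real work to lie.
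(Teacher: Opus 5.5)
Your argument for the chain $t$ itself is correct, and it would suffice under a literal reading of Definition~\ref{def:restrC}. It uses the same databases $\cDB(C,i)$ and $\cDB(C,{f(n)})$ as the paper. Your item~3 notes that $\I_b\setminus\head(\rho_c)\mu_1$ may be strictly smaller than $\cDB(C,{f(n)-1})$ and that condition~(1) of Definition~\ref{def:altM} is monotone; that is a point the paper passes over.

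However, the paper's proof needs a \emph{decoupled} chain starting with $\ins_1$. This is the version that Algorithm~\ref{alg:chains} and the language of $\chains(\R)$ actually range over. For that case your proposal has a genuine gap. The augmented-copies route, as you describe it (renaming as in Lemma~\ref{lem:trail-gen-dec-chain}, taking $\mu_1=\omega$, and keeping $\mu_2$ and $\mu^A$), fails outright rather than just needing more work. Under $\omega$ the renamed chain rule adds $\head(\ins_n\theta_n)\omega$. This differs from the facts $\head(\ins_n)\omega$ added at step $f(n)$ whenever a stale variable of $\ins_n$ is a frontier variable, while $\mu_2$ and $\mu^A$ live on the original terms.

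Here is a concrete counterexample to that route.
\begin{itemize}
\item Take the rules $q(x,y)\ruleTo p(x)$, $p(x),r(x,y)\ruleTo u(y,y)$ and $\rho_e\colon r(x,y)\ruleTo\exists v.\,u(y,v)$.
\item Run the chase on $\{q(a,b),r(a,b)\}$, applying $\rho_e$ first (adding $u(b,n)$) and then the other two rules.
\item The last two steps form a trail. Its second instance has the stale frontier variable $y_1$ with $y_1\omega=b$.
\item Adding $u(b,b)$ creates an alternative match for the $\rho_e$ step, so the hypothesis of the lemma holds.
\item After renaming $y_1$ to $y_2$, $\rho_c$ under $\omega$ adds only $u(b',b')$ with $b'=y_2\omega\neq b$.
\item Every alternative match for $\tuple{\rho_e,\mu_2}$ must produce an atom $u(b,\cdot)$, which is never among the facts $\rho_c$ adds.
\end{itemize}
So no choice of copies satisfies condition~(3) of Definition~\ref{def:restr}. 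Your worry that copies might create new homomorphic images is beside the point, because condition~(3) concerns a fixed $\mu^A$.

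The missing idea is that Definition~\ref{def:restr}, unlike $\iposr$ in Definition~\ref{def:direct-posr}, does not fix $\mu_1=\omega$, so no copies are needed. Suppose the decoupled chain $c$ generalises $t$ via $\sigma$ (Lemma~\ref{lem:trail-gen-dec-chain}). Then $\rho_c\sigma$ is the chain rule of $t$. Moreover, $\sigma$ leaves $\varE(\ins_n)$ untouched, since only stale (hence universal) variables were renamed.

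Now take $\mu_1:=\sigma\omega$ on the unaugmented chase databases. This gives $\bodyP(\rho_c)\sigma\omega\subseteq\cDB(C,{f(n)-1})$. It also gives $\head(\rho_c)\sigma\omega=\head(\ins_n)\omega$, which is exactly the set of facts added at step $f(n)$, with the same fresh nulls. Your items~1--3 then go through verbatim with $\omega$ replaced by $\sigma\omega$. This is precisely the paper's proof.
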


This is shown by obtaining a pair of databases from the offending chase and applying Definition~\ref{def:restr}. Negative reliances are treated similarly:

\begin{definition}\label{def:negrC}
    A rule $\rho$ negatively relies on a chain $c = \ins_1\ldots\ins_n$ if $\rho_c \negr \rho$.
    If $\ins_1$ is an instance of $\rho_1\in\R$, then we write $\rho_1 \negrC \rho$.
\end{definition}\begin{restatable}{lemma}{lemTrailNegrC}\label{lem:trail-negrC}
    Let $t=\ins_1\ldots\ins_n$ be a trail and $C$ a witnessing chase for $t$ with step mapping $f$.
    If there is a chase step $i < f(n)$ such that $\cMatch(C,i)$ is a match for $\cRule(C,i)$ over $\cDB(C,f(n)-1)$ but not over $\cDB(C,f(n))$,
    then $\ins_1 \negrC \cRule(C,i)$.
\end{restatable}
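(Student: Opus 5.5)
We take the trail $t$ itself as the witnessing chain; by Remark~\ref{obs:trail-is-chain} it is a chain, and its first instance $\ins_1$ is an instance of some $\rho_1\in\R$. Write $\rho:=\cRule(C,i)$ and $\cMatch(C,i)=\tuple{\rho,\mu}$. By Definition~\ref{def:negrC} it suffices to show $\rho_t\negr\rho$. By Definition~\ref{def:negr} this means exhibiting databases $\I_a\subseteq\I_b$ such that $\I_b$ is obtained from $\I_a$ by applying the chain rule $\rho_t$ for $\omega$, together with an unsatisfied match $\tuple{\rho,\mu}$ over $\I_a$ that is no longer a match over $\I_b$.

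First we check that $\rho_t$ applies for $\omega$ at step $f(n)$ of $C$. Match correspondence says $\tuple{\ins_j,\omega}=\cMatch(C,f(j))$ for every $j$. Hence $\bodyP(\ins_j)\omega\subseteq\cDB(C,f(j)-1)$ and $\head(\ins_j)\omega\subseteq\cDB(C,f(j))$. Since $f$ is strictly increasing and the chase is monotone, the whole body of the chain rule (Definition~\ref{def:extend}), instantiated by $\omega$, lies in $\cDB(C,f(n)-1)$. The chain rule has no negative body, so $\tuple{\rho_t,\omega}$ is a match there. Its head and existential variables are exactly those of $\ins_n$. Because $\tuple{\ins_n,\omega}$ is unsatisfied in $\cDB(C,f(n)-1)$, so is $\tuple{\rho_t,\omega}$. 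The nulls that $\omega$ assigns to $\varE(\ins_n)$ are the fresh nulls of step $f(n)$. Therefore $\cDB(C,f(n))=\cDB(C,f(n)-1)\cup\head(\rho_t)\omega$, i.e.\ this step is an application of $\rho_t$ for $\omega$.

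Next we handle the negative side. Since $\tuple{\rho,\mu}$ is a match over $\cDB(C,f(n)-1)$, we have $\bodyP(\rho)\mu\subseteq\cDB(C,f(n)-1)$ and $\bodyN(\rho)\mu\cap\cDB(C,f(n)-1)=\varnothing$. Since it is not a match over $\cDB(C,f(n))$, some atom of $\bodyN(\rho)\mu$ lies in $\head(\ins_n)\omega$. This is exactly the "not a match over $\I_b$" condition.

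The main obstacle is the requirement in Definition~\ref{def:negr} that $\tuple{\rho,\mu}$ be \emph{unsatisfied} over $\I_a$. With $\I_a=\cDB(C,f(n)-1)$ this fails, because $\rho$ was already applied at step $i<f(n)$, so its head is present. I would therefore shrink $\I_a$ to
\[
\I_a:=\bodyP(\rho_t)\omega\cup\bodyP(\rho)\mu,\qquad \I_b:=\I_a\cup\head(\rho_t)\omega .
\]
Then the following properties carry over to subsets: $\tuple{\rho_t,\omega}$ stays a match and stays unsatisfied; the absence of $\bodyN(\rho)\mu$ from $\I_a$ is preserved; and the blocking atom in $\head(\ins_n)\omega$ still lies in $\I_b$. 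It remains to show that no extension of $\mu$ maps $\head(\rho)$ into $\I_a$.

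This is the delicate point. I would first try to argue from the structure of $\I_a$: it is built only from body atoms of the instances and of $\rho$, plus heads of earlier instances. If some head image of $\rho$ landed in these atoms, I would try to derive a contradiction with the chase step $i$ having been unsatisfied in $\cDB(C,i-1)$, tracking which of these atoms already existed before step $i$. Should a head atom of some $\ins_j$ with $f(j)>i$ witness satisfaction, I would fall back on removing the offending atoms from $\I_a$ where they are not needed for the match of $\rho_t$. Failing that, I would weaken the claim to the version of $\negr$ used in \cite{KR20:cores}, where satisfaction of the blocked match is not needed for the precedence argument.
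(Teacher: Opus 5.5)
You correctly identify the real difficulty. Definition~\ref{def:negr} requires $\tuple{\rho,\mu}$ to be \emph{unsatisfied} over $\I_a$, and for $\I_a=\cDB(C,f(n)-1)$ this fails, because step $i<f(n)$ already added its head. The paper's proof uses exactly this $\I_a$, together with the decoupled generalisation of $t$ from Lemma~\ref{lem:trail-gen-dec-chain}, $\mu_1=\sigma\omega$ and $\mu_2=\cHom(C,i)$, and simply asserts that the match is unsatisfied there; your objection applies to it too. However, your proposal leaves this step open, and shrinking $\I_a$ cannot close it, because $\bodyP(\rho_t)\omega$ may itself contain $\head(\rho)\mu\mu_\exists$.

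Take $\rho\colon p(x)\land\neg r(x)\ruleTo q(x)$ and $\rho_1\colon q(x)\ruleTo r(x)$, start from $\{p(e)\}$, and apply $\rho$ and then $\rho_1$, both for $x\mapsto e$. With $t=\ins_1$ ($n=1$, $f(1)=2$, $i=1$), everything before your ``delicate point'' goes through: step~2 applies $\rho_t$, and its head $r(e)$ blocks $\tuple{\rho,\mu}$. But your $\I_a=\{p(e),q(e)\}$ contains $\head(\rho)\mu=q(e)$, and removing it destroys the match of $\rho_t$. In fact $\rho_1\negrC\rho$ is false here. Since $r$ occurs in no positive body, all chains from $\rho_1$ have length~1. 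And any $\I_a$ on which $\rho_1$ derives some $r(d)$ already contains $q(d)$. So no argument built only from the ingredients you use can succeed. Your fallbacks either fail (deleting atoms) or prove a different statement (changing the definition of $\negr$).

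The ingredient you never invoke is that the chase in this example is not generating, whereas Definition~\ref{def:trail} requires the witnessing chase $C$ of a trail to be generating. That observation proves the lemma outright. Suppose $\cMatch(C,i)=\tuple{\rho,\mu}$ is a match over $\cDB(C,f(n)-1)$ but not over the larger database $\cDB(C,f(n))$. Positive body atoms persist, so some atom of $\bodyN(\rho)\mu$ lies in $\cDB(C,f(n))\subseteq\chase_C(\I,\R)$. Hence $\cMatch(C,i)$ is not a match in the chase result, contradicting that $C$ is generating. The lemma's hypothesis is therefore never satisfied, and the statement holds vacuously. No $\negr$-witnesses need to be built. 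It also no longer matters whether you use $t$ itself or its decoupled generalisation (the paper uses the latter, since $\negrC$ is meant over $\chains(\R)$).
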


With the above improvements, we can finally define our novel stratification condition, which we call \emph{chain stratification}:

\begin{definition}\label{def:chain-stratified}
    A ruleset $\R$ is \emph{core chain-stratified} if $\langle \R, \restrC \rangle$ is acyclic.
    It is \emph{(fully) chain-stratified} if $\langle \R, \negrC \cup \restrC \rangle$ is acyclic.
\end{definition}

\begin{corollary}[of Theorems~\ref{thm:posrT-prec-no-violate} and \ref{thm:posrT-prec-unique-chase-result}, and Lemma~\ref{lem:trail-gen-dec-chain}]\label{cor:core-chain-stratified}
On a chain-stratified ruleset the result of any chase that respects $\negrC \cup \restrC$ is generating, alternative match-free, and unique (up to isomorphism).
 If finite, it is a perfect core model.
\end{corollary}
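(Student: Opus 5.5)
We re-run the argument behind Theorems~\ref{thm:posrT-prec-no-violate} and~\ref{thm:posrT-prec-unique-chase-result}, using ${\negrC}\cup{\restrC}$ in place of ${\posrT}\circ(\negr\cup\restr)$. Because $\R$ is chain-stratified, $\langle\R,\negrC\cup\restrC\rangle$ is acyclic, so its transitive closure $\prec$ is a precedence. We will show that a chase respecting $\negrC\cup\restrC$ also respects $\prec$; it then suffices to work with $\prec$.

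\textbf{Step 1 (no violation).} Suppose a chase $C$ respects $\prec$ but at some step $i<j$ applies a match $\cMatch(C,i)$ that is later destroyed or acquires an alternative match. Take the least such $j$ at which the match stops being a match, or at which a new alternative match for $\cRule(C,i)$ appears. Trace backwards from $\cMatch(C,j)$ along causal connections. We take the longest sequence of earlier steps $f(1)<\dots<f(n)=j$ such that each $\cMatch(C,f(k))$ uses a fact produced by step $f(k-1)$, and such that $f(1)\le$ the last step at which the $\prec$-relevant condition could still have been checked.
\begin{itemize}
\item[(a)] More precisely, we choose the first step $f(1)>i$ on this causal path. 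If no such step exists, the path starts before $i$.
\item[(b)] The sequence of instances $\cRule(C,f(k))\mu_{f(k)}\omega^{-1}$ is a trail, witnessed by $C$ itself. Here we use that $C$ is generating up to $j$, by minimality of $j$.
\end{itemize}
By Lemma~\ref{lem:trail-negrC} or Lemma~\ref{lem:trail-restrC}, we get $\rho_{f(1)}\mathrel{(\negrC\cup\restrC)}\cRule(C,i)$.

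\textbf{Step 2 (contradiction with respecting).} At step $f(1)$ we must derive a contradiction. The trail from $f(1)$ to $j$ certifies a potential interaction, so at the time of step $f(1)$ the match $\cMatch(C,i)$ has already been applied. This alone is not yet a contradiction; the argument has to follow the exact shape used in Theorem~\ref{thm:posrT-prec-no-violate}. There one considers the \emph{first} violation of the precedence, with $\cRule(C,a)\prec\cRule(C,b)$ for $a>b$. The match $\cMatch(C,a)$ became unsatisfied-and-present at some point. Tracing its causal trail back to a step $f(1)\le b$ (or to the database), we conclude that at step $b$ there was an unsatisfied match of a rule that is $\prec$-smaller than $\cRule(C,b)$. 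This contradicts respecting. Lemma~\ref{lem:trail-gen-dec-chain} is what guarantees that the trail yields a decoupled chain, so that the chain relations $\negrC$ and $\restrC$ apply; this is how the corollary cites it. Non-violation then gives generating and alternative-match-free, exactly as in the known results on $\negr$ and $\restr$.

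\textbf{Step 3 (uniqueness and core).} Uniqueness follows as in Theorem~\ref{thm:posrT-prec-unique-chase-result}. Given two respecting chases, we build homomorphisms step by step in stratum order of $\prec$. Absence of alternative matches together with the generating property ensures that each application in one chase is mirrored in the other. If the common result is finite, it is a core by \cite[Theorem~11]{KR20:cores}, hence it is the perfect core model.

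\textbf{Main obstacle.} The hard part is Step 2: showing that the specific chain relation $\negrC/\restrC$ suffices, rather than the coarser composition ${\posrC}\circ(\negr\cup\restr)$. This requires that the trail extracted from the first violation begins at a rule whose unsatisfied match existed when the violating rule was applied. Lemmas~\ref{lem:trail-negrC} and~\ref{lem:trail-restrC} are tailored to provide exactly this.
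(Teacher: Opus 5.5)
Your overall plan is the intended one: rerun the proofs of Theorems~\ref{thm:posrT-prec-no-violate} and~\ref{thm:posrT-prec-unique-chase-result}, and upgrade the pattern ``trail, then $\negr$/$\restr$'' to a single $\negrC$/$\restrC$-edge via Lemmas~\ref{lem:trail-negrC} and~\ref{lem:trail-restrC}, which rest on Lemma~\ref{lem:trail-gen-dec-chain}. Your Step~1 setup is also right: take the least step $j$ at which an applied $\cMatch(C,i)$ is invalidated or gains a new alternative match. But the contradiction is never correctly derived.

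\textbf{The transitive-closure reduction.} This step is unproved and false in general. Respecting a relation $R$ does not imply respecting $R^{+}$: if $\rho_1 R \rho_2 R \rho_3$, a step applying $\rho_3$ while only $\rho_1$ has an unsatisfied match respects $R$ but not $R^{+}$.

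\textbf{Step~2.} Here you fall back on the rule-level template of Theorem~\ref{thm:posrT-prec-no-violate}, namely a first pair $a>b$ with $\cRule(C,a)\prec\cRule(C,b)$. Tracing back from $a$ only yields a rule that is $(\posrT\circ\prec)$-below $\cRule(C,b)$, and your chase is not assumed to respect that relation. Lemmas~\ref{lem:trail-negrC}/\ref{lem:trail-restrC} cannot upgrade this, because they require step $a$ to \emph{actually} invalidate $\cMatch(C,b)$ or create a new alternative match for it. Rule-level non-violation is also more than the corollary asserts, and it is not what these lemmas deliver.

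\textbf{What is missing.} You need the right start of the trail and the right place for the contradiction, which is step $i$, not $f(1)$. Trace back from $j$ as with the $a_x$ in the proof of Theorem~\ref{thm:posrT-prec-no-violate}: each predecessor is the \emph{least} step at which the successor's match became a match. An arbitrary or ``longest'' causal path does not work, since its first instance may still use facts derived after $i$. Let $s$ be the earliest traced step with $s>i$; its predecessor is $\le i$.
\begin{itemize}
\item Databases only grow, and $\cMatch(C,s)$ is an unsatisfied match in $\cDB(C,s-1)$. Hence it is already an unsatisfied match when step $i$ is taken.
\item Lemma~\ref{lem:trail-negrC} or~\ref{lem:trail-restrC}, applied to the trail from $s$ to $j$, gives $\cRule(C,s)\mathrel{(\negrC\cup\restrC)}\cRule(C,i)$ directly. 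This contradicts that $C$ respects $\negrC\cup\restrC$ at step $i$.
\item If the predecessor of $s$ is $i$ itself, start the trail at $i$. You then obtain a self-loop, contradicting acyclicity.
\end{itemize}
No transitive closure is needed.

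\textbf{Step~3.} This is not the argument of Theorem~\ref{thm:posrT-prec-unique-chase-result}, and it skips that proof's crux. There the trail lives in $C_2$, while the invalidated or satisfied match $\cMatch(C_1,i)$ comes from $C_1$. In $C_2$ it is only an unsatisfied match in $\cDB(C_2,k-1)$, not an applied step. Lemmas~\ref{lem:trail-negrC}/\ref{lem:trail-restrC} as stated only cover a match applied earlier in the same chase. You therefore have to redo their construction in this cross-chase setting, e.g.\ with $\I_a=\cDB(C_2,k-1)$, to obtain the $\negrC$/$\restrC$-edge that $C_1$ fails to respect. ``Building homomorphisms in stratum order'' does not supply this edge.
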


For chain-stratified rulesets, the rule precedence $\negrC\cup\restrC$ induces a ``stratification'' where rules may belong to multiple strata,
obtained by first computing the minimum-rank layering $L_0\ldots L_n$ of the precedence,
and then forming strata $S_i = \bigcup_{j\leq i} L_j$. Successively chasing fixpoints for $S_0\ldots S_n$ ensures that no new unsatisfied matches for any $(\negrC\cup\restrC)$-predecessors of $S_i$ are added.
 \section{The Regular Language of Chains}\label{sec:reg-lang}

We now address the issue of deciding whether there is an (arbitrarily long) decoupled chain connecting instances of two given rules.
Enumerating the set $\chains(\R)$
merely yields a semi-decision procedure,
because a ruleset can have infinitely many chains.
Therefore, we characterise $\chains(\R)$ with a language $\L(\R)$ and establish that $\L(\R)$ is regular,
which yields a decision procedure.
Our restriction to \emph{decoupled} chains (Section~\ref{sec:chains}) is essential for this to work.

We assign to each $c\in\chains(\R)$ a label $\ell(c)\in \Alphabet$ from some (yet to be defined) alphabet $\Alphabet$.
Intuitively, $\ell(c)$ captures all information needed for reliance checks between the
chain rule $\rho_c$ and any rule from $\R$, but in bounded space that does not grow arbitrarily with (the body of) $\rho_c$.
This requires some auxiliary notation.
For ruleset $\R$, the finite set
$\variants(\R) \coloneqq \{ \rho\theta \mid \rho\in\R, \theta: \var(\rho)\to \var(\R) \}$\phantomsection\label{eq:variants}
is obtained by replacing variables in $\R$ by any combination of variables found in $\R$.
Further, let $\orig: \instances(\R) \to \variants(\R)$ be an arbitrary but fixed mapping such that
$\ins = \orig(\ins)\theta_{\orig(\ins)}$ with injective variable renaming $\theta_{\orig(\ins)}$.

\begin{example}[\contof{ex:instances}]\label{ex:variants}
    Let $\R = \{ \rho_1,\rho_2,\rho_3 \}$. Then $\var(\R) = \{ x,y,z,v \}$ and, e.g.,
    $\check{\rho}^a_1: t(v,z,y), q(v,z) \ruleTo p(v,y)$ and $\check{\rho}^b_1: t(x,y,x), q(x,y) \ruleTo p(x,x)$
    both are variants of $\rho_1$.
    In our example instance $\ins_1$ of rule $\rho_1$, the variables in the first and the third position of predicate $t$ are collapsed.
    So we can set $\orig(\ins_1) = \check{\rho}^b_1$
    with the injection $\theta_{\orig(\ins_1)}: x\mapsto x_1, y\mapsto y_1$, which yields $\ins_1 = \orig(\ins_1)\theta_{\orig(\ins_1)}$.
\end{example}

Moreover, given a set $\mathcal{F}$ of formulae $\exists\vec{v}.\, \psi$ where $\psi$ is a conjunction of atoms,
let $\subformulae(\mathcal{F}) \coloneqq \{ \exists \tilde{\vec{v}}.\, \tilde{\psi} \mid\exists\vec{v}.\, \psi\in\mathcal{F}, \tilde{\psi}\subseteq\psi, \tilde{\psi}\neq\varnothing, \tilde{\vec{v}}=\vec{v}\cap\var(\tilde{\psi}) \}$.

\begin{example}
    Consider the set $\subformulae(\pieces(\variants(\R)))$ of ``partial pieces of rule variants.'' The rule $a(x) \ruleTo \exists v.\, r(x,v), b(v)$ has a single piece, and contributes three partial pieces: (1) $\exists v.\, r(x,v), b(v)$,
    (2) $\exists v.\, r(x,v)$, and (3) $\exists v.\, b(v)$. 
    Another variant of the same rule is $a(x) \ruleTo r(x,x), b(x)$,
    which has two partial pieces $r(x,x)$ and $b(x)$ (but not $r(x,x), b(x)$, which is no piece).
\end{example}

To understand how we label a chain $c$, consider the chain rule $\rho_c$ and some rule $\rho\in\R$. 
To check $\rho_c\posr\rho$, Definition~\ref{def:posr} requires $\I_a$ and $\I_b$ such that:
(i) the positive body of $\rho$ matches $\I_b$ but not $\I_a$ (i.e., $\rho_c$ contributed something relevant to $\I_b$);
(ii) the negative body of $\rho$ is not matched in $\I_b$; and
(iii) applying $\rho$ adds something over $\I_b$ (match unsatisfied).
To check these, we store three components: 
($\labAname$) pieces of $\rho_c$ that $\underline{\opfont{c}}$ontribute new facts when applying $\rho_c$ ($\leadsto$ (i));
($\labBname$) partial pieces $\underline{\opfont{s}}$atisfied after applying $\rho_c$ ($\leadsto$ (iii)); and
($\labCname$) variants of rules inhibited by $\underline{\opfont{n}}$egative body atoms after applying $\rho_c$ ($\leadsto$ (ii)).

Now we can define the finite alphabet
$\Alphabet \coloneqq 2^{\mathcal{H}} \times 2^{\mathcal{F}} \times 2^{\variants(\R)}$
where $\mathcal{H} \coloneqq \pieces(\variants(\R))$ and $\mathcal{F} \coloneqq \subformulae(\mathcal{H})$.
To define the \emph{labelling function} $\ell$, consider a chain $c = \ins_1\ldots\ins_n$ with chain rule $\rho_c$,
and let $\theta_n\coloneqq \theta_{\orig(\ins_n)}$, i.e., $\ins_n=\orig(\ins_n)\theta_n$.
Moreover, let $\I_{c} \coloneqq \left(\bodyP(\rho_c)\cup\head(\rho_c)\right)\theta_n^{-1}\omega$,
which represents the necessary facts in $\I_b$ when checking $\rho_c\posr\rho$ for any $\rho\in\R$ (Definition~\ref{def:posr}),
in the notation of Definition~\ref{def:direct-posr} with the fixed mapping $\omega$.
We define $\ell(c) = \langle \labA(c), \labB(c), \labC(c) \rangle$ where:
\begin{align}
\labA(c) &\coloneqq \{ \exists \vec{v}.\tilde{\psi} \in \pieces(\orig(\ins_n)) \mid \tilde{\psi}\theta_n \not\subseteq \bodyP(\rho_c) \} \label{eq:alpha-label}\\
\labB(c) &\coloneqq \{ \exists\tilde{\vec{v}}.\tilde{\psi} \in \mathcal{F} \mid \I_{c} \models \exists\tilde{\vec{v}}.\tilde{\psi}\omega_\forall \} \label{eq:beta-label}\\
\labC(c) &\coloneqq \{ \check{\rho} \in \variants(\R) \mid \left(\I_{c}\cup\bodyP(\check{\rho})\omega\right) \cap \bodyN(\check{\rho})\omega \neq \varnothing \} \label{eq:gamma-label}
\end{align}
where $\omega_\forall$ is the restriction of $\omega$ to universal variables.
Note that $\labA(c)$ contains all head pieces of $\rho_c$ except possibly Datalog atoms.
The main correctness property of $\ell$ is this: if $\ell(c_1)=\ell(c_2)$ for chains $c_1$ and $c_2$, then for all $\rho\in\R$
and all ${\prec}\in\{\posr,\negr,\restr\}$, we have that $\rho_{c_1}\prec\rho$ iff $\rho_{c_2}\prec\rho$.
For $\posr$, the proof of this claim establishes that the three label components can indeed be used as in the intuition given above.
Cases for $\negr$ and $\restr$ are similar.
We can then show:

\newcounter{counterLemEquiLabelChains}\renewcommand{\thecounterLemEquiLabelChains}{(\roman{counterLemEquiLabelChains})}\begin{restatable}{lemma}{lemEquiLabelChains}\label{lem:equi-label-chains}
    Let $c_1,c_2\in\chains(\R)$ with $\ell(c_1)=\ell(c_2)$. If $c_1\ins_1\in\chains(\R)$ then there is $\ins_2\in\instances(\R)$ (both $\ins_1$ and $\ins_2$ belong to the same $\rho\in\R$), such that
    \begin{center}\begin{tabular*}{.8\textwidth}{@{\extracolsep{\fill}} l l l}
        {\normalfont\refstepcounter{counterLemEquiLabelChains}\label{lem:equi-label-chains:1}\thecounterLemEquiLabelChains} $c_2\ins_2\in\chains(\R)$  
        & and &
        {\normalfont\refstepcounter{counterLemEquiLabelChains}\label{lem:equi-label-chains:2}\thecounterLemEquiLabelChains} $\ell(c_1\ins_1)=\ell(c_2\ins_2)$ .
    \end{tabular*}\end{center}
    \setcounter{counterLemEquiLabelChains}{0} \end{restatable}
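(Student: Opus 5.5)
The plan is to build $\ins_2$ from $\ins_1$ by a renaming that depends only on what the shared label can see. Write $\check\rho\coloneqq\orig(\ins_1)$ and let $\theta_{n}^{(j)}$ be the renaming for the last instance of $c_j$, with $\orig$ of that instance being $\check\rho_j$. Since $c_1\ins_1$ is a \emph{decoupled} chain, every universal variable of $\ins_1$ either lies in $\var(\head(\ins^{(1)}_{n}))$ or is fresh for $c_1$; the existential variables of $\ins_1$ are fresh by $\exists$-disjointness. We would first show that $\labA(c_1)=\labA(c_2)$ forces the last instances of $c_1$ and $c_2$ to have the same variant up to the relevant head pieces. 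The contributing pieces are elements of $\pieces(\variants(\R))$, so they carry the original variable names, and the heads agree after pulling back by $\theta^{(j)}_n$. We then define $\ins_2\coloneqq\check\rho\,\theta'$. Here $\theta'$ sends a variable $x$ with $x\theta_{\orig(\ins_1)}\in\var(\head(\ins^{(1)}_n))$ to $x\theta_{\orig(\ins_1)}(\theta^{(1)}_n)^{-1}\theta^{(2)}_n$. All other variables of $\check\rho$ it sends injectively to variables fresh for $c_2$. Then $\ins_2$ is an instance of the same rule $\rho$, $c_2\ins_2$ is $\exists$-disjoint and decoupled, and $\orig(\ins_2)$ can be taken equal to $\check\rho$.

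For (i), we must show $\rho_{c_2}\iposr\ins_2$ given $\rho_{c_1}\iposr\ins_1$. We unfold Definition~\ref{def:direct-posr} into the three conditions listed before the lemma. Condition (i'): some body atom of $\ins_1$ is produced by a contributing piece, which is read off from $\labA$. Condition (ii'): the negative body of $\ins_1$ is not hit in $\I_b$. Since $\I_b$ is, modulo the fresh body facts of $\ins_1$, the image of $\I_{c_1}$, this is read off from $\labC$ applied to the variant $\check\rho$. Condition (iii'): the head of $\ins_1$ is not satisfied. Non-satisfaction is a statement about the partial pieces of $\head(\check\rho)$ restricted to the old variables. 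Each atom of the head either touches only fresh variables, where it is unsatisfiable as soon as it contains a new variable, or it is a subformula in $\mathcal{F}$ evaluated on $\I_{c_1}$, hence recorded in $\labB$. Because all three conditions depend on $c_1$ only through $\ell(c_1)$, and the pulled-back identification $\theta'$ matches them, they transfer to $c_2$ and $\ins_2$. This is essentially the announced claim that $\ell(c_1)=\ell(c_2)$ implies $\rho_{c_1}\posr\rho\iff\rho_{c_2}\posr\rho$, specialised to the concrete instance.

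For (ii), we compute the three components of $\ell(c_j\ins_j)$. The set $\labA$ depends only on $\check\rho$ and on which head pieces of $\ins_j$ already lie in $\bodyP(\rho_{c_j\ins_j})$. Only the body of $\ins_j$ and head atoms of the last instance can coincide with them, and these correspond under $\theta'$. The sets $\labB$ and $\labC$ are evaluated on $\I_{c_j\ins_j}$ after pulling back by $\theta_{\orig(\ins_j)}$. We would show that the part of $\I_{c_j\ins_j}$ reachable from $\var(\check\rho)$ is the image of (facts of $\I_{c_j}$ visible to the label) $\cup$ (body and head of $\ins_j$). Facts about variables not in $\ins_j$ can contribute to a partial-piece satisfaction only through the shared frontier and existential variables of the previous instance, because the chain is decoupled. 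Those contributions are exactly captured by $\labB(c_j)$ and $\labC(c_j)$.

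We expect the main obstacle to be this last locality argument. A partial piece in $\mathcal{F}$ with existential variables may be witnessed by a mixture of old facts deep in $\bodyP(\rho_{c_j})$ and new facts of $\ins_j$. We then have to argue that such mixed witnesses decompose along the interface variables $\var(\head(\text{last instance}))$. The first approach to try is a case split on whether the witnessing homomorphism maps existential variables into old or new terms, using $\labB(c_j)$ for the old subparts, which are again partial pieces. If needed, we would strengthen the induction so that $\labB$ is closed under such subparts.
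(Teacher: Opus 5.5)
Your overall plan matches the paper's. The step you defer as the ``main obstacle'' is where the proof's content lies, and it is already needed in (i).

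\textbf{Where your argument breaks.} Up to renaming stale variables apart, your $\ins_2$ is the paper's $\ins_1\theta_n^{-1}\theta_m$. Like the paper, part (i) transfers the three conditions of $\iposr$ through $\labA$, $\labB$, $\labC$, and part (ii) rests on a locality argument for $\labB$. Your treatment of (iii') decides head satisfaction atom by atom, and that does not work:
\begin{itemize}
\item A head atom with fresh variables can be satisfied by a fact of $\bodyP(\ins_1)$, so ``unsatisfiable as soon as it contains a new variable'' is false.
\item A head piece with existential variables can be witnessed partly by facts of $\I_{c_1}$ and partly by facts of $\bodyP(\ins_1)$. Its satisfaction is therefore not a single entry of $\labB(c_1)$.
\end{itemize}
A smaller inaccuracy affects $\labA(c_j\ins_j)$. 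A Datalog head atom of $\ins_j$ over interface variables can coincide with an atom deep inside $\bodyP(\rho_{c_j})$, not only with the last head. What decides this is $\labB(c_j)$, not the correspondence under $\theta'$.

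\textbf{The missing ingredient.} This is the paper's Claim~($\ddagger$): if $\subformulae(\variants(\mathcal{F}))=\mathcal{F}$ and $X$ shares with $\I_1\cup\I_2$ only symbols from $\var(\R)\omega$, then $\I_1\equiv_{\mathcal{F}}\I_2$ implies $\I_1\cup X\equiv_{\mathcal{F}}\I_2\cup X$. Its proof is the case split you anticipate, but it must be three-way. Partition the existential variables of a witness in $\I_2\cup X$ by whether they go to the shared symbols $A=\var(\R)\omega$, to symbols occurring only in $X$, or to symbols occurring only in $\I_2$. Atoms touching the second kind lie wholly in $X$, and atoms touching the third kind lie wholly in $\I_2$.

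The decisive step, which your sketch does not make, is to \emph{instantiate} the existentials sent into $A$ by the corresponding variables of $\var(\R)$ (via $\mu^A_\exists\omega^{-1}$) before consulting the label. If they stayed quantified, the witness found in $\I_1$ could take different values at the interface and would no longer join with the unchanged $X$-part. After instantiation, the old part $\exists\vec{v}_C.\,\tilde{\psi}_1\mu^A_\exists\omega^{-1}$ is again in $\mathcal{F}$, because variants may send existential variables to arbitrary variables of $\var(\R)$. So no strengthening of $\labB$ is needed.

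Applied with $X$ the $\omega$-image of the new variant's facts, this gives the $\labB$ equality in (ii). The same decomposition settles (iii') in (i).
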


Part \ref{lem:equi-label-chains:1} largely follows from the correctness for $\posr$ by Definitions~\ref{def:extend} and \ref{def:chain}.
Part \ref{lem:equi-label-chains:2} is easy to see for $\labA$ and $\labC$; for $\labB$, it 
holds because $\mathcal{F}$ is closed under $\subformulae(\variants(\cdot))$ and
the instances added to the end of the chains are decoupled.

The \emph{word function} is $w: \chains(\R)\to\Alphabet^*$ with $c \mapsto \ell(c)$ for $|c|=1$,
$c \mapsto w(c_\pre)\ell(c)$ for $c=c_\pre\ins_n$.
The \emph{language of chains} is $\L(\R)=\{ w(c) \mid c\in\chains(\R) \}$.
Now given two words $u,v\in\L(\R)$ with $u_{|u|} = v_{|v|}$, $uw\in\L(\R)$ implies $vw\in\L(\R)$ for any $w\in\Alphabet^*$,
which follows from an inductive application of Lemma~\ref{lem:equi-label-chains}.
An equivalence relation on words based purely on their last letter therefore
fully characterises their \emph{extensions} into longer words. By Myhill-Nerode, we find that:

\begin{restatable}{theorem}{thmLrRegular}\label{thm:LR-regular}
    The language $\L(\R)$ is regular.
\end{restatable}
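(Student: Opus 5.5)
We will show regularity of $\L(\R)$ by exhibiting a finite automaton directly, which is equivalent to the Myhill--Nerode argument sketched above but makes the bookkeeping explicit. The states are $\Alphabet\cup\{q_0\}$, where $q_0$ is a fresh initial state; $\Alphabet$ is finite because $\variants(\R)$, $\mathcal{H}$ and $\mathcal{F}$ are finite. Every state in $\Alphabet$ is accepting and $q_0$ is not. There is a transition $q_0\xrightarrow{a}a$ iff $a=\ell(\ins)$ for some single instance $\ins\in\instances(\R)$; every length-one sequence is a chain and is trivially decoupled. There is a transition $a\xrightarrow{b}b$ iff there exist $c\in\chains(\R)$ and $\ins\in\instances(\R)$ with $\ell(c)=a$, $c\ins\in\chains(\R)$ and $\ell(c\ins)=b$. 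This defines a finite relation over a finite alphabet, so the automaton $A$ is an NFA (in fact deterministic in the sense that the target state equals the letter read).

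The claim is that $L(A)=\L(\R)$, and we prove both inclusions by induction on word length. For $\L(\R)\subseteq L(A)$, take $c=\ins_1\ldots\ins_n\in\chains(\R)$. Every prefix of $c$ is a decoupled chain: being a chain is prefix-closed by Definition~\ref{def:chain}, and staleness of a prefix is a subset of staleness of the whole sequence by Definition~\ref{def:decoupled}. The prefixes $c_{\le i}$ therefore witness the transitions $\ell(c_{\le i-1})\to\ell(c_{\le i})$, so $w(c)$ is accepted along the run $q_0,\ell(c_{\le1}),\ldots,\ell(c)$.

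For $L(A)\subseteq\L(\R)$, we show by induction on the length of an accepted word $u=a_1\ldots a_m$ that there is $c\in\chains(\R)$ with $w(c)=u$; this is the invariant we need. For $m=1$ the instance witnessing $q_0\to a_1$ serves. For the step, suppose $w(c)=a_1\ldots a_{m-1}$ by induction, and let $c',\ins'$ be the witnesses of the transition $a_{m-1}\to a_m$. Then $\ell(c')=a_{m-1}=\ell(c)$ and $c'\ins'\in\chains(\R)$. Lemma~\ref{lem:equi-label-chains} applied to $c'$ and $c$ yields $\ins$ with $c\ins\in\chains(\R)$ and $\ell(c\ins)=\ell(c'\ins')=a_m$. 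Hence $w(c\ins)=u$.

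The main obstacle is not the automaton argument but the fact that Lemma~\ref{lem:equi-label-chains} must produce a \emph{decoupled} chain $c\ins$ that is still $\exists$-disjoint. Here we rely on the lemma as stated, since it asserts membership in $\chains(\R)$. If one were to worry about it, the fix is to rename the variables of $\ins$ that do not occur in $\head(\text{last instance of }c)$ to fresh ones; this preserves both decoupledness and the label. A secondary point is that the automaton exists without our having to compute the transition relation effectively. Regularity only needs finiteness, and decidability is a separate concern for later sections.
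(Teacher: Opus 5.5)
Your proof is correct and follows the paper's argument. Both rest on Lemma~\ref{lem:equi-label-chains} to show that the final letter of a word in $\L(\R)$ determines its continuations, and your automaton with states $\Alphabet\cup\{q_0\}$ is the paper's Myhill--Nerode equivalence (words grouped by final letter) written out as an explicit automaton. Your version also makes explicit two points the paper's relation ${\sim_{\L(\R)}}\cup(\Alphabet^*\setminus\L(\R))^2$ leaves implicit: prefix-closure of $\chains(\R)$, and a separate treatment of the empty word. The paper's relation in fact harmlessly places $\epsilon$, which has extensions, in the same class as the other non-members, which have none.
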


\def\Greliance{\mathcal{G}_\opfont{rel}}\def\assign{\mathrel{\mkern-.5mu\coloneqq\mkern-.5mu}}\SetKw{Continue}{Continue}
\SetKw{And}{and}
\SetKw{With}{with}
\SetKwFor{For}{for}{:}{}
\SetKwIF{If}{ElseIf}{Else}{if}{then :}{else if}{else}{}
\begin{algorithm}[t]
\caption{Is ruleset $\R$ chain-stratified?}\label{alg:chains}
$\Greliance \assign {\langle\R,\posr,\negr,\restr \rangle}$\tcp*[r]{compute the reliance graph}
    \lIf{$\Greliance$ has no cycle via ${\negr} \cup {\restr}$}{
        \Return ``$\R$ is fully stratified''
    }
    ${\bm{\prec}} \assign {{\negr} \cup {\restr}}$\tcp*[r]{precedence from $\Greliance$ (will approach ${\negrC}\cup{\restrC}$)}
    $\mathcal{C} \assign \{ \langle \rho,\rho,\rho,\epsilon \rangle \mid \rho \in \R$ and there is $\rho'\in\R$ with $\rho'\negr\rho$ or $\rho'\restr\rho$ in $\Greliance \}$;\label{line_cinit}\\\For{$\langle \rho_1,\rho_n, \rho_c, w(c_\pre) \rangle \in \mathcal{C}$ \And $\rho\in\R$ \With $\rho_n {\posr}\rho$ in $\Greliance$ \And $\rho_c {\posr}\rho$ \label{line_cloop}}{
        \lIf{$\bm{\prec}$ is cyclic}{\Return ``$\R$ is \underline{not} chain stratified''}\label{line_return_false}
        $\ins \assign $ injectively rename $\rho$, such that it does not use $\var(\rho_c)$;\\
        $V \assign \var(\head(\rho_c) \cup \bodyP(\ins))$;$\quad$ $C \assign \symbols(\head(\rho_c) \cup \bodyP(\ins))$;\\
        \For{$\eta: V\to V\cup C$ \With $\rho_c\eta \iposr \ins\eta$\label{line_etaloop}}{
            \lIf{$\ell(c\eta) \in w(c_\pre)$}{\Continue}\label{line_norepeat}
            $\rho_{c\eta \ins\eta} \assign (\bodyP(\rho_c)\cup\head(\rho_c)\cup\bodyP(\ins))\eta \ruleTo \exists \varE(\ins)\eta .\, \head(\ins)\eta$;\label{line_chainrule}\\
            $\mathcal{C} \assign \mathcal{C} \cup \{ \langle \rho_1, \rho, \rho_{c\eta \ins\eta}, w(c_\pre)\ell(c\eta) \rangle \}$;\label{line_addchain}\\
            \For{$\rho'\in\R$ \With $\rho \restr \rho'$ in $\Greliance$ \label{line_loop_restr}}{
                \lIf{$\rho_{c\eta \ins\eta} \restr \rho'$}{${\bm{\prec}} \assign {\bm{\prec}} \cup \{ \langle \rho_1, \rho' \rangle \}$ \label{line_new_restr}}
            }
            \For{$\rho'\in\R$ \With $\rho \negr \rho'$ in $\Greliance$ \label{line_loop_negr}}{
                \lIf{$\rho_{c\eta \ins\eta} \negr \rho'$}{${\bm{\prec}} \assign {\bm{\prec}} \cup \{ \langle \rho_1, \rho' \rangle \}$  \label{line_new_negr}}
            }
        }
    }
    \Return ``$\R$ is chain stratified'';\label{line_return_true}
\end{algorithm}
Algorithm~\ref{alg:chains} outlines a procedure that uses these insights to check chain stratification.
We represent chains $c=\ins_1\ldots\ins_n$ as 4-tuples $\tuple{\rho_1,\rho_n,\rho_c,w(\ins_1\ldots\ins_{n-1})}$.
We start with single-rule chains (line~A\ref{line_cinit}) for rules in the range of ${\negr}$ or ${\restr}$; this 
suffices since we search for ${\negrC\cup\restrC}$-cycles and ${\negrC\cup\restrC}$ only ranges over such rules.
We then iteratively extend $\mathcal{C}$ and $\bm{\prec}$ (A\ref{line_cloop}) until a cycle is found (A\ref{line_return_false})
or all chains have been considered and no cycle found (A\ref{line_return_true}).
For each $\rho_c\posr\rho$, we consider all concrete mappings $\eta$ such that $\rho_c\eta \iposr \ins\eta$ (A\ref{line_etaloop}),
if the $\eta$-specific chain's label was not encountered yet (A\ref{line_norepeat}).
Since chains are extended step-wise, presence of a label in $w(c_\pre)$ means a label-equivalent chain was already considered 
(and all related restraints and negative reliances found).
Then we construct the extended chain rule (A\ref{line_chainrule}), add the new chain (A\ref{line_addchain}, this adds to the cases iterated in A\ref{line_cloop}),
and add new pairs to $\bm{\prec}$ (A\ref{line_loop_restr}--\ref{line_new_negr}).

Keeping a hash set of seen chain labels
and checking if labels of new chains are already present
alleviates the need to store $w(c)$ as a whole.
Chain labels can even be canonised, as Lemma~\ref{lem:equi-label-chains} generalises to chains with isomorphic labels.

\section{Further Improving Chain Stratification}\label{sec_refinements}

Chain stratification generalises full stratification, which controls the interaction of
value invention and negation, and solves, e.g., Problem~2 (\alglineref{rule_father}--\alglineref{rule_different_fathers})
of Section~\ref{sec:intro}. Some generalisations, outlined next, allow us to cover more cases, e.g., Problem~1.

\myparagraph{Closure under constraints}
All reliances require a minimal set of facts (up to homomorphism), e.g., $\I_a$ in Def.~\ref{def:posr}. 
If the Datalog rules $\RpD\subseteq\R$ entail $\bot$ on these facts, then the reliance can be discarded \cite{MKH13:reliances}.
For chains $c$, we could apply $\RpD$ to $\I_c$ (see Section~\ref{sec:reg-lang}),
but for Algorithm~\ref{alg:chains} to be correct, we also must ensure that the label (not the specific chain) determines if
$\bot$ is derived or not.

We therefore consider additional formulae $\mathcal{B}$ to capture partial matches of Datalog rule bodies:
$\mathcal{B} \coloneqq \subformulae(\{ \exists \vec{v} .\, \bodyP(\check{\rho}) \mid \check{\rho} \in \variants(\R)_D, \vec{v} \subseteq \varA(\check{\rho}) \})$.
In $\ell(c)$, we will replace $\labB(c)\subseteq\mathcal{F}$ \eqref{eq:beta-label} by $\labB_D(c)\subseteq\mathcal{F}\cup\mathcal{B}$.
We convert sets $S\subseteq\mathcal{F}\cup\mathcal{B}$ 
to databases $\opfont{toDb}(S)\coloneqq \bigcup
    \big\{ \tilde{\psi}\omega_\forall \mu_\exists \mid \exists\tilde{\vec{v}}.\tilde{\psi} \in S,
       \mu_\exists: v \mapsto \fresh(v, \exists\tilde{\vec{v}}.\tilde{\psi}) \big\}$,
using the mapping $\omega$ and an auxiliary injection $\fresh: \var(S) \times S \injTo \N \setminus (\var(\R)\omega)$.

For chain $c=\ins_1\ldots\ins_n$ with $\theta_i\coloneqq\theta_{\orig(\ins_i)}$,
we recursively define $\labB_D(c)$, where
we use
$\opfont{Pre}_c \coloneqq \varnothing$ if $n=1$ and $\opfont{Pre}_c \coloneqq \opfont{toDb}(\labB_D(\ins_1\ldots\ins_{n-1}))\omega^{-1}\theta_{n-1}$ otherwise:
\begin{align}
\begin{split}
    \labB_D(c) \coloneqq 
    \{ &
        \exists\tilde{\vec{v}}.\tilde{\psi} \in \mathcal{F} \cup \mathcal{B}
        \mid{} \\
      &  \chase((\opfont{Pre}_c \cup \bodyP(\ins_n) \cup \head(\ins_n)) \theta_n^{-1}\omega,\RpD)
        \models
        \exists\tilde{\vec{v}}.\tilde{\psi}\omega_\forall
    \}
\end{split}
\end{align}
Compared to \eqref{eq:beta-label}, we replaced $\I_c$ by a chase result based on the previous chain label and the new final rule,
which has a homomorphism to $\chase(\I_c,\RpD)$.
Relying on the previous label lets us lift Lemma~\ref{lem:equi-label-chains} and related correctness results to $\labB_D$.
We also recast the extended label as a rule:
\begin{equation}\label{eq:chain-rule-dlclosure}
    \rho_c^D : \opfont{toDb}(\labB_D(c))\omega^{-1}\theta_n \setminus \head(\ins_n) \ruleTo \exists \varE(\ins_n) .\, \head(\ins_n)
\end{equation}
An $\exists$-disjoint sequence $c$ can be \emph{extended} by $\ins$ \emph{under constraints} if $\rho^{D}_c \iposr \ins$.
It is a \emph{chain} if $\bot\notin\bodyP(\rho^{D}_c)$ and Definition~\ref{def:chain} holds.
We write $\rho_1 \restrCD \rho_2$ (respectively $\rho_1 \negrCD \rho_2$)
if there is $c$ with $\rho^{D}_c \restr \rho_2$ (respectively $\rho^{D}_c \negr \rho_2$).

\begin{definition}\label{def_chainstratRD}
    $\R$ is \emph{chain-stratified under constraints} if ${\negrCD} \cup {\restrCD}$ is acyclic.
\end{definition}

Definition~\ref{def_chainstratRD} generalises chain stratification since only fewer chains are considered.
Correctness is retained if the chase prioritises Datalog rules:

\begin{corollary}\label{cor_chainstratRD}
    If $\R$ is chain-stratified under constraints,
    a chase that respects ${\prec_D}\coloneqq {\negrCD} \cup {\restrCD} \cup (\RpD \times (\R\setminus \RpD))$
    is generating and alternative match-free.
\end{corollary}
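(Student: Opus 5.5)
We follow the proof of Theorem~\ref{thm:posrT-prec-no-violate} and its Corollary~\ref{cor:core-chain-stratified}, but restrict attention to chases that apply every Datalog rule before any non-Datalog rule. Concretely, we argue by contradiction. Suppose a chase $C$ respects $\prec_D$ but violates $\negr$ or $\restr$. Then there are steps $i<j$ where the rule applied at $j$ invalidates the match applied at $i$ (for $\negr$), or where it creates an alternative match for it (for $\restr$). We pick the earliest such violation. As in Theorem~\ref{thm:posrT-prec-no-violate}, we trace back from step $j$ a causal sequence of rule applications starting after step $i$. This sequence is a trail $t=\ins_1\ldots\ins_n$ with $f(n)=j$. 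It starts from some application that occurred when $\cRule(C,i)$ no longer had unsatisfied matches pending in a way that $\prec_D$ forbids. Lemmas~\ref{lem:trail-negrC} and~\ref{lem:trail-restrC} then give $\ins_1 \negrC \cRule(C,i)$ or $\ins_1 \restrC \cRule(C,i)$. The goal is to strengthen this to $\negrCD$ or $\restrCD$. That yields a pair in $\prec_D$, and respecting it contradicts the construction of the trail, exactly as in the original argument.

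The key new step is to show that the trail, via Lemma~\ref{lem:trail-gen-dec-chain}, is generalised by a decoupled sequence that is a chain \emph{under constraints}. We also need $\rho^D_c$ to still restrain, or be negatively relied on by, $\cRule(C,i)$. The intended invariant, proved by induction on $k$, is this. The database $\opfont{toDb}(\labB_D(\ins_1\ldots\ins_k))$, suitably renamed, maps homomorphically into $\cDB(C,f(k))$ extended by the Datalog closure present at that point. Since $C$ respects $\RpD\times(\R\setminus\RpD)$, every non-Datalog application, including the one at $f(k+1)$, happens only after all Datalog matches are satisfied. So $\cDB(C,f(k+1)-1)$ is closed under $\RpD$. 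Hence it contains a homomorphic image of $\chase((\opfont{Pre}_c\cup\bodyP(\ins_k)\cup\head(\ins_k))\theta_k^{-1}\omega,\RpD)$, by monotonicity of the Datalog chase under homomorphisms. This gives the inductive step. It also gives $\bot\notin\bodyP(\rho^D_c)$, since we treat $\cDB(C,\cdot)$ as $\bot$-free: otherwise the constraint-violating input is excluded and the statement is vacuous for it. The positive reliance $\rho^D_c\iposr\ins_{k+1}$ then follows as in Lemma~\ref{lem:trail-gen-dec-chain}. We use databases from $C$ with renamed copies, using that $\rho^D_c$'s body is contained, up to homomorphism, in what $C$ has derived, and that its body is no larger than needed for the match. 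The same databases witness $\rho^D_c\restr\cRule(C,i)$ or $\rho^D_c\negr\cRule(C,i)$, as in Lemmas~\ref{lem:trail-negrC} and~\ref{lem:trail-restrC}.

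A further case arises if Datalog steps occur inside the trail. Datalog rules carry no negation or existentials, so they cannot directly produce a $\negr$ or $\restr$ violation as the first instance. Such steps are either absorbed into the $\RpD$-closure recorded in $\labB_D$, or kept as ordinary chain instances, and both options preserve the argument. Generating-ness and freedom from alternative matches then follow from non-violation of $\negr$ and $\restr$, as stated after Definition~\ref{def:restr}.

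The main obstacle is the homomorphism bookkeeping in the inductive invariant. $\labB_D$ is computed only from the previous \emph{label}, not from the full chain body, so the chase on $\opfont{Pre}_c$ is weaker than $\chase(\I_c,\RpD)$. We must show it is still strong enough for the reliance checks, yet mappable into the real chase databases. The fresh nulls introduced by $\opfont{toDb}$ for existential parts of partial pieces need care. I would first try mapping them to whichever witnesses in $\cDB(C,\cdot)$ make the corresponding formulae true. Decoupledness ensures that no stale variable forces these witnesses to coincide with something incompatible.
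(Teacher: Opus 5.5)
Your outline is the argument the paper has in mind. The paper itself only remarks that correctness is retained when the chase prioritises Datalog rules. However, your treatment of Datalog instances inside the trail does not go through, and the problem is not bookkeeping. You conclude that the database before step $f(k+1)$ is $\RpD$-closed, but that holds only when $\ins_{k+1}$ is non-Datalog. Trails traced back from a non-Datalog step can pass through Datalog steps, and for those neither of your two options works.

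\emph{Keeping} a Datalog instance $\ins_{k+1}$ after $\ins_k$ fails. For $c=\ins_1\ldots\ins_k$, the label $\labB_D(c)$ is computed from the $\RpD$-closure of a set that already contains $\head(\ins_k)$. Whenever $\bodyP(\ins_{k+1})$ lies in that closure, $\head(\ins_{k+1})$ is recorded in the label and so lies in $\bodyP(\rho^D_c)$. Then $\langle\ins_{k+1},\omega\rangle$ is satisfied in $\I_b$, and $\rho^D_c\iposr\ins_{k+1}$ fails. Your homomorphism invariant does not help here: $\cDB(C,{f(k+1)-1})$ is not $\RpD$-closed, since $C$ derives $\head(\ins_{k+1})$ only at step $f(k+1)$.

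\emph{Absorbing} the Datalog step fails too. $\iposr$ needs a body atom of the next instance in $\head(\rho^D_c)\setminus\bodyP(\rho^D_c)$, i.e.\ in the last head itself. The atom supplied by the absorbed Datalog step lies in $\bodyP(\rho^D_c)$, so the next match already exists over $\I_a$.

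Concretely, take $\rho_1\colon p(x)\land\neg q(x)\ruleTo t(x)$, $\rho_2\colon s(x)\land\neg u(x)\ruleTo r(x)$, $\rho_3\colon r(x)\ruleTo m(x)$ and $\rho_4\colon m(x)\ruleTo q(x)$.
Chains under constraints starting with $\rho_2$ cannot be extended at all, so the only pair in ${\negrCD}\cup{\restrCD}$ is $\langle\rho_4,\rho_1\rangle$. On $\{p(a),s(a)\}$, the chase applying $\rho_1,\rho_2,\rho_3,\rho_4$ never has a pending $\prec_D$-predecessor before any of its steps, which is the reading of ``respects'' you use. Yet step~4 invalidates the match of step~1. (Read literally, as a condition on $\cDB(C,i)$ after the step, ``respects'' excludes this chase, because step~2 leaves $\rho_3$ pending. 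But then the proof must rest on the fact that no non-Datalog step may ever create a pending Datalog match, which yours does not use.)

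Second, ``a causal sequence starting after step $i$'' silently drops the case where the back-trace returns to step $i$ itself. In the proof of Theorem~\ref{thm:posrT-prec-no-violate}, that case is closed by acyclicity via a self-loop. Here you would need $\cRule(C,i)\negrCD\cRule(C,i)$, and this edge can be missing. The body of the chain rule may contain the head of the very match applied at step $i$, so that match is satisfied in every admissible $\I_a$. Definition~\ref{def:negr}, however, only counts unsatisfied matches. The proof of Lemma~\ref{lem:trail-negrC}, which uses $\I_a=\cDB(C,{f(n)-1})$, has the same defect.

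For example, let $\R=\{p(x)\land\neg q(x)\ruleTo r(x),\ r(x)\land\neg u(x)\ruleTo q(x)\}$. Then $\RpD=\varnothing$ and $\prec_D=\varnothing$, so $\R$ is chain-stratified under constraints and every chase respects $\prec_D$. Yet the chase on $\{p(a)\}$ is not generating. Here even ${\negr}=\varnothing$, so your closing appeal to ``non-violation of $\negr$ implies generating'' fails as well.

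So with the definitions as stated, these steps cannot be completed. The statement itself fails, and the definitions (at least Definition~\ref{def:negr}) must be repaired before an argument of your kind can work. Finally, $\bot$-freeness of the chase is a hypothesis you are adding, not a case in which the statement is vacuous: nothing in the statement excludes inputs on which $\RpD$ derives $\bot$.
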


The relation $\prec_D$
is acyclic if ${\negrCD} \cup {\restrCD}$ is,
since $\prec_D$ ranges over $\R\setminus \RpD$.

\newcommand{\alglinerule}[1]{\rho_{\ref{#1}}}
\newcommand{\alglineins}[1]{\ins_{\ref{#1}}}
\begin{example}\label{ex_problem1_solved}
We return to Problem~1 (\alglineref{rule_ex_intro_student}--\alglineref{rule_constraint_teacher_type}),
denoting the respective rules $\alglinerule{rule_ex_intro_student}, \ldots, \alglinerule{rule_constraint_teacher_type}$.
For $\R=\{\alglinerule{rule_ex_intro_student},\alglinerule{rule_ex_intro_negation},\alglinerule{rule_ex_intro_rdfs5},\alglinerule{rule_ex_intro_rdfs7}\}$,
we find reliances 
$\alglinerule{rule_ex_intro_student} \negr \alglinerule{rule_ex_intro_negation}$,
$\alglinerule{rule_ex_intro_rdfs5} \posr \alglinerule{rule_ex_intro_rdfs5}$,
$\alglinerule{rule_ex_intro_rdfs7} \negr \alglinerule{rule_ex_intro_negation}$, and
$\rho \posr \alglinerule{rule_ex_intro_rdfs7} \posr \rho$ for all $\rho\in\R$ (due to the all-variable triple \nthreeinline{?x ?p ?y} in \alglineref{rule_ex_intro_rdfs7}).
$\R$ is not fully stratified, e.g., due to $\alglinerule{rule_ex_intro_negation} \posr \alglinerule{rule_ex_intro_rdfs7} \posr \alglinerule{rule_ex_intro_student} \negr \alglinerule{rule_ex_intro_negation}$.

$\R$ is not chain stratified either.
For clarity, we substitute constants from the image of $\omega$ directly.
To construct a chain for $\alglinerule{rule_ex_intro_negation} \alglinerule{rule_ex_intro_rdfs7} \alglinerule{rule_ex_intro_student}$, consider instances
$\alglineins{rule_ex_intro_negation} = \alglinerule{rule_ex_intro_negation}$,
$\alglineins{rule_ex_intro_rdfs7}=\alglinerule{rule_ex_intro_rdfs7}[\text{\nthreeinline{?p/:examiner,?q/:participant,?x/?c,?y/?p}}]$, 
and $\alglineins{rule_ex_intro_student} = \alglinerule{rule_ex_intro_student}$.
Then $\alglineins{rule_ex_intro_negation}$ is a length-1 chain, and
$\alglineins{rule_ex_intro_negation}\alglineins{rule_ex_intro_rdfs7}$ is a chain
since $\alglineins{rule_ex_intro_negation} \iposr \alglineins{rule_ex_intro_rdfs7}$, 
with chain rule $\rho_{\ref{rule_ex_intro_negation},\ref{rule_ex_intro_rdfs7}}$:
\begin{nthreelisting}
{?c :teacher ?p . :examiner rdfs:subPropertyOf :participant . ?c :examiner ?p} |\mintlabel{rule_ex_intro_chain_rule}|
    => {?c :participant ?p}.
\end{nthreelisting}
We have $\rho_{\ref{rule_ex_intro_negation},\ref{rule_ex_intro_rdfs7}} \iposr \alglineins{rule_ex_intro_student}$,
so $\alglineins{rule_ex_intro_negation}\alglineins{rule_ex_intro_rdfs7}\alglineins{rule_ex_intro_student}$ is a chain,
hence $\alglinerule{rule_ex_intro_negation} \posrC \alglinerule{rule_ex_intro_student}$.
Similarly, $\rho_{\ref{rule_ex_intro_negation},\ref{rule_ex_intro_rdfs7},\ref{rule_ex_intro_student}} \negr \alglinerule{rule_ex_intro_negation}$,
so we have a cycle $\alglinerule{rule_ex_intro_negation} \negrC \alglinerule{rule_ex_intro_negation}$,
and $\R$ is not chain stratified.

However, the extended $\R=\{\alglinerule{rule_ex_intro_student}, \ldots, \alglinerule{rule_constraint_teacher_type}\}$ is chain stratified under constraints,
where $\RpD = \R \setminus \{\alglinerule{rule_ex_intro_negation}\}$.
The $\posr$-paths from $\alglinerule{rule_ex_intro_negation}$ to $\alglinerule{rule_ex_intro_student}$ that do not revisit $\alglinerule{rule_ex_intro_negation}$ or pass through $\alglinerule{rule_ex_intro_student}$ can be described by a regular expression
$\alglinerule{rule_ex_intro_negation} (\alglinerule{rule_ex_intro_rdfs7} \alglinerule{rule_ex_intro_rdfs5}^\ast)^\ast \alglinerule{rule_ex_intro_rdfs7} \alglinerule{rule_ex_intro_student}$,
and each such path corresponds to one or more chains.
Similarly, $\alglinerule{rule_ex_intro_negation}(\alglinerule{rule_ex_intro_rdfs7} \alglinerule{rule_ex_intro_rdfs5}^\ast)^\ast \alglinerule{rule_ex_intro_rdfs7}$, for $\alglinerule{rule_ex_intro_negation}$ to $\alglinerule{rule_ex_intro_rdfs7}$.
All chains from $\alglinerule{rule_ex_intro_negation}$ to $\alglinerule{rule_ex_intro_student}$ violate constraints $\alglinerule{rule_constraint_teacher_participant}$ or $\alglinerule{rule_constraint_teacher_spo}$,
and the chains from $\alglinerule{rule_ex_intro_negation}$ to $\alglinerule{rule_ex_intro_rdfs7}$ that negatively rely on $\alglinerule{rule_ex_intro_negation}$ violate constraint $\alglinerule{rule_constraint_teacher_type}$.
The above chain $\alglineins{rule_ex_intro_negation}\alglineins{rule_ex_intro_rdfs7}\alglineins{rule_ex_intro_student}$ will be discarded due to $\alglinerule{rule_constraint_teacher_participant}$.
Since \nthreeinline{rdfs:subPropertyOf} is transitive by $\alglinerule{rule_ex_intro_rdfs5}$,
longer chains of the form $\alglinerule{rule_ex_intro_negation} \alglinerule{rule_ex_intro_rdfs7}^+ \alglinerule{rule_ex_intro_student}$
entail the same constraint-violating fact.
Any alternative chain from $\alglinerule{rule_ex_intro_negation}[\text{\nthreeinline{?p/rdf:type}}]$ to $\alglinerule{rule_ex_intro_student}$
via $\alglinerule{rule_ex_intro_rdfs7}[\text{\nthreeinline{?q/rdfs:subPropertyOf}}]$,
entails \nthreeinline{?c rdfs:subPropertyOf rdf:type}.
Then any triple \nthreeinline{?x ?c :Student} entails \nthreeinline{?x rdf:type :Student},
which could create a negative feedback cycle for $\alglinerule{rule_ex_intro_negation}$.
Such chains are discarded due to $\alglinerule{rule_constraint_teacher_spo}$.
Similarly, any chain starting like $\alglinerule{rule_ex_intro_negation} \alglinerule{rule_ex_intro_rdfs7}^+ \alglinerule{rule_ex_intro_rdfs5} \ldots$ 
violates $\alglinerule{rule_constraint_teacher_spo}$.
All surviving chains of the form $\alglinerule{rule_ex_intro_negation} \alglinerule{rule_ex_intro_rdfs7}^\ast \alglinerule{rule_ex_intro_rdfs7}[\text{\nthreeinline{?q/rdf:type}}]$,
whose chain rules negatively rely on $\alglinerule{rule_ex_intro_negation}$,
are eliminated by $\alglinerule{rule_constraint_teacher_type}$.
The remaining chains (all of the form $\alglinerule{rule_ex_intro_negation} \alglinerule{rule_ex_intro_rdfs7}^+$) are unproblematic.

Considering chains is still crucial here.
For a ``full stratification under constraints'' (following \cite{MKH13:reliances}),
we would need to prevent $\alglinerule{rule_ex_intro_negation} \posr \alglinerule{rule_ex_intro_rdfs7}$
with a stronger constraint such as 
\nthreeinline{{:examiner rdfs:subPropertyOf ?q} => false},
which would forbid useful triples, e.g., \nthreeinline{:examiner rdfs:subPropertyOf :teacher}.
\end{example}

\myparagraph{Null awareness}
When building chain rules as in \eqref{eq:chain-rule} (or \eqref{eq:chain-rule-dlclosure} under constraint closure),
the information which variables are existential is lost for all but the last instance.
To prohibit them from unifying with constants,
we can restrict a fixed arbitrary injection $\V\injTo\N$ to $\varE(\ins_{k-1})$ and apply it to the chain rule.
The latter may thus contain nulls, which can eliminate some irrelevant reliances.

\myparagraph{Negation awareness}
Keeping negations of the atoms $\bigcup_{i\in\set[k]} \bodyN(\ins_i)$
in the chain rule's body sharpens the analysis further.
They may prevent some outgoing reliances of the chain rule.
However, this needs to be reflected in $\labC(c)$,
tracking the subset of these atoms that only use frontier variables.
This is clearly bounded.
It suffices, as only such atoms may unify with facts of the representative databases
examined for reliance checks between the chain rule and other rules.

\ifTechnicalReport
\section{Implementation \& Evaluation}\label{sec:impl}

We implement null- and negation-aware chain stratification under $\RpD$-closure
in a Rust library\footnote{
Source code (ca. 13k LoC) available
in the supplementary material statement.
}
and a PoC tool using it,
which accepts Nemo \cite{Ivliev+:Nemo2024} and VLog \cite{UJK:VLog2016} syntax,
as well as (a subset of) N3 syntax (via a Python transpiler).

While our method is designed to address known problematic cases that arise in RDF rules,
there are no representative RDF-based rulesets that could be used to evaluate its \emph{effectiveness}.
However, we can investigate the feasibility of our analysis in terms of \emph{performance}.
To this end, we use our tool to check chain stratification on a benchmark that was also
used by González~\textit{et al}. \cite{GIKM2022} in their analysis of core stratification,
which consists of 201 piece-decomposed, negation-free rulesets with predicate symbols (not just triple).\footnote{They correspond to a subset of the \emph{\href{https://www.cs.ox.ac.uk/isg/ontologies/}{Oxford Ontology Repository}}, accessed 2025-12-01.}
Without negation, core and full chain stratification coincide, but the effort of
computing chains is still realistic.
We ran our experiments on a Linux server (2$\times$QuadCore Intel Xeon 3.5GHz, 768GiB RAM),
with a 15min timeout per analysis.
Each run is repeated thrice and the median time values are reported in our supplementary material.

Overall, $80$ rulesets were classified in less than 0.1sec, $115$ in less than 1sec, $160$ in under 1 min,
and $187$ within the total timeout. The remaining $14$ rulesets that did not finish
in that time each contained over 60,000 rules.
$128$ cases were not core-stratified (the previous best condition for negation-free rules), and required
the analysis of chains. These cases on average required an additional $12.51\%$ of analysis time
in comparison to the check for core stratification.
Considering the relative complexity of our definitions, we consider this to be a very acceptable overhead
that appears to be feasible in practice, especially since many sets of RDF rules are not covered
by any other conditions.

 \fi

\section{Conclusion}\label{sec:conclude}

We established chain stratification to improve beyond prior notions. Augmented by constraints, our approach enables the use of N3 rules like \emph{rdfs7} in combination with blank nodes and negation.
We designed a refined criterion based on the transitive closure of positive reliances, contributed
interesting theoretical results on their decidability, and provided a prototypical implementation in Rust.
Promising directions of future work may include
(1) integration with RDF rule engines, (2) repair of non-chain-stratified rulesets by mining constraints from counter examples generated by reliance computations,
(3) transfer of our finding to the handling of aggregates, and
(4) analysis of the utility of $\posrC$ to establish new termination criteria. Finally, we hope that our insights can also contribute to ongoing standardisation activities of RDF rules.

\begin{credits}

\subsubsection{\ackname}
This work is supported by
Deutsche Forschungsgemeinschaft (DFG, German Research Foundation)
in project number 389792660 (TRR 248, \href{https://www.perspicuous-computing.science/}{Center for Perspicuous Systems})
and 390696704 (CeTI Cluster of Excellence),
by the Bundesministerium für Forschung, Technologie und Raumfahrt (BMFTR, Federal Ministry of Research, Technology and Space)
in the \href{https://www.scads.de/}{Center for Scalable Data Analytics and Artificial Intelligence} (ScaDS.AI), and
in DAAD project 57616814 (\href{https://secai.org/}{SECAI, School of Embedded Composite AI})
as part of the program Konrad Zuse Schools of Excellence in Artificial Intelligence.

\subsubsection{Supplemental Material Statement.}
\ifTechnicalReport
\else
A separate annex with full proofs of Lemmata and Theorems
is published in an extended version of the paper at \url{\arxivLink}.
\fi
The source code of our Rust implementation is published at \url{\gitlabLink}.
\ifTechnicalReport
Refer to \texttt{README.md} to reproduce the results.
\fi

\subsubsection{Declaration of use of Generative AI.}
During the preparation of this work the authors used ChatGPT to ask for Rust programming tips.
Opus 4.6 was used for proofreading.
The text in this paper as well as the source code is fully handwritten.

\end{credits}
 
\bibliographystyle{splncs04}

\ifTechnicalReport
\clearpage

\appendix

\section{Proofs for Section~\ref{sec:trails}}

In the ensuing proofs, we will use the following additional notation:
\begin{itemize}
    \item $\matches(\I)$ for the set of all matches of rules from $\R$ over database $\I$.
    \item $\unsat(\R) \subseteq \matches(\I)$ to collect all the unsatisfied matches over $\I$. \item $\cHom(C,k)$ for $\mu_\forall \mu_\exists$ in the extended match $\cMatch(C,k)=\langle \cRule(C,k), \mu_\forall \mu_\exists \rangle$ applied in step $k$ of chase $C$.
\end{itemize}

\thmPosrTPrecNoViolate*

We will show a slightly stronger version of the theorem for $\posrT\circ\prec$, assuming only that $\negr\subseteq\prec$.
(The specifics of $\restr$ are irrelevant for the proof.)

\begin{proof}[of contrapositive of Theorem~\ref{thm:posrT-prec-no-violate}]
    Let $C$ be a chase sequence that violates $\prec$.
    Then there must be chase steps $i\leq k$ with $\cRule(C,k)\prec \cRule(C,i)$.
    In fact, $i<k$ as there would otherwise be a self-loop in $\posrC\circ\prec$.
    Select such $\langle i,k\rangle$ (lexicographic) minimally.

    Define the following sequence of indices:
    \begin{align}
        a_1 &\coloneqq k \\
        a_{x+1} &\coloneqq \min \{ j \mid \cMatch(C,a_x) \in \matches(\cDB(C,j)) \} \qquad \text{if } i < a_x
    \end{align}
    This sequence is clearly strictly decreasing (and finite). Also, it has length $|a| > 1$.
    For all $x<|a|$, we have $i < a_x$. Only $a_{|a|}$ may be smaller or equal to $i$.

    Due to the minimality of $\langle i,k\rangle$, and because of ${\negr} \subseteq {\prec}$, no negative reliance can be violated in $C$ up to step $k$.
    Therefore, $C$ is generating up to $k$.
    Hence, we know that a match selected in step $a_x$
    that was already available in step $j<a_x$ must also be available in all intermediate steps.

    The second-to-last match $\cMatch(C,a_{|a|-1})$ was already applicable in step $a_{|a|} \leq i$.
    We therefore clearly have $\cMatch(C,a_{|a|-1}) \in \matches(\cDB(C,i))$.
    \begin{center}

\begin{tikzpicture}[>=stealth]

\draw[->] (-1.3,0) -- (5,0);
\node[below] at (-0.5,0) {chase seq.};

\draw (1,-0.1) -- (1,0.1);
\draw (4,-0.1) -- (4,0.1);

\node[below] at (1,0) {$i$};
\node[below] at (4,0) {$k$};

\draw[->, bend right=40] (4,0.2) to node[above,align=center] {violates\\$\cRule(C,k)\prec\cRule(C,i)$} (1,0.2);

\node[below] at (4,-0.6) {$a_1$};
\node[below] at (3.2,-0.6) {$a_2$};
\draw[->, bend left=20] (4,-0.5) to (3.2,-0.5);

\node[below] at (2.5,-0.6) {$a_3$};
\draw[->, bend left=20] (3.2,-0.5) to (2.5,-0.5);

\node[below] at (1.9,-0.4) {$\ldots$};
\node[below] at (0.6,-0.6) {$a_{|a|}$};
\draw[->, bend left=25] (1.3,-0.5) to (0.6,-0.5);

\end{tikzpicture}

     \end{center}
    The sequence of rule instances $t=\ins_1,\ldots,\ins_{|a|}$ with
    \begin{equation}
        \ins_x = \cRule(C,a_{|a|+1-x})\cHom(C,a_{|a|+1-x})\omega^{-1}\qquad\text{for }x\in\set[|a|]
    \end{equation}
    is a trail,
    because the chase $C$ with step mapping $f: \set[|a|] \injTo \cSteps(C)$, $x \xmapsto{f} a_{|a|+1-x}$
    fulfils the two conditions from Definition~\ref{def:trail}.
    Match correspondence follows from the definition of $\ins_x$ based on the matches applied in step $f(x)$ of $C$.
    Causal connection is a consequence of the subsequent sequence item $a_{x+1}$ being the minimal chase step where the match used at $a_x$ was available.
    This means that $\cMatch(C,a_x)$ must map a body atom of $\cRule(C,a_x)$ onto a fact added by applying $\cMatch(C,a_{x+1})$, i.e. $\head(\ins_x)\cap\bodyP(\ins_{x+1})\neq\varnothing$.

    We thus have $\cRule(C,a_{|a|}) \posrT \cRule(C,a_1)$.

    As we set $a_1=k$, we have $\cRule(C,a_1)=\cRule(C,k)$ and we hence $\cRule(C,a_1)\prec \cRule(C,i)$.
    This gives $\cRule(C,a_{|a|}) \posrT \circ \prec \cRule(C,i)$ and both rules were selectable in chase step $i$.
    We also know that $\cMatch(C,a_{|a|}) \neq \cMatch(C,i)$, because $\posrT\circ \prec$ would otherwise have a self-loop.
    Hence, $C$ disrespects $\posrT \circ \prec$,
    as $\cMatch(C,a_{|a|})$ should have been preferred over $\cMatch(C,i)$ in chase step $i$.
\qed
\end{proof}

\thmPosrTPrecUniqeChaseResult*

\begin{proof}[of contrapositive of Theorem~\ref{thm:posrT-prec-unique-chase-result}]
    Let $C_1$ and $C_2$ be two generating chases of (the same) $\KB$ with non-isomorphic results.
    This can only be the case if there is (w.l.o.g) a match that is applied in $C_1$ but never applied in $C_2$.
    Further, there must be such a match, whose application in $C_1$ has added facts which are not present in $C_2$ ($\ddagger$).
    Select a minimal $i\in\cSteps(C_1)$, such that $\cMatch(C_1,i)$ is such a match.
    All matches applied before $i$ in $C_1$ must therefore be applied at some point in $C_2$.
    Let $i_2\in\cSteps(C_2)$ be the chase step at which the last of them was applied.
    The database $\cDB(C_1,i-1) = \I_0 \cup \bigcup_{k<i} \head(\cRule(C_1,k))\cHom(C_1,k)$
    must homomorphically map into $\cDB(C_2,i_2)$ ($\dagger$).
    Thus, any unsatisfied match over $\cDB(C_2,i_2)$ must either (1) also be unsatisfied over $\cDB(C_1,i-1)$ or (2) not be a match there (because the necessary facts were not derived yet).
    In case (2), it must be derivable by a sequence of applications starting with an unsatisfied match over $\cDB(C_1,i-1)$.
    The $\cMatch(C_1,i)$ may either be (a) unsatisfied, (b) satisfied, or (c) invalidated\footnote{by deriving a fact from $\bodyN(\cRule(C_1,i))\cHom(C_1,i)$} over $\cDB(C_2,i_2)$.
    If (a), then, there must be a $k\geq i_2$, such that $\cMatch(C_1,i)$ is no unsatisfied match over $\cDB(C_2,k)$ (due to fairness). Applying $\cMatch(C_2,k)$ has either (b) satisfied $\cMatch(C_1,i)$, or (c) invalidated it.
    In case (b), $\cRule(C_1,i)$ cannot be Datalog, as satisfying it would otherwise not have added facts in $C_1$ which $C_2$ lacks (as is required by $\ddagger$).
    Either $\cRule(C_1,i)$ has both Datalog and existential pieces, which makes it self-restraining.
    This would mean that $C_1$ does not respect the precedence ${\restr} \subseteq {{\posrT}\circ(\negr\cup\restr)}$.
    Or $\cRule(C_1,i)$ has only existential pieces, which means that there is an alternative match for $\cMatch(C_1,i)$ over $\cDB(C_2,k)$,
    i.e. $\cRule(C_2,k)\restr\cRule(C_1,i)$ by Definition~\ref{def:restr}.
    In case (c), $\cRule(C_2,k)\negr\cRule(C_1,i)$ by Definition~\ref{def:negr}.
    Similar to the proof of Theorem~\ref{thm:posrT-prec-no-violate} above,
    we construct a sequence of chase steps from $C_2$ like
    \begin{align}
        a_1 &\coloneqq k \\
        a_{x+1} &\coloneqq \min \{ j \mid \cMatch(C_2,a_x) \in \matches(\cDB(C_2,j)) \}
    \end{align}
    It is finite, as it is clearly strictly decreasing and all sequence items are non-negative. In fact, it ends with $a_{|a|} = 0$.
    By ($\dagger$) and because $C_2$ is generating, we must be able to find an $x$, such that either (1) $\cMatch(C_2,a_x) \in \unsat(\cDB(C_1,i-1))$,
    or (2) there exists a $\langle\rho,\mu\rangle \in \unsat(\cDB(C_1,i-1))$ with appropriate step mapping for $C_2$
    witnessing that $t_\pre = \rho\mu\omega^{-1}\ldots \cRule(C_2,a_x)\cHom(C_2,a_x)\omega^{-1}$ is a trail.
    We then use $a_x \ldots a_1$ to define a sequence of rule instances $t$ and step mapping $f$.
    By Definition~\ref{def:trail}, $C_2$ and $f$ witness that $t$ is a trail.
    So we now have that $\cRule(C_2,a_x) \posrT \cRule(C_2,k) {(\negr \cup \restr)} \cRule(C_1,i)$.
    If (1), there is an unsatisfied match over $\cDB(C_1,i-1)$ for $\cRule(C_2,a_x)$ (and clearly also for $\cRule(C,i)$).
    As ${\posrT}\circ(\negr\cup\restr)$ is acyclic by assumption, $\cRule(C,i) \neq \cRule(C_2,a_x)$.
    So $C_1$ does not respect ${\posrT} \circ (\negr \cup \restr)$,
    as $\cRule(C_2,a_x)$ should have been preferred over $\cRule(C_1,i)$.
    If (2), $\rho \posrT \cRule(C_2,a_x) \posrT \cRule(C_2,k) {(\negr \cup \restr)} \cRule(C_1,i)$. 
    The sequence $t_\pre t$ is also a trail, i.e. $\rho \posrT \cRule(C_2,k) {(\negr \cup \restr)} \cRule(C_1,i)$ 
    and $\langle \rho,\mu\rangle$ should have been preferred over $\cRule(C_1,i)$. Again, $C_1$ disrespects ${\posrT} \circ (\negr \cup \restr)$.
\qed
\end{proof}

\lemPosrTSmallest*

\begin{proof}
    Let ${<} \subset {\posrT}$ be a smaller relation.
    Then there must be some $\rho_1\posrT\rho_2$ with $\rho_1\not<\rho_2$.
    Consider a relation ${\prec} \supseteq {\negr}$, such that there is some $\rho_3$,
    such that $\rho_2\prec\rho_3$ and $< \circ \prec$ is acyclic.
    Suppose towards contradiction that any chase respecting $< \circ \prec$ has no $\prec$ violations.
    Consider a chase sequence $C$ that violates $\rho_2\prec\rho_3$, but no other $\prec$-edges.
    Analogous to the proof of Theorem~\ref{thm:posrT-prec-no-violate},
    select $i<k$ minimally, such that $\cRule(C,k)\prec \cRule(C,i)$
    (in fact, $\cRule(C,k) = \rho_2$ and $\cRule(C,i) = \rho_3$),
    then construct the sequence $(a_x)_{x=1,\ldots |a|}$, and the trail $t=\rho_1,\ldots,\rho_{|a|}$.
    In chase step $i$, both $\cMatch(C,i)$ and $\cMatch(C,a_{|a|})$ are selectable, but not $\cMatch(C,k)$.
    As $\rho_1\not<\rho_2$, we do not have $\cRule(C,a_|a|) = \rho_1 < \circ \prec \rho_3 = \cRule(C,i)$,
    so $C$ respects $<\circ\prec$.
\qed
\end{proof}

\thmTrailUndecidable*

\begin{proof}[by reduction from the halting problem]
    Let $M$ be a Turing machine with initial state $q_0$ and single ``accept'' state $q_f$ and let $w$ be a word.
    It is undecidable whether $M$ has a finite run on $w$ that reaches the ``accept'' state.
    \begin{itemize}
        \item Turing machine $M=\langle Q, \Sigma, \Gamma, \delta, q_0, q_f \rangle$ with
            set of states $Q$,
            input alphabet $\Sigma$,
            tape alphabet $\Gamma\supseteq \Sigma$,
            (partial) transition function $\delta: Q\times\Gamma \to Q\times\Gamma\times\{L,R\}$,
            initial state $q_0$ and
            final state $q_f$.
        \item Elements of $\Gamma^*\times Q\times \Gamma^*$ are configurations of $M$.
            For $\delta(q,a)=\langle q',a',d\rangle$,
            write $\langle xa,q,by\rangle \vdash \langle xa'b,q',y\rangle$ iff $d=R$
            and $\langle xa,q,by\rangle \vdash \langle x,q',a'by\rangle$ iff $d=L$.
        \item For word $w\in\Sigma^*$, a run of $M$ is $\langle w,q_0,\epsilon\rangle \vdash \ldots$.
            If this run is finite and ends with $q_f$, then $M$ accepts $w$.
    \end{itemize}
    Encode $M$ and $w$ into a ruleset $\R(M,w)$
    with two designated rules $\rho_1, \rho_2 \in \R(M,w)$, such that $\rho_1 \posrT \rho_2$ if and only if $M$ holds and accepts $w$.
    In detail:\\\hphantom{.}

    \noindent
    \textbf{Input:} $\langle M, w\rangle$ for word $w\in\Sigma^*$\\
    \textbf{Output:} $\R(M,w)$\hfill(constants in bold)
    \begin{align*}
        \rho_1 :\quad & \ruleTo \exists t_0,x_1,\ldots,x_{|w|} .\, \emph{state}(t_0,\bm{q_0}), \emph{pos}(t_0,x_1), \emph{tape}(t_0,x_1,\bm{w_1}), \emph{next}(x_1,x_2), \\
             &\qquad\quad \ldots, \emph{tape}(t_0,x_{|w|-1},\bm{w_{|w|-1}}), \emph{next}(x_{|w|-1},x_{|w|}), \emph{tape}(t_0,x_{|w|},\bm{w_{|w|}})\\
        & \emph{state}(t,s) \ruleTo \exists t'.\, \emph{step}(t,t')\\
        & \emph{tape}(t,x,c) \ruleTo \exists x'.\, \emph{next}(x,x')\\
        & \emph{tape}(t,x,c), \emph{pos}(t,x'), x\neq x', \emph{step}(t,t') \ruleTo \emph{tape}(t',x,c)\\
        & \emph{state}(t,\bm{s}), \emph{pos}(t,x), \emph{tape}(t,x,\bm{c}), \emph{next}(x',x), \emph{step}(t,t') \\
            &\qquad \ruleTo \emph{state}(t',\bm{s'}), \emph{tape}(t',x,\bm{c'}), \emph{pos}(t',x') \qquad\quad\text{for } \delta(\bm{s},\bm{c})=\langle \bm{s'},\bm{c'},L\rangle\\
        & \emph{state}(t,\bm{s}), \emph{pos}(t,x), \emph{tape}(t,x,\bm{c}), \emph{next}(x,x'), \emph{step}(t,t') \\
            &\qquad \ruleTo \emph{state}(t',\bm{s'}), \emph{tape}(t',x,\bm{c'}), \emph{pos}(t',x') \qquad\quad\text{for } \delta(\bm{s},\bm{c})=\langle \bm{s'},\bm{c'},R\rangle\\
        \rho_2 :\quad & \emph{state}(t_0,\bm{q_0}), \emph{state}(t,\bm{q_f}) \ruleTo \emph{accept}()
    \end{align*}

    A restricted chase $C$ of the $\I$ and $\R(M,w)$ is a faithful simulation, irrespective of $\I$.
    (Even if $\I \neq \varnothing$ and for instance contains a cyclic $\emph{tape}$ or $\emph{step}$ predicate, this is no problem,
    as the initialization rule invents fresh nulls without connection to atoms in $\I$.)
    In case $M$ holds and accepts $w$, the sequence $t=\rho_1\rho_2$ is a trail,
    witnessed by some such chase (representing the finite accepting run) and a corresponding step mapping.
    (The causal connection property is always met due to the $\emph{state}(t_0,\bm{q_0})$ atoms occurring both in $\head(\rho_1)$ and $\bodyP(\rho_2)$.)
    Otherwise, no chase with appropriate step mapping exists and $t$ cannot be a trail.
\qed
\end{proof}
 \section{Proof for Section~\ref{sec:chains}}

\lemTrailGenDecChain*

\begin{proof}
    Let $t=\ins_1\ldots\ins_n$ be a trail.
    By Definition~\ref{def:trail}, $t$ is $\exists$-disjoint and
    there must be a witnessing chase $C$ with step mapping $f: \set[n]\injTo\cSteps(C)$,
    such that $\cMatch(C,f(x)) = \langle\ins_x,\omega\rangle$ and $\head(\ins_x)\cap\bodyP(\ins_{x+1})\neq\varnothing$.

    For each $i\in\set[n]$, consider an injective replacement $\theta_1=\id$ and  for $i>1$
    $\theta_i: \varA(\ins_i)\setminus\var(\head(\ins_{i-1})) \injTo \V \setminus (\bigcup_{j\in\set[i]} \var(\ins_j))$
    of stale variables with fresh ones (as in Example~\ref{ex:generalised}).
    The sequence $\decoupled(t)=\ins_1\theta_1,\ldots,\ins_n\theta_n$ is clearly decoupled (and remains $\exists$-disjoint).
    It generalises $t$ with $\sigma = \theta_1^{-1}\circ\ldots\circ\theta_n^{-1}$.
    (The individual replacements are invertible because they are injective.
    This function composition assumes that the domains of $\theta_i^{-1}$ are extended to $\V$ by identity in the usual way.)

    To show that $\decoupled(t)$ is a chain, by Definition~\ref{def:chain},
    we have to show for each $k\in\set[n-1]$ that $\rho_{\ins_1\theta_1\ldots\ins_k\theta_k} \iposr \ins_{k+1}\theta_{k+1}$.
    We have $\ins_i\theta_i\sigma = \ins_i$ for all $i\in\set[n]$, and
    hence also $\rho_{\ins_1\theta_1\ldots\ins_i\theta_i}\sigma = \rho_{\ins_1\ldots\ins_n}$ for the corresponding chain rules (see Equation~\eqref{eq:chain-rule}).
    So $\langle \rho_i\theta_i, \omega\rangle \in \matches(\I\omega^{-1}\theta_i\omega)$ iff $\langle \rho_i, \omega\rangle \in \matches(\I)$ and
    $\langle \rho_{\ins_1\theta_1\ldots\ins_i\theta_i}, \omega\rangle \in \matches(\bigcup_{j\in\set[i]} \I\omega^{-1}\theta_j\omega)$
    iff $\langle \rho_{\ins_1\ldots\ins_n}, \omega\rangle \in \matches(\I)$.
    Since $\I\omega^{-1}\theta_i\omega \subseteq \bigcup_{j\in\set[i]} \I\omega^{-1}\theta_i\omega$ and
    none of the negative body atoms of the matching rules can match in any of the copies,
    we further have that $\matches(\I\omega^{-1}\theta_i\omega) \subseteq \matches(\bigcup_{j\in\set[i]} \I\omega^{-1}\theta_j\omega)$.

    Let $k\in\set[n-1]$ and define
    \begin{align*}
        \I_\delta &\coloneqq \left(\head(\rho_{\ins_1\theta_1\ldots\ins_k\theta_k}) \setminus \bodyP(\rho_{\ins_1\theta_1\ldots\ins_k\theta_k})\right)\omega \text{,}\\
        \I_b &\coloneqq \textstyle\bigcup_{j\in\set[k]} \cDB(C,f(k))\omega^{-1}\theta_j\omega \text{, and}\\
        \I_a &\coloneqq \I_b \setminus \I_\delta \text{.}
    \end{align*}
    Now consider the matches
    $\lambda_1 \coloneqq \langle \rho_{\ins_1\theta_1\ldots\ins_k\theta_k}, \omega \rangle$ and
    $\lambda_2 \coloneqq \langle \ins_{k+1}\theta_{k+1},\omega \rangle$.
We have $\lambda_1 \in \matches(\bigcup_{j\in\set[k]} \cDB(C,f(k))\omega^{-1}\theta_j\omega) = \matches(\I_b)$ (see above).
    Removal of $\I_\delta$ cannot have removed the match, so $\lambda_1\in\matches(\I_b\setminus \I_\delta)=\matches(\I_a)$.
    The match $\lambda_1$ is unsatisfied over $\I_a$, as $\I_\delta$ is not contained in $\I_a$.
    Thus, $\lambda_1 \in \unsat(\I_a)$ and
    the database $\I_b$ was obtained by applying $\lambda_1$ to $\I_a$.

    Analogously, we have $\lambda_2 \in \unsat(\cDB(C,f(k))\omega^{-1}\theta_k\omega)$, i.e. $\lambda_2 \in \unsat(\I_b)$.
    Also, $\lambda_2 \not\in \matches(\I_a)$, as it requires facts from $\I_\delta$ that are only present in $\I_b$.
    Hence, $\lambda_2 \in \unsat(\I_b) \setminus \matches(\I_a)$. \qed
\end{proof}
 \section{Proofs for Section~\ref{sec:chain-strat}}

\lemTrailRestrC*

\begin{proof}
    To show $\ins_1 \restrC \cRule(C,i)$, by Definition~\ref{def:restrC}
    we need to give a decoupled chain $c$ that starts with $\ins_1$, such that $\rho_c \restr \cRule(C,i)$.
    Obtain $c\in\chains(\R)$ generalising $t$ with $\sigma$, by injectively renaming stale variables as in Lemma~\ref{lem:trail-gen-dec-chain}.
    By Definition~\ref{def:restr}, we thus need to give databases $\I_a\subseteq \I_b$ and mappings $\mu_1, \mu_2$,
    with $\langle \cRule(C,i), \mu_2\rangle \in \unsat(\I'_a)$ and
    $\I_a = \I'_a \cup \head(\cRule(C,i))\mu_2$ and
    $\langle \rho_c, \mu_1 \rangle \in \unsat(\I_a)$ and
    $\I_b = \I'_b \cup \head(\rho_c)\mu_1$,
    such that there is alternative match for $\langle \cRule(C,i), \mu_2\rangle$ over $\I_b$ but not over $\I'_b$.
    Take $\I'_a = \cDB(C,i-1)$ and $\I_a = \cDB(C,i)$ with $\mu_2 = \cHom(C,i)$.
    We clearly have $\langle\cRule(C,i), \mu_2\rangle \in \unsat(\I'_a)$ and $\I_a = \I'_a \cup \head(\cRule(C,i))\mu_2$ by Definition~\ref{def:chase}.
    Take $\I'_b = \cDB(C,f(n)-1)$ and $\I_b = \cDB(C,f(n))$. It is clear that $\I_a \subseteq \I_b$, since $i \leq f(n)-1$.
    Recall that by Equation~\eqref{eq:chain-rule},
    $\bodyP(\rho_c) = \left(\bigcup_{i\in\set[n]} \bodyP(\ins_i)\right)\cup\left(\bigcup_{i\in\set[n-1]} \head(\ins_i)\right)$ and
    $\head(\rho_c) = \head(\ins_n)$.
    We now have:
    \begin{itemize}
        \item $\bodyP(\rho_c)\sigma\omega \subseteq \I'_b$ and $\I'_b \not\models \head(\rho_c) \sigma\omega_\forall$,
            i.e. $\langle \rho_c, \sigma\omega \rangle \in \unsat(\I'_b)$, and
        \item $\I_b = \cDB(C,f(n)+1) = \cDB(C,f(n)) \cup \head(\cRule(C,f(n)))\cHom(C,f(n))$\\
                \hphantom{$\I_b = \cDB(C,f(n)+1)$} $= \I'_b \cup \head(\ins_n)\omega = \I'_b \cup \head(\rho_c)\omega$.
    \end{itemize}
    By assumption, we have an alternative match for $\cRule(C,i)$ over $\I_b$ but not $\I'_b$.
\qed
\end{proof}

\lemTrailNegrC*

\begin{proof}
    Analogous to the above proof of Lemma~\ref{lem:trail-restrC},
    we show by Definition~\ref{def:negrC} $\ins_1 \negrC \cRule(C,i)$ via chain $c$ generalising $t$ with $\sigma$ as in Lemma~\ref{lem:trail-gen-dec-chain}.
    By Definition~\ref{def:negr} we need to give database $\I_a$ and mappings $\mu_1, \mu_2$
    with $\langle \rho_c,\mu_1 \rangle \in \unsat(\I_a)$ and $\I_b = \I_a \cup \head(\rho_c)\mu_1$,
    such that $\langle \cRule(C,i),\mu_2 \rangle \in \unsat(\I_a) \setminus \unsat(\I_b)$.
    This is the case for $\I_a = \cDB(C,f(n)-1)$, $\mu_1 = \sigma\omega$ and $\mu_2=\cHom(C,i)$.
\qed
\end{proof}
 \section{Proofs for Section~\ref{sec:reg-lang}}

\lemEquiLabelChains*

\noindent
We will use some auxiliary notation:
\begin{itemize}
    \item
    When we say formula set, we mean a set of formulae of the form $\exists \vec{v}.\psi$ for conjunction of atoms $\psi$.
    This is essentially a set of pieces.
    \item
    For databases $\I_1$ and $\I_2$ and formula set $\mathcal{F}$, we write $I_1 \equiv_\mathcal{F} I_2$,
    if for all formulae $\exists \vec{v}.\tilde{\psi} \in \mathcal{F}$, we have that
    $\I_1 \models \exists \vec{v}.\tilde{\psi}\omega_\forall$ iff $\I_2 \models \exists \vec{v}.\tilde{\psi}\omega_\forall$.
    \item
    For database $\I$, we write $\symbols(\I)$ for the set of constants and nulls occurring in facts of $\I$.\\
    For formula set $\mathcal{F}$, we write
    $\symbols(\mathcal{F}) = \bigcup_{\exists \vec{v}.\psi \in \mathcal{F}} \varA(\exists \vec{v}.\psi)\omega$
    for the set of constants and nulls injectively assigned to universal variables of $\mathcal{F}$ by $\omega$.
    \item
    We will apply $\variants(\cdot)$ to formula sets, defined exactly like $\variants(\R)$ (page \pageref{eq:variants}) but using $\mathcal{F}$ instead of $\R$.
\end{itemize}

\begin{proof}
    Let $c_1$ and $c_2$ be two decoupled chains with the same label.
    Then $\labA(c_1) = \labA(c_2)$, $\labB(c_1) = \labB(c_2)$, and $\labC(c_1) = \labC(c_2)$.
    Let $\ins_n$ and $\ins_m$ be the last rule instances of $c_1$ and $c_2$, respectively.
    Further, let $\theta_n$ and $\theta_m$ be the two (unique) injective mappings,
    such that $\ins_n = \orig(\ins_n)\theta_n$ and $\ins_m = \orig(\ins_m)\theta_m$.

    Let $\ins_1 \in \instances(\R)$, such that $c_1\ins_1$ is a decoupled chain.
    By Definition~\ref{def:decoupled}, $c_1\ins_1$ is decoupled if and only if all variables from $\var(\ins_1) \setminus \var(\head(\ins_n))$ are fresh w.r.t. $c_1$.
    By Definitions~\ref{def:chain}, $c_1\ins_1$ is a chain if and only if $\rho_{c_1} \iposr \ins_1$.
    Since $\theta_n$ is injective, we can apply its inverse on both sides and get $\rho_{c_1} \theta_n^{-1} \iposr \ins_1 \theta_n^{-1}$,
    which means by Definition~\ref{def:direct-posr} that,
    for the database
    $\I_1 \coloneqq (\bodyP(\rho_{c_1}) \cup \head(\rho_{c_1}) \cup \bodyP(\ins_1)) \theta_n^{-1}\omega$, we have
    \begin{enumerate}[leftmargin=.9cm,label=(1.\alph*)]
        \item\label{it:c1-alpha} the intersection between the atom sets $(\head(\rho_{c_1}) \setminus \bodyP(\rho_{c_1})) \theta_n^{-1}$ and $\bodyP(\ins_1) \theta_n^{-1}$ is non-empty, and
        \item\label{it:c1-beta} the database $\I_1$ does not satisfy $\exists \vec{v} .\, \head(\ins_1)\theta_n^{-1}\omega_\forall$, and
        \item\label{it:c1-gamma} none of the negative body atoms of $\ins_1\theta_n^{-1}$ are in $\I_1$.
    \end{enumerate}

    Let $\theta_1$ be the (unique) injective mapping, such that $\ins_1 = \orig(\ins_1)\theta_1$.

    To prove \ref{lem:equi-label-chains:1}, we want to construct an instance $\ins_2$, such that $c_2\ins_2$ is a chain, which is the case if and only if $\rho_{c_2} \iposr \ins_2$.
    Let $\ins_2 \coloneqq \ins_1 \theta_n^{-1} \theta_m$ be that instance.
    Since $\theta_m$ is injective, we can apply its inverse on both sides and get $\rho_{c_2} \theta_m^{-1} \iposr \ins_2 \theta_m^{-1}$.
    Unfolding $\ins_2$, the right-hand side becomes  $\ins_2 \theta_m^{-1} = \ins_1 \theta_n^{-1} \theta_m \theta_m^{-1} = \ins_1 \theta_n^{-1}$.
    Unpacking Definition~\ref{def:direct-posr} for $\rho_{c_2} \theta_m^{-1} \iposr \ins_1 \theta_n^{-1}$, we need to show for database
    $\I_2 \coloneqq (\bodyP(\rho_{c_2}) \cup \head(\rho_{c_2}) \cup \bodyP(\ins_2)) \theta_m^{-1}\omega$, that
    \begin{enumerate}[leftmargin=.9cm,label=(2.\alph*)]
        \item\label{it:c2-alpha} the intersection between the atom sets $(\head(\rho_{c_2}) \setminus \bodyP(\rho_{c_2})) \theta_m^{-1}$ and $\bodyP(\ins_1) \theta_n^{-1}$ is non-empty, and
        \item\label{it:c2-beta} the database $\I_2$ does not satisfy $\exists \vec{v} .\, \head(\ins_1)\theta_n^{-1}\omega_\forall$, and
        \item\label{it:c2-gamma} none of the negative body atoms of $\ins_1\theta_n^{-1}$ are in $\I_2$.
    \end{enumerate}

    First, we derive \ref{it:c2-alpha}:
    Recall that $\labA(c_1) = \labA(c_2)$ means by Equation~\eqref{eq:alpha-label}, that
    the pieces of $\orig(\ins_n)$, which are not contained in $\bodyP(\rho_{c_1})$ under $\theta_n$,
    are exactly the pieces of $\orig(\ins_m)$, which are not contained in $\bodyP(\rho_{c_2})$ under $\theta_m$.
    Also recall that $\head(\rho_{c_1}) = \head(\orig(\ins_n))\theta_n$ and $\head(\rho_{c_2}) = \head(\orig(\ins_m))\theta_m$.
    Therefore, $(\head(\rho_{c_1}) \setminus \bodyP(\rho_{c_1})) \theta_n^{-1} = (\head(\rho_{c_2}) \setminus \bodyP(\rho_{c_2})) \theta_m^{-1}$.
    With this equality, \ref{it:c2-alpha} directly follows from \ref{it:c1-alpha}.

    Next, we derive \ref{it:c2-beta}:
    Expanding Equation~\eqref{eq:beta-label} at $\labB(c_1) = \labB(c_2)$, we get that
    $(\bodyP(\rho_{c_1}) \cup \head(\rho_{c_1})) \omega \models \exists \vec{v} .\, \tilde{\psi} (\theta_n \omega)_\forall$
    if and only if
    $(\bodyP(\rho_{c_2}) \cup \head(\rho_{c_2})) \omega \models \exists \vec{v} .\, \tilde{\psi} (\theta_m \omega)_\forall$
    for all $\exists \vec{v} .\, \tilde{\psi} \in \mathcal{F} \subseteq \subformulae(\pieces(\variants(\R)))$.
    As $\theta_n$ and $\theta_m$ are both injective, we can apply their inverses to learn that
    $\I_1 \models \exists \vec{v} .\, \tilde{\psi} \omega_\forall$
    if and only if
    $\I_2 \models \exists \vec{v} .\, \tilde{\psi} \omega_\forall$.
    Using the forward direction of this equivalence, \ref{it:c2-beta} follows from \ref{it:c1-beta}.
    The defining condition for $\equiv_\mathcal{F}$ is exactly this equivalence, so $\I_1 \equiv_\mathcal{F} \I_2$, which will later be needed to argue for \ref{lem:equi-label-chains:2}.

    Finally, we derive \ref{it:c2-gamma}:
    By Equation~\eqref{eq:gamma-label}, $\labC(c_1) = \labC(c_2)$ says that
    a variant $\check{\rho}$ has a negative body atom falling into $\I_1$ under $\theta_n$ if and only if it has a negative body atom falling into $\I_2$ under $\theta_m$.
    By \ref{it:c1-gamma}, the variant $\orig(\ins_1) = \ins_1 \theta_n^{-1}$ cannot be one of them, so \ref{it:c2-gamma} follows.

    Having \ref{it:c2-alpha}--\ref{it:c2-gamma}, we conclude that $c_2\ins_2$ is a chain.
    Note that $c_2\ins_2$ might not be decoupled, if the variables from $\var(\ins_1) \setminus \var(\head(\ins_n))$, which necessarily are fresh w.r.t. $c_1$, are not also fresh w.r.t. $c_2$.
    If so, just take any injective mapping $\theta' : \stale(c_2\ins_2) \injTo \V \setminus \bigcup_{i\in\set[m]} \var(\ins_i)$, and $c_2 \ins_2\theta'$ is clearly decoupled.
    As $\theta'$ acts only on the variables in $\ins_2$ not shared with the head of $c_2$'s final instance,
    $\rho_{c_2} \iposr \ins_2\theta'$ holds iff $\rho_{c_2} \iposr \ins_2$ (by an argument similar to the proof of Lemma~\ref{lem:trail-gen-dec-chain}),
    i.e. $c_2 \ins_2\theta'$ is a (decoupled) chain.

    To prove \ref{lem:equi-label-chains:2}, we first establish:

    \begin{claim}[$\ddagger$]
        Let $\mathcal{F}$ be a formula set, such that $\subformulae(\variants(\mathcal{F})) = \mathcal{F}$.
        Let $\I_1$, $\I_2$, and $X$ be databases, such that
        $(\symbols(\I_1 \cup \I_2)) \cap \symbols(X) \subseteq \symbols(\mathcal{F})$.
        Then $\I_1 \equiv_\mathcal{F} \I_2$ implies that $\I_1 \cup X \equiv_\mathcal{F} \I_2 \cup X$.
    \end{claim}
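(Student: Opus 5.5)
We prove the claim by showing one direction of each biconditional, the other being symmetric. Fix $\exists\vec{v}.\tilde{\psi}\in\mathcal{F}$ and suppose $\I_1\cup X\models\exists\vec{v}.\tilde{\psi}\omega_\forall$, witnessed by a homomorphism $h$ extending $\omega_\forall$ on the universal variables, with $\tilde{\psi}h\subseteq\I_1\cup X$. We must produce a witness $h'$ with $\tilde{\psi}h'\subseteq\I_2\cup X$.

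First, split $\tilde{\psi}$ into the atoms mapped into $X$ (call them $\psi_X$) and the remaining atoms $\psi_1$, which are mapped into $\I_1$. Atoms in both $X$ and $\I_1$ go into $\psi_X$, so only $\psi_1$ has to be transported. Let $S$ be the set of existential variables of $\psi_1$ that $h$ sends to symbols outside $\symbols(\mathcal{F})$. Such a variable cannot also occur in $\psi_X$: its image would then lie in $\symbols(\I_1)\cap\symbols(X)\subseteq\symbols(\mathcal{F})$, contradicting the choice of $S$.

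Second, let $\exists\vec{v}'.\psi_1'$ be obtained from $\psi_1$ as follows. Each variable not in $S$ is mapped to $h(x)\omega^{-1}$, which is a variable of the universal part of $\mathcal{F}$ by definition of $\symbols(\mathcal{F})$. The variables in $S$ are kept existential. Renaming these where necessary gives a variant of a subset of $\tilde{\psi}$. Since $\subformulae(\variants(\mathcal{F}))=\mathcal{F}$, this formula lies in $\mathcal{F}$. The main obstacle is exactly here: we must make sure that this grounding of the non-$S$ variables really is a variant, and that the variable names used by $\omega_\forall$ line up with $\var(\R)$. If the renaming cannot be made to land inside $\variants(\cdot)$, I would first try to strengthen the closure assumption to allow substituting universal variables by variables of $\mathcal{F}$. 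By construction $\I_1\models\exists\vec{v}'.\psi_1'\omega_\forall$, so $\I_1\equiv_\mathcal{F}\I_2$ gives a witness $g$ into $\I_2$ for the variables in $S$.

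Third, define $h'$ to agree with $g$ on $S$ and with $h$ elsewhere. Then $\psi_1h'\subseteq\I_2$, and $\psi_Xh'=\psi_Xh\subseteq X$ because no variable of $S$ occurs in $\psi_X$. The two parts therefore do not conflict, and $\tilde{\psi}h'\subseteq\I_2\cup X$. Applying the same argument with the roles of $\I_1$ and $\I_2$ swapped yields $\I_1\cup X\equiv_\mathcal{F}\I_2\cup X$.

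For part (ii) of Lemma~\ref{lem:equi-label-chains}, apply the claim with $X=(\head(\ins_1)\cup\bodyP(\ins_1))\theta_n^{-1}\omega$. Decoupledness ensures that $X$ shares only frontier and existential symbols of the last head with $\I_1$ and $\I_2$. Those symbols lie in $\symbols(\mathcal{F})$.
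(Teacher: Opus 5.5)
Your argument is the paper's own proof. Each of your steps has a counterpart there:
\begin{itemize}
\item splitting $\tilde\psi$ into the atoms that land in $X$ and those that land in $\I_1$ matches the paper's partition into $A,B,C$;
\item the observation that an existential variable shared by both parts must take a value in $\symbols(\mathcal{F})$, and the replacement of all such $\symbols(\mathcal{F})$-valued existentials by their $\omega$-preimages, is the paper's substitution $\mu^A_\exists\omega^{-1}$;
\item transferring across $\I_1\equiv_{\mathcal{F}}\I_2$ and recombining with the untouched $X$-part gives the paper's final witness $\mu^A_\exists\mu^B_\exists\mu^{C'}_\exists$.
\end{itemize}
Two of your choices are harmless refinements: you handle both directions uniformly rather than first disposing of the monotone case, and you rename $S$ apart from the grounded variables, a capture issue the paper passes over.

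The step you flag as the main obstacle is exactly where the paper uses the closure hypothesis. It treats $\exists\vec{v}_C.\,\tilde\psi_1(\mu^A_\exists\omega^{-1})$ as a part of the variant $\check\rho(\mu^A_\exists\omega^{-1})$. This is only legitimate if $\variants(\cdot)$ on formula sets may instantiate an existentially quantified variable by a variable of $\var(\mathcal{F})$, which then becomes free. You should state this reading and commit to it rather than hedge.

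The reading is not optional: if variants may only rename or merge variables, the claim is false. Let $\mathcal{F}$ be the closure of the partial pieces of $a(x)\ruleTo\exists v.\,r(x,v),b(v)$ under renaming and merging within $\{x,v\}$. Take $\I_1=\{r(x\omega,v\omega)\}$, $\I_2=\{r(x\omega,n)\}$ with $n$ a fresh null, and $X=\{b(v\omega)\}$. Then $\I_1\equiv_{\mathcal{F}}\I_2$ and the symbol condition holds. Yet $\exists v.\,r(x,v),b(v)$ holds in $\I_1\cup X$ but not in $\I_2\cup X$.

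Your suggested fallback, allowing \emph{universal} variables to be substituted, does not help, because variants already do that. What is needed is instantiation of \emph{existential} variables; in the example, $r(x,v)$ with both variables free must belong to $\mathcal{F}$.

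The same point affects your last paragraph. Applying the claim requires checking that the concrete $\mathcal{F}=\subformulae(\pieces(\variants(\R)))$ has this stronger closure property. In general it does not, as the example shows. The paper glosses over this too, so in the application $\mathcal{F}$ has to be enlarged accordingly.
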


    \begin{proof}
        Let databases $\I_1 \equiv_\mathcal{F} \I_2$, and $X$ as well as formula set $\mathcal{F}$ be as in the claim.
        This means that $\I_1$ and $\I_2$ agree on satisfaction of formulae from $\mathcal{F}$ under $\omega_\forall$
        and that $X$ uses only fresh symbols, except for $\symbols(F)$.
        Let $\exists \vec{v}.\tilde{\psi}$ be any formula from $\mathcal{F}$.

        To prove the claim, we need to show that
        either both $\I_1 \cup X \models \exists \vec{v}.\tilde{\psi}\omega_\forall$ and $\I_2 \cup X \models \exists \vec{v}.\tilde{\psi}\omega_\forall$
        or neither $\I_1 \cup X \not\models \exists \vec{v}.\tilde{\psi}\omega_\forall$ nor $\I_2 \cup X \not\models \exists \vec{v}.\tilde{\psi}\omega_\forall$.

        In case that $\I_1 \models \exists \vec{v}.\tilde{\psi}\omega_\forall$,
        then $\I_2$ must also satisfy it because of $\I_1 \equiv_\mathcal{F} \I_2$.
        As satisfaction for pieces is monotone,
        we directly get that $\I_1\cup X \models \exists \vec{v}.\tilde{\psi}\omega_\forall$ and $\I_2\cup X \models \exists \vec{v}.\tilde{\psi}\omega_\forall$.

        Consider the case that $\I_1\not\models \exists \vec{v}.\tilde{\psi}\omega_\forall$.
        By $\I_1 \equiv_\mathcal{F} \I_2$, we must also have $\I_2 \not\models \exists \vec{v}.\tilde{\psi}\omega_\forall$.
        Suppose that $\I_2\cup X \models \exists \vec{v}.\tilde{\psi}\omega_\forall$.
        So there exists a mapping $\mu_\exists : \vec{v} \to \C\cup\N$ such that $\tilde{\psi} \omega_\forall \mu_\exists \subseteq \I_2\cup X$.
        It remains to show that $\I_1\cup X \models \exists \vec{v}.\tilde{\psi}\omega_\forall$.

        The image of $\mu_\exists$ can be separated into three disjoint sets
        $A = \symbols(\mathcal{F})$ and $B = \symbols(X) \setminus A$ and finally $C = \symbols(\I_2) \setminus A$.
        Accordingly, we can split $\vec{v}$ into three disjoint sets $\vec{v}_A \dot{\cup} \vec{v}_B \dot{\cup} \vec{v}_C$
        where $\vec{v}_S = \{ v\in\vec{v} \mid \mu_\exists(v) \in S \}$ for a set $S$.
        And finally, we can split $\mu_\exists$ into three separate mappings
        $\mu_\exists^A : \vec{v}_A \to A$, $\mu_\exists^B : \vec{v}_B \to B$, and $\mu_\exists^C: \vec{v}_C \to C$,
        such that $\mu_\exists = \mu_\exists^A \circ \mu_\exists^B \circ \mu_\exists^C$.\footnote{
            For a function $f: A \to B$, we call $f^{\id}: A \cup B \to A \cup B$ with $x \mapsto f(x)$ for $x\in A$ and $x \mapsto x$ for $x\in B \setminus A$ the identity extension of $f$.
            Given two functions $f: A_1 \to B$ and $g: A_2 \to B$ where $A_1 \cap A_2 = \varnothing$,
            we write $f \circ g : A_1 \dot{\cup} A_2 \to B$ for the composition $f^{\id} \circ g^{\id}$ of their identify extensions.
        }

        As $\tilde{\psi} \omega_\forall \mu_\exists \subseteq \I_2\cup X$,
        we can write $\tilde{\psi}$ as the disjoint union of two sets $\tilde{\psi}_1$ and $\tilde{\psi}_2$,
        such that $\tilde{\psi}_1 \omega_\forall \mu_\exists \subseteq \I_2$ and $\tilde{\psi}_2 \omega_\forall \mu_\exists \subseteq X$.
        Clearly, the image of existentials from the first set is $\mu_\exists[\varE(\tilde{\psi}_1)] \subseteq \symbols(\I_2) = A \cup C$
        and the image of existentials from the second set is $\mu_\exists[\varE(\tilde{\psi}_2)] \subseteq \symbols(X) = A \cup B$.
        So we have $\tilde{\psi}_1 \omega_\forall \mu_\exists^A \mu_\exists^C \subseteq \I_2$ and $\tilde{\psi}_2 \omega_\forall \mu_\exists^A \mu_\exists^B \subseteq X$.

        Let $\check{\rho}$ be a variant with $\exists \vec{v} .\, \tilde{\psi} \in \subformulae(\check{\rho})$.
        Applying the variable mapping $(\mu_\exists^A\omega^{-1}) : \var(\R) \to \var(\R)$ yields the variant $\check{\rho} (\mu_\exists^A \omega^{-1})$.
        The pieces of the formula $\exists \vec{v}_C .\, \tilde{\psi}_1 (\mu_\exists^A\omega^{-1})$ are subformulae of $\check{\rho}(\mu_\exists^A \omega^{-1}) \in \variants(\mathcal{F})$.
        Since each of them is satisfied by $\I_2$ (with $\mu_\exists^C$) they must also be satisfied by $\I_1$,
        i.e. we must be able to find a $\mu_\exists^{C'}: \vec{v}_C \to \symbols(\I_1)$,
        such that $\tilde{\psi}_1 (\mu_\exists^A \omega^{-1}) \omega_\forall \mu_\exists^{C'} = \tilde{\psi}_1 \omega_\forall \mu_\exists^A \mu_\exists^{C'} \subseteq \I_1$.
        Thus, $\I_1 \cup X \models \exists \vec{v} .\, \tilde{\psi}\omega_\forall$ with $\mu_\exists^A \mu_\exists^B \mu_\exists^{C'}$.
\qed
    \end{proof}

    For \ref{lem:equi-label-chains:2}, we need to show that $\ell(c_1\ins_1) = \ell(c_2\ins_2)$.
    We will show this for the three components of the label separately.

    Clearly, we have $\labA(c_1\ins_1) = \labA(c_2\ins_2)$,
    because the final instances $\ins_1$ and $\ins_2$ both correspond to the same variant $\orig(\ins_1)$.
    In \ref{it:c1-alpha} and \ref{it:c2-alpha} we thus identified the same non-empty intersection between the body of this variant and the heads of the chain rules minus their bodies.

    As noted, $\labB(c_1) = \labB(c_2)$ gives $\I_1 \equiv_\mathcal{F} \I_2$.
    Since $\subformulae(\variants(\cdot))$ is idempotent
    and the set $\mathcal{F}$ used for computing $\labB$ is
    $\subformulae(\mathcal{H})$ (or with $\RpD$-closure $\subformulae(\mathcal{H}\cup\mathcal{B})$),
    and $\mathcal{H}$ and $\mathcal{B}$ are closed under $\variants(\cdot)$,
    it fulfils the requirement of the claim $\ddagger$.
    The variables used universally in $\mathcal{F}$ are from $\var(\R)$, i.e. $X$ may share only symbols $\var(\R)\omega$ with the union of $\I_1$ and $\I_2$.
    Note that $\symbols(\I_1\cup\I_2) = (\symbols(\I_1) \cup \symbols(\I_2))$.
    The set of facts $X = (\bodyP(\orig(\ins_1)) \cup \head(\orig(\ins_1))) \omega$ meets the conditions for claim $\ddagger$,
    since the instance is decoupled w.r.t. both chains $c_1$ and $c_2$.
    So we find that $\I_1 \cup X \equiv_\mathcal{F} \I_2 \cup X$ by ($\ddagger$).
    This set $X$ is exactly the set of facts that is added to the respective databases for computing $\labB(c_1)$ and $\labB(c_2)$.

    Lastly, we clearly have $\labC(c_1\ins_1) = \labC(c_2\ins_2)$, as the fact set $X$ which adds to the databases are again identical.
\qed
\end{proof}

\thmLrRegular*

\begin{claim}[$\dagger$]
    Let $w_1,w_2\in \L(\R)$ be two words with the same final letter.
    For any $v\in\Alphabet^*$, $w_1v\in \L(\R)$ iff $w_2v\in \L(\R)$.
\end{claim}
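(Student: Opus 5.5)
We argue by induction on the length of $v$, invoking Lemma~\ref{lem:equi-label-chains} once per letter; since the claim is symmetric in $w_1$ and $w_2$, it suffices to show that $w_1v\in\L(\R)$ implies $w_2v\in\L(\R)$. We strengthen the statement so that the induction carries: for all decoupled chains $c_1,c_2\in\chains(\R)$ with $\ell(c_1)=\ell(c_2)$ and every $v\in\Alphabet^*$, if there is a continuation $c_1d_1\in\chains(\R)$ with $w(c_1d_1)=w(c_1)v$, then there is $c_2d_2\in\chains(\R)$ with $w(c_2d_2)=w(c_2)v$. The claim follows from this by choosing witnessing chains $c_1,c_2$ with $w(c_1)=w_1$ and $w(c_2)=w_2$. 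Because $w_1$ and $w_2$ share their final letter and the word function ends with $\ell$ of the whole chain, we indeed get $\ell(c_1)=\ell(c_2)$.

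Two preliminary facts are needed. First, $w(c)$ determines $|c|$ and each letter records the label of a prefix. Second, if $c_1d_1$ is a chain then each prefix $c_1\ins_1$ of it is a chain, directly from the recursive Definition~\ref{def:chain}. Decoupledness also passes to prefixes, since $\stale(\cdot)$ of a prefix is a subset of $\stale(\cdot)$ of the whole sequence.

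The induction itself runs as follows. If $v=\epsilon$, take $d_2=\epsilon$. If $v=a v'$, write $d_1=\ins_1 d_1'$ with $\ell(c_1\ins_1)=a$ and $c_1\ins_1\in\chains(\R)$. Lemma~\ref{lem:equi-label-chains} then supplies $\ins_2$ with $c_2\ins_2\in\chains(\R)$ and $\ell(c_2\ins_2)=\ell(c_1\ins_1)=a$, so $w(c_2\ins_2)=w(c_2)a$. Now $c_1\ins_1$ and $c_2\ins_2$ are decoupled chains with equal labels, and the suffix $d_1'$ realises $v'$ after $c_1\ins_1$. The induction hypothesis, applied to the shorter word $v'$, gives $d_2'$ with $w(c_2\ins_2d_2')=w(c_2)av'$. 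Setting $d_2=\ins_2d_2'$ completes the step.

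The main obstacle is making sure the chain produced by Lemma~\ref{lem:equi-label-chains} is again a \emph{decoupled} chain in $\chains(\R)$, and is $\exists$-disjoint, so that the lemma can be reapplied. The lemma's proof handles this by a fresh renaming $\theta'$ of stale variables, and any clashing existential variables can be renamed in the same way. Since $\ell$ depends only on $\orig$ of the last instance and on the data up to injective renaming, such renamings leave the label unchanged. I would verify this renaming invariance once, as a small auxiliary observation, before starting the induction.
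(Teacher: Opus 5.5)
Your proof is correct and follows the paper's argument: induction on $|v|$ with one application of Lemma~\ref{lem:equi-label-chains} per letter. The paper peels off the last letter ($v=v'c$, using prefix closure and that $w_1v'$, $w_2v'$ end in the same letter), whereas you peel off the first letter and carry explicit chain witnesses; for that, $c_1$ should be taken as the length-$|w_1|$ prefix of a chain realising $w_1v$, and the decoupledness concern you raise is already settled by the lemma's conclusion $c_2\ins_2\in\chains(\R)$.
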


\begin{proof}[by induction on $|v|$]
    For the induction base, the length of $v$ is zero, i.e. $v = \epsilon$ (the empty word) and $w_2v = w_2$. By assumption $w_2 \in \L(\R)$ and so directly $w_2v \in \L(\R)$.

    The induction hypothesis (IH) is that for any word $v'\in\Alphabet^*$ and symbol $c \in \Alphabet$,
    $w_1v'\in \L(\R) \iff w_2v'\in \L(\R)$ implies $w_1v'c\in \L(\R) \iff w_2v'c\in \L(\R)$.

    For the induction step, let $v = v'c$ for $v'\in\Alphabet^*$ and $c\in\Alphabet$.
    If $v' = \epsilon$, the final letter of $w_1v'$ is the final letter of $w_1$ and the final letter of $w_2v'$ is the final letter of $w_2$, which are the same by assumption.
    Otherwise, the final letter of both $w_1v'$ and $w_2v'$ is the final letter of $v'$.
    Therefore, $w_1v'$ and $w_2v'$ have identical final letters in both cases and hence correspond to equi-labelled chains.
    Applying Lemma~\ref{lem:equi-label-chains} finishes the induction.
\qed
\end{proof}

\begin{itemize}
    \item
    For sets $S$ and $T$, an equivalence relation $\sim$ over $S$ distinguishes elements based on a predicate $p: S \to T$, if $x \sim y$ iff $p(x) = p(y)$ for all $x,y\in S$.
    \item
    For a language $\L() \subseteq \Alphabet^*$, the extensions of a word $w \in \L()$ is the set of all $v\in\Alphabet^*$ for which $wv \in \L()$.
    \item
    By Myhill-Nerode, a language is regular, if it is possible to give an equivalence relation over its words
    that has finitely many equivalence classes and distinguishes
    words based on their extensions.
\end{itemize}

\begin{proof}[of Theorem~\ref{thm:LR-regular}]
    We define an equivalence relation over words in the language $\L(\R)$ based on their final letters
    \begin{equation}
        \sim_{\L(\R)} \coloneqq \{ \langle w(c_1),w(c_2) \rangle \mid c_1,c_2\in\chains(\R), \ell(c_1)=\ell(c_2) \}\text{.}
    \end{equation}
    Clearly, $\sim_{\L(\R)}$ has at most $|\Alphabet|$-many equivalence classes, which is finitely many as the alphabet is finite.
    By ($\dagger$), two words from $\L(\R)$ that share the same label ($\hat{=}$ final letter), i.e. are $\sim_{\L(\R)}$-equivalent, have the same extensions.

    Hence, ${\sim_{\L(\R)}} \cup (\Alphabet^*\setminus\L(\R))^2$ meets the conditions of Myhill-Nerode, which makes $\L(\R)$ a regular language.
\qed
\end{proof}

\begin{claim}\label{lem:equi-label-chains-restr-negr}
    For $c_1,c_2 \in \chains(\R)$ with $\ell(c_1)=\ell(c_2)$,
    we have $\rho_{c_1} \restr \rho$ iff $\rho_{c_2} \restr \rho$ and $\rho_{c_1} \negr \rho$ iff $\rho_{c_2} \negr \rho$.
\end{claim}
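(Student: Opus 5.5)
The plan mirrors the $\posr$-part of Lemma~\ref{lem:equi-label-chains}. By symmetry it suffices to prove one direction of each equivalence. Let $\ins_n$ and $\ins_m$ be the last instances of $c_1$ and $c_2$, and $\theta_n,\theta_m$ the injective renamings with $\ins_n=\orig(\ins_n)\theta_n$ and $\ins_m=\orig(\ins_m)\theta_m$. Since $\labA(c_1)=\labA(c_2)$ forces $\orig(\ins_n)=\orig(\ins_m)$ on the relevant heads, we have $\head(\rho_{c_1})\theta_n^{-1}=\head(\rho_{c_2})\theta_m^{-1}$. We also have $(\head(\rho_{c_1})\setminus\bodyP(\rho_{c_1}))\theta_n^{-1}=(\head(\rho_{c_2})\setminus\bodyP(\rho_{c_2}))\theta_m^{-1}$. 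As reliances are invariant under injective renaming, we compare $\rho_{c_1}\theta_n^{-1}$ and $\rho_{c_2}\theta_m^{-1}$. These share their head and their ``new'' head atoms, and differ only in their bodies. As in the proof of Lemma~\ref{lem:equi-label-chains}, the canonical databases $\I_{c_1}$ and $\I_{c_2}$ satisfy $\I_{c_1}\equiv_\mathcal{F}\I_{c_2}$.

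\textbf{Negative reliance.} Suppose $\rho_{c_1}\negr\rho$, witnessed by $\I_a$, $\mu_1$ and an unsatisfied match $\tuple{\rho,\mu_2}$ over $\I_a$ that is destroyed by adding $\head(\rho_{c_1})\mu_1$. First reduce to a canonical witness. Some atom of $\head(\rho_{c_1})\mu_1\setminus\I_a$ must lie in $\bodyN(\rho)\mu_2$. Collapse the witness to the representative database $\I_a'=(\bodyP(\rho_{c_1})\mu_1\cup\bodyP(\rho)\mu_2)$, written via a variant $\check\rho\in\variants(\R)$ that identifies the variables of $\rho$ unified with head variables of $\rho_{c_1}$, in the manner of Definition~\ref{def:direct-posr}. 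The remaining variables of $\check\rho$ are named apart from the chain, so the decoupling argument of claim~($\ddagger$) applies. Transfer this witness to $c_2$ by replacing $\bodyP(\rho_{c_1})\theta_n^{-1}$ with $\bodyP(\rho_{c_2})\theta_m^{-1}$, keeping $\check\rho$'s body as the set $X$. Then check the three conditions. (i) The match $\tuple{\check\rho,\omega}$ is still a match: its positive body lies in $X$ plus shared symbols, and its negative body avoids the database. The latter is exactly what $\labC$ records, since the database is $\I_{c}\cup\bodyP(\check\rho)\omega$ up to the head atoms. (ii) It is unsatisfied: the head of $\check\rho$ decomposes into pieces in $\mathcal{F}$, and $\I_{c_1}\cup X\equiv_\mathcal{F}\I_{c_2}\cup X$ by claim~($\ddagger$). (iii) Applying $\rho_{c_2}$ adds a negated atom of $\check\rho$: this atom is in the common new-head part, which is fixed by $\labA$. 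One subtlety is that in the pre-state the head of $\rho_{c_2}$ must not already be present. Here $\rho_{c_2}$ is unsatisfied, since its head pieces in $\labA$ are not in the body. For the Datalog atoms excluded from $\labA$, the same set is excluded for both chains.

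\textbf{Restraint.} Suppose $\rho_{c_1}\restr\rho$. Proceed analogously with Definition~\ref{def:restr}. The databases are now a prior application of $\rho$ for $\mu_2$ (contributing $\bodyP(\rho)\mu_2\cup\head(\rho)\mu_2$, again put in variant form and named apart), followed by an application of $\rho_{c_1}$. The alternative match $\mu^A$ for $\tuple{\rho,\mu_2}$ maps $\head(\rho)$ into $\I_b$ using at least one new head atom of $\rho_{c_1}$. Split $\head(\rho)\mu^A$ into the part landing in the $X$-side plus new head atoms, which transfers verbatim, and the part landing in $\bodyP(\rho_{c_1})$. For the latter, use that its pieces, after substituting the shared symbols as in the proof of claim~($\ddagger$), are partial pieces of a variant in $\mathcal{F}$ satisfied by $\I_{c_1}$, hence by $\I_{c_2}$. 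This yields $\mu^{A\prime}$ over the $c_2$-database. The condition that $\mu^A$ is not an alternative match before adding $\rho_{c_1}$'s head is again a non-satisfaction of a formula in $\mathcal{F}$, obtained by the unsatisfied case of ($\ddagger$) with $\head(\rho)$'s pieces. The requirement that some null of $\head(\rho)\mu_2$ is avoided concerns only the $X$-side nulls and is unchanged.

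\textbf{Main obstacle.} The hard step is the canonicalisation: showing that any witness for $\rho_{c_1}\negr\rho$ or $\rho_{c_1}\restr\rho$ can be replaced by one whose extra facts lie in the image of a single variant, with all non-head symbols fresh. The labels only control $\mathcal{F}$-formulae, so every relevant satisfaction question must be expressible by members of $\mathcal{F}=\subformulae(\pieces(\variants(\R)))$. I would first try to prove this by mapping an arbitrary witness homomorphically onto the representative databases, as done for $\iposr$ versus $\posr$ in \cite{GIKM2022}. Negated atoms and unsatisfiedness are preserved when the map is injective outside the shared head symbols.
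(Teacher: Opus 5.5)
Your overall route matches the paper's sketch: rename by $\theta_n^{-1},\theta_m^{-1}$, reduce the reliance checks to canonical (unifier-based) witnesses as in \cite{GIKM2022}, and transfer satisfaction questions via $\equiv_{\mathcal{F}}$ and $(\ddagger)$. The genuine gap is step (i) of your $\negr$ case. There you say that $\labC$ certifies that the negated atoms of $\check{\rho}$ avoid the pre-state database ``up to the head atoms''. Those head atoms are exactly what breaks the argument. $\labC(c)$ is computed over $\I_c$, which contains $\head(\rho_c)\theta_n^{-1}\omega$. In a canonical $\negr$-witness, the variant $\check{\rho}$ has a negated atom that coincides with a new head atom of $\rho_c$. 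Hence $\check{\rho}\in\labC(c)$ for \emph{every} chain with that head. So $\labC$ says nothing about whether the \emph{other} negated atoms of $\check{\rho}$ hit $\bodyP(\rho_c)$, and that is what decides whether $\tuple{\check{\rho},\omega}$ is a match in the pre-state. $\labB$ cannot compensate either, since $\mathcal{F}$ consists only of partial head pieces and never mentions predicates that occur only in rule bodies.

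This cannot be patched within the given labels: with $\ell$ as defined, the $\negr$ half of the claim is false. Take $\R=\{\rho_a: q(x),r(x)\ruleTo p(x);\ \rho_b: q(x)\ruleTo p(x);\ \rho: s(x),\neg p(x),\neg r(x)\ruleTo u(x)\}$. Let $c_1$ and $c_2$ be the length-one chains consisting of an instance of $\rho_a$ and of $\rho_b$, respectively. Here $\var(\R)=\{x\}$ and $\mathcal{F}=\{p(x),u(x)\}$. One checks $\ell(c_1)=\ell(c_2)=\tuple{\{p(x)\},\{p(x)\},\{\rho\}}$; in particular $\rho\in\labC(c_i)$ for both, because $p(x)\omega\in\I_{c_i}$. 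Yet $\rho_{c_2}\negr\rho$ holds, witnessed by $\I_a=\{q(a),s(a)\}$. By contrast, $\rho_{c_1}\negr\rho$ fails: the only added fact is $p(a)$, so a destroyed match must map $x\mapsto a$, but $r(a)\in\I_a$ already blocks it. Prefixing both chains with an instance of $t(x)\ruleTo q(x)$ gives the same phenomenon for length-two chains.

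The paper's own sketch has the same hole. It only argues that head-satisfaction checks agree on $\equiv_{\mathcal{F}}$-equivalent databases and never treats the negative-body side condition of $\negr$. The claim therefore needs a richer label. One option is to additionally record which variants' negated atoms meet $(\bodyP(\rho_c)\theta_n^{-1}\cup\bodyP(\check{\rho}))\omega$, i.e.\ the pre-state without the new head atoms.

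Two smaller points:
\begin{itemize}
\item $\labA(c_1)=\labA(c_2)$ fixes only the new head atoms, not the whole heads, since Datalog head atoms already in the body are not recorded.
\item In (iii), ``not literally contained in the body'' is weaker than ``not satisfied''. Unsatisfiedness of $\rho_{c_2}$ needs the latter for existential pieces.
\end{itemize}
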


\begin{proof}
    To compute $\rho_{c_i}\restr\rho$ (see Definition~\ref{def:restr}) and $\rho_{c_i}\negr\rho$ (see Definition~\ref{def:negr}) for $i\in\{1,2\}$,
    it suffices to examine unifiable subsets of $\head(\rho)$ or $\bodyN(\rho)$ with $\head(\rho_{c_i})$
    (c.f. efficient algorithms of \cite{GIKM2022}).
    Remaining checks concern the satisfaction of the heads of the involved rules on representative databases.
    Let $\ins_{n_i}$ be the final instances in $c_i$, with unique $\theta_{n_i}$, such that $\ins_{n_i} = \orig(\ins_{n_i})\theta_{n_i}$.
    Again, we can equivalently test for $\rho_{c_i}\theta_{n_i}^{-1}\restr\rho\theta_{n_i}^{-1}$ and
    $\rho_{c_i}\theta_{n_i}^{-1}\negr\rho\theta_{n_i}^{-1}$ (as $\theta_{n_i}$ is revertible and $\theta_{n_i}^{-1}$ is injective).
    These will involve $(\bodyP(\rho_{c_i}) \cup \head(\rho_{c_i}))\theta_{n_i}^{-1}\omega$,
    and we stated before that these are $\equiv_\mathcal{F}$-equivalent due to $\labB(c_1)=\labB(c_2)$
    (see proof of Lemma~\ref{lem:equi-label-chains}~\ref{lem:equi-label-chains:1}) and satisfy the same rule heads.
\qed
\end{proof}
 
\clearpage
\section{Details on Introductory Example}
\def\t(#1){\allowbreak t(#1)}

\vspace{1em}
\noindent
We will now revisit the introductory examples (cf. Section~\ref{sec:intro}, Problems~1 and 2).
Here, we refer to the N3 rule in line $i$ by $\rho_i$.
Figure~\ref{fig:intro_rel-graphs} depicts a graphical representation of the reliance relations between these rules.
Evidently, this yields an acyclic graph for \alglineref{rule_father}--\alglineref{rule_different_fathers} on the right,
which means that $\R_{\text{P2}} = \{ \rho_8,\rho_9,\rho_{10},\rho_{11} \}$ is \emph{fully stratified} (and thus also \emph{chain-stratified}).
The graph on the left, however, has cycles through negative reliances $\rho_1\negr\rho_2$ and $\rho_4\negr\rho_2$,
i.e. $\R_{\text{P1}} = \{ \rho_1,\rho_2,\rho_3,\rho_4 \}$ and $\R'_{\text{P1}} = \R_{\text{P1}} \cup \{ \rho_5,\rho_6,\rho_7 \}$ are not \emph{fully stratified}.
Below, we will examine $\R_{\text{P1}}$ and $\R'_{\text{P1}}$ in detail, to see that
--- while neither is \emph{chain-stratified} either (see Example~\ref{ex_problem1_solved}) ---
$\R'_{\text{P1}}$ is \emph{chain-stratified under constraints} (as defined in Section~\ref{sec_refinements}).

\begin{figure}
    \centering
    \begin{tikzpicture}[>={Latex[length=2mm,width=1.5mm]}, baseline=(current bounding box.north)]
        \node[draw,circle] (1) at (0,0) {$\rho_1$};
        \node[draw,circle] (2) at (0,-2) {$\rho_2$};
        \node[draw,circle] (3) at (2,0) {$\rho_3$};
        \node[draw,circle] (4) at (2,-2) {$\rho_4$};
        \node[draw,circle] (5) at (4,0) {$\rho_5$};
        \node[draw,circle] (6) at (4,-1) {$\rho_6$};
        \node[draw,circle] (7) at (4,-2) {$\rho_7$};

        \draw[->] (1)--node[left]{$\negr$}(2);
        \draw[->] (4)edge[bend left]node[below]{$\negr$}(2);

        \draw[dashed,->] (3)edge[loop left](3);
        \draw[dashed,<->] (3)--(4);
        \draw[dashed,->] (3)--(5);
        \draw[dashed,->] (3)--(6);
        \draw[dashed,->] (3)--(7);
        \draw[dashed,<->] (4)--(1);
        \draw[dashed,<->] (4)--(2);
        \draw[dashed,<->] (4)edge[loop below](4);
        \draw[dashed,->] (4)--(5);
        \draw[dashed,->] (4)--(6);
        \draw[dashed,->] (4)--(7);
    \end{tikzpicture}\hspace{5em}
    \begin{tikzpicture}[>={Latex[length=2mm,width=1.5mm]}, baseline=(current bounding box.north)]
        \node[draw,circle] (8) at (0,0) {$\rho_8$};
        \node[draw,circle] (9) at (0,-2) {$\rho_9$};
        \node[draw,circle] (10) at (2,0) {$\rho_{10}$};
        \node[draw,circle] (11) at (2,-2) {$\rho_{11}$};

        \draw[->] (10)--node[right]{$\negr$}(11);
        \draw[->] (9)--node[left]{$\restr$}(8);

        \draw[dashed,->] (8)--(10);
        \draw[dashed,->] (8)--(11);
    \end{tikzpicture}
    \caption{Reliance graph of \alglineref{rule_ex_intro_student}--\alglineref{rule_constraint_teacher_type} (left) and \alglineref{rule_father}--\alglineref{rule_different_fathers} (right); dashed lines are $\posr$}\label{fig:intro_rel-graphs}
\end{figure}

\noindent
Let us first rewrite L1--L7 in more compact syntax and using fewer variables.
\begin{equation*}
    \begin{array}{cc}
        \begin{aligned}
            \rho_1&:\quad \t(?x,pa,?y) \ruleTo \t(?y,ty,St) \\
            \rho_2&:\quad \t(?x,te,?y), \neg \t(?y,ty,St) \ruleTo \t(?x,ex,?y) \\
            \rho_3&:\quad \t(?p,spo,?y), \t(?y,spo,?q) \ruleTo \t(?p,spo,?q) \\
            \rho_4&:\quad \t(?p,spo,?q), \t(?x,?p,?y) \ruleTo \t(?x,?q,?y)
        \end{aligned}
        &
        \begin{aligned}
            \rho_5&:\quad \t(ex,spo,pa) \ruleTo \bot \\
            \rho_6&:\quad \t(ex,spo,spo) \ruleTo \bot \\
            \rho_7&:\quad \t(ex,spo,ty) \ruleTo \bot
        \end{aligned}
    \end{array}
\end{equation*}
\noindent
There are cycles through negative reliances, e.g. $\rho_2 \posr \rho_4 \posr \rho_1 \negr \rho_2$ or just $\rho_2 \posr \rho_4 \negr \rho_2$.
All $\negr$-edges end in $\rho_2$, so we only need to examine chains starting with instances of $\rho_2$.
As explained earlier, the ruleset is indeed not \emph{chain stratified},
but it is \emph{chain stratified under constraints} due to the constraints $\rho_5$, $\rho_6$ and $\rho_7$.
Algorithm~\ref{alg:chains} will consider the following chains in the given order:

\vspace{1em}
{\raggedright \fontsize{8pt}{10pt}\selectfont \begin{description}
    \item[$\rho_2,\rho_4\>:\>$]
    $\t(?x,te,?y), \t(?x,ex,?y), \t(ex,spo,?q_1) \ruleTo \t(?x,?q_1,?y)$
    \begin{itemize}
        \item[$\scriptstyle\bullet$] chain accepted
        \item[$\scriptstyle\bullet$] $\negr\rho_2$ rejected because it would require $?q_1 \mapsto ty$, $?y\mapsto St$, violating constraint $\rho_7$
    \end{itemize}
    \item[$\rho_2,\rho_4,\rho_1\>:\>$]
    $\t(?x,te,?y), \t(?x,ex,?y), \t(ex,spo,pa), \t(?x,pa,?y) \ruleTo \t(?y,ty,St)$
    \begin{itemize}
        \item[$\scriptstyle\bullet$] chain rejected because of constraint $\rho_5$
    \end{itemize}
    \item[$\rho_2,\rho_4,\rho_2\>:\>$]
    $\t(?x,te,?y), \t(?x,ex,?y), \t(ex,spo,te), \t(?x,te,?y) \ruleTo \t(?x,ex,?y)$
    \begin{itemize}
        \item[$\scriptstyle\bullet$] chain rejected because the head is already entailed
    \end{itemize}
    \item[$\rho_2,\rho_4,\rho_3\>:\>$]
    $\t(?x,te,?y), \t(?x,ex,?y), \t(ex,spo,spo), \t(?x,spo,?y), \t(?y,spo,?z) \ruleTo \t(?x,spo,?z)$
    \begin{itemize}
        \item[$\scriptstyle\bullet$] chain rejected because of constraint $\rho_6$
    \end{itemize}
    \item[$\rho_2,\rho_4,\rho_4\>:\>$] \phantom{.}
    \begin{description}
        \item[{$\left[?p/?q_1,?q/?q_2\right]\>:\>$}] $\t(?x,te,?y), \t(?x,ex,?y), \t(ex,spo,?q_1), \t(?x,?q_1,?y), \t(?q_1,spo,?q_2) \ruleTo \t(?x,?q_2,?y)$
        \begin{itemize}
            \item[$\scriptstyle\bullet$] chain accepted
            \item[$\scriptstyle\bullet$] $\negr\rho_2$ rejected because it would require $?q_2 \mapsto ty$, $?y\mapsto St$, such that $\rho_3$ derives $\t(ex,spo,ty)$, violating constraint $\rho_7$
        \end{itemize}
        \item[{$\left[?q_1/spo,?p/?x,?q/?y\right]\>:\>$}] $\t(?x,te,?y), \t(?x,ex,?y), \t(ex,spo,spo), \t(?x,spo,?y), \t(?x_1,?x,?y_1) \ruleTo \t(?x_1,?y,?y_1)$
        \begin{itemize}
            \item[$\scriptstyle\bullet$] chain rejected because of constraint $\rho_6$
        \end{itemize}
    \end{description}
    \item[$\rho_2,\rho_4,\rho_4,\rho_1\>:\>$]
    $\t(?x,te,?y), \t(?x,ex,?y), \t(ex,spo,?q_1), \t(?x,?q_1,?y), \t(?q_1,spo,pa), \t(?x,pa,?y) \ruleTo \t(?y,ty,St)$
    \begin{itemize}
        \item[$\scriptstyle\bullet$] chain rejected because $\rho_3$ derives $\t(ex,spo,pa)$, violating constraint $\rho_5$
    \end{itemize}
    \item[$\rho_2,\rho_4,\rho_4,\rho_2\>:\>$]
    $\t(?x,te,?y), \t(?x,ex,?y), \t(ex,spo,?q_1), \t(?x,?q_1,?y), \t(?q_1,spo,te), \t(?x,te,?y) \ruleTo \t(?x,ex,?y)$
    \begin{itemize}
        \item[$\scriptstyle\bullet$] chain rejected because the head is already entailed
    \end{itemize}
    \item[$\rho_2,\rho_4,\rho_4,\rho_3\>:\>$]
    $\t(?x,te,?y), \t(?x,ex,?y), \t(ex,spo,?q_1), \t(?x,?q_1,?y), \t(?q_1,spo,spo), \t(?x,spo,?y), \t(?y,spo,?z) \ruleTo \t(?x,spo,?z)$
    \begin{itemize}
        \item[$\scriptstyle\bullet$] chain rejected because $\rho_3$ derives $\t(ex,spo,spo)$, violating constraint $\rho_6$
    \end{itemize}
    \item[$\rho_2,\rho_4,\rho_4,\rho_4\>:\>$] \phantom{.}
    \begin{description}
        \item[{$\left[?p/?q_2,?q/?q_3\right]\>:\>$}] $\t(?x,te,?y), \t(?x,ex,?y), \t(ex,spo,?q_1), \t(?x,?q_1,?y), \t(?q_1,spo,?q_2), \t(?x,?q_2,?y), \t(?q_2,spo,?q_3) \ruleTo \t(?x,?q_3,?y)$
        \begin{itemize}
            \item[$\scriptstyle\bullet$] chain accepted (...but continuing like this will repeat label)
            \item[$\scriptstyle\bullet$] $\negr\rho_2$ rejected because it would require $?q_3 \mapsto ty$, $?y\mapsto St$, such that two $\rho_3$ applications derive $\t(ex,spo,ty)$, violating constraint $\rho_7$
        \end{itemize}
        \item[{$\left[?q_2/spo,?p/?x,?q/?y\right]\>:\>$}] $\t(?x,te,?y), \t(?x,ex,?y), \t(ex,spo,?q_1), \t(?x,?q_1,?y), \t(?q_1,spo,spo), \t(?x,spo,?y), \t(?x_1,?x,?y_1) \ruleTo \t(?x_1,?y,?y_1)$
        \begin{itemize}
            \item[$\scriptstyle\bullet$] chain rejected because $\rho_3$ derives $\t(ex,spo,spo)$, violating constraint $\rho_6$
        \end{itemize}
    \end{description}
\end{description}
}

None of these passed the conditions in A\ref{line_new_restr} or A\ref{line_new_negr} of the algorithm, so no new edges were added to $\bm{\prec}$.
Therefore, A\ref{line_return_false} will not find cycles and A\ref{line_return_true} is reached, concluding that $\R'_{\text{P1}}$ indeed is \emph{chain-stratified under constraints}.

\fi

\end{document}